\theoremstyle{definition}
\newtheorem*{def*}{Definition}
\newtheorem{observation}{Observation}
\newtheorem{theorem}{Theorem}[section]
\theoremstyle{definition}
\newtheorem{lemma}{Lemma}
\theoremstyle{definition}
\newtheorem*{prf*}{Proof}
\theoremstyle{definition}
\newtheorem*{prfthm*}{Proof of Theorem}
\newcommand{\ajay}[1]
{{\color{red}\underline{\textsf{Ajay:}}} {\color{red} \emph{#1}}}
\newcolumntype{P}[1]{>{\centering\arraybackslash}p{#1}}
\newcolumntype{M}[1]{>{\centering\arraybackslash}m{#1}}
\newcommand{\A}{\mathcal{A}}
\newcommand{\sync}{$\mathcal {SYNC}$}
\newcommand{\async}{$\mathcal {ASYNC}$}
\newcommand{\ie}{{\em i.e.},~}
\newcommand{\aset}{A}
\newcommand{\amax}{a_{\mathrm{max}}}
\newcommand{\lctn}{h}
\newcommand{\id}{\mathtt{ID}}
\newcommand{\pin}{\mathtt{pin}}
\newcommand{\pout}{\mathtt{pout}}
\newcommand{\settled}{\mathtt{settled}}
\newcommand{\parent}{parent}
\newcommand{\checked}{\mathtt{checked}}
\newcommand{\nxt}{next}
\newcommand{\rootsync}{\mathbf{Rooted\-Sync\-Disp}}
\newcommand{\rootasync}{\mathbf{RootedAs\-ync\-Disp}}
\newcommand{\seeker}{\texttt{seeker }}
\begin{document}

\title{Dispersion is (Almost) Optimal under (A)synchrony}

\author{Ajay D. Kshemkalyani\thanks{University of Illinois Chicago, USA, {\em ajay@uic.edu}}, Manish Kumar\thanks{IIT Madras, India, {\em manishsky27@gmail.com}}, Anisur Rahaman Molla\thanks{Indian Statistical Institute, Kolkata, {\em anisurpm@gmail.com}}, and Gokarna Sharma\thanks{Kent State University, USA, {\em gsharma2@kent.edu}}}

\maketitle

\sloppy

\begin{abstract}
The dispersion problem has received much attention recently in the distributed computing literature.  
In this problem, $k\leq n$
 agents placed initially arbitrarily on the nodes of an $n$-node, $m$-edge anonymous graph of maximum degree $\Delta$ have to reposition autonomously to reach a configuration in which each agent is on a distinct node of the graph. Dispersion is interesting as well as important due to its connections to many fundamental coordination problems by mobile agents on graphs, such as exploration, scattering, load balancing, relocation of self-driven electric cars (robots) to recharge stations (nodes), etc.  The objective has been to provide a solution that optimizes simultaneously time and memory complexities. There exist graphs for which the lower bound on time complexity is $\Omega(k)$. Memory complexity is $\Omega(\log k)$ per agent independent of graph topology. The state-of-the-art algorithms have  (i) time complexity $O(k\log^2k)$ and memory complexity $O(\log(k+\Delta))$ under the synchronous setting [DISC'24] and (ii) time complexity $O(\min\{m,k\Delta\})$ and memory complexity $O(\log(k+\Delta))$ under the asynchronous setting [OPODIS'21].     
In this paper, we improve substantially on this state-of-the-art.  Under the synchronous setting as in [DISC'24], we present the first optimal $O(k)$ time algorithm keeping memory complexity $O(\log (k+\Delta))$. 
Under the asynchronous setting as in [OPODIS'21], we present the first algorithm with time complexity $O(k\log k)$ keeping memory complexity $O(\log (k+\Delta))$, which is time-optimal within an $O(\log k)$ factor despite asynchrony. Both results were obtained through novel techniques to quickly find empty nodes to settle agents, which may be of independent interest. 
\end{abstract}

\textbf{Keywords: }{Mobile agents, local communication, dispersion, synchronous, asynchronous, distributed algorithms, time and memory complexity}

\section{Introduction}\label{sec: introduciton}
We consider the dispersion problem which has been extensively studied in the distributed computing literature recently.
The {\em dispersion} problem asks $k\leq n$ mobile
 agents placed initially arbitrarily on the nodes of an $n$-node anonymous graph to reposition autonomously to reach a configuration in which each agent is on a distinct node of the graph. This problem is interesting as well as important due to its connections to many fundamental agent coordination problems, such as exploration, scattering, load balancing, relocating self-driven electric cars (agents) to recharge stations (nodes), etc.  The objective has been to provide a solution that optimizes simultaneously time and memory complexities. 
 Time complexity is measured as the time duration to achieve dispersion starting from any initial configuration. Memory complexity is measured as the number of bits stored in the persistent memory at each agent (graph nodes are memory-less and cannot store any information).
 There exist graphs for which any dispersion algorithm exhibits time complexity $\Omega(k)$. Suppose two nodes of a graph are at a distance of $\Omega(k)$, then to disperse from one node to the furthest node takes $\Omega(k)$ time (e.g., line graph). Memory complexity is  $\Omega(\log k)$ for any solution since at least one agent needs $\Omega(\log k)$ bits to store its unique ID.

 \begin{table*}[!t]
\centering
\footnotesize
\begin{tabular}
{P{2cm}P{2cm}P{5.0cm}P{3.5cm}}
\toprule

{\bf Model} &   {\bf Paper} & {\bf Time Complexity} & {\bf Memory Complexity}\\

\toprule

Any & - &  $\Omega(k)$ & $\Omega(\log k)$ \\
\toprule
&& {\bf Rooted Initial Configurations} & \\
\toprule
& \cite{sudo24} & $O(k)$&$O(\Delta + \log k)$\\
{\sync}   & \cite{sudo24} & $O(k \log k) 
 $&$O(\log (k+\Delta))$\\
&  Theorem~\ref{thm: root_sym} & $O(k)$ & $O(\log (k+\Delta))$\\
\toprule
&  \cite{KshemkalyaniMS22} & $O(D\Delta (D+\Delta))$&$O(D + \Delta \log k)$\\
{\async}  & \cite{KshemkalyaniS21-OPODIS} & $O(\min\{m,k\Delta\})$&$O(\log (k+\Delta))$\\
&  Theorem~\ref{theorem:rootal}  & $O(k \log k) 
 $&$O(\log (k+\Delta))$\\

\toprule
&& {\bf General Initial Configurations} & \\
\toprule
&  \cite{sudo24} & $O(k \log^2 k) 
 $& $O(\log (k+\Delta))$\\
{\sync} & Theorem~\ref{theorem:gensync} & $O(k)$& $O(\log (k+\Delta))$\\

\toprule
&  \cite{KshemkalyaniS21-OPODIS} & $O(\min\{m,k\Delta\})$&$O(\log (k+\Delta))$\\
{\async}  & Theorem~\ref{theorem:genasync} & $O(k \log k) 
 $&$O(\log (k+\Delta))$\\

\bottomrule
\end{tabular}
\caption{An illustration of the proposed dispersion results and comparison with the state-of-the-art.} 
\label{tbl: Comparative_analysis-rooted}
\vspace{-6mm}
\end{table*}

There are two state-of-the-art algorithms, one in the synchronous setting due to Sudo, Shibata, Nakamura, Kim, and Masuzawa \cite{sudo24} appeared recently in [DISC'24] and another in the asynchoronous setting due to Kshemkalyani and Sharma \cite{KshemkalyaniS21-OPODIS} appeared in [OPODIS'21]. In the {\em synchronous} setting ({\sync}), all agents 
perform their 
operations simultaneously in synchronized rounds (or steps), and hence time complexity (of the algorithm) is measured in rounds. However, in the {\em asynchronous} setting ({\async}), agents become active at arbitrary times and perform their operations in arbitrary duration, and hence time complexity is measured in epochs -- an {\em epoch} represents the time interval in which each agent becomes active at least once. In {\sync}, an epoch is a round. 
In particular,
Sudo, Shibata, Nakamura, Kim, and Masuzawa \cite{sudo24} [DISC'24] presented an algorithm with time complexity $O(k\log^2k)$ rounds and memory complexity $O(\log (k+\Delta))$ bits in {\sync}. 
Kshemkalyani and Sharma \cite{KshemkalyaniS21-OPODIS} [OPODIS'21] presented an algorithm with time complexity $O(\min\{m,k\Delta\})$ epochs and memory complexity $O(\log (k+\Delta))$ bits in {\async}. Here $m$ and $\Delta$ are the number of edges and the maximum degree of the graph respectively.

\noindent{\bf Contributions.}
In this paper, we improve substantially on this state-of-the-art through two results, one in {\sync} as in \cite{sudo24} and another in {\async} as in \cite{KshemkalyaniS21-OPODIS}.  
Table \ref{tbl: Comparative_analysis-rooted} compares and contrasts our results with the state-of-the-art.  We say initial configuration {\em rooted}, if all $k$ robots are initially at the same node, {\em general} otherwise.  
\begin{itemize}
    \item Under {\sync} as in Sudo, Shibata, Nakamura, Kim, and Masuzawa \cite{sudo24} [DISC'24], we present the first time optimal $O(k)$-round algorithm with memory complexity $O(\log (k+\Delta))$ bits, which is an $O(\log^2 k)$ factor improvement on \cite{sudo24}.
    \item Under {\async} as in Kshemkalyani and Sharma \cite{KshemkalyaniS21-OPODIS} [OPODIS'21], we present the first algorithm with time complexity $O(k\log k)$ epochs  and memory complexity $O(\log (k+\Delta))$ bits. This time complexity is optimal within an $O(\log k)$ factor and a substantial improvement over  $O(\min\{m,k\Delta\})$ epochs in \cite{KshemkalyaniS21-OPODIS}.
\end{itemize}

The above results are possible from the two techniques, one for {\sync} and another for {\async}, we develop in this paper, which we describe in a nutshell, in Section \ref{section:techniques} (Overview of Challenges and Techniques). Our developed technique for {\sync} finds a fully unsettled empty neighbor (if exists) in $O(1)$ rounds and handles meetings of two DFSs with $O(1)$ overhead to achieve $O(k)$-round algorithm with $O(\log (k+\Delta))$ memory; a {\em fully unsettled} node is the one that has no agent positioned on it yet. Our developed technique for {\async} finds a fully unsettled empty neighbor (if exists) in $O(\log k)$ epochs and handles meetings of two DFSs with $O(1)$ overhead to achieve $O(k\log k)$-epoch algorithm with $O(\log (k+\Delta))$ memory.

\noindent{\bf Paper organization.} We discuss model and some preliminaries in Section \ref{sec:model}. In Section~\ref{sec: relatedwork}, we discuss the related work. We overview challenges and techniques developed to achieve the claimed bounds  
in Section \ref{section:techniques}.  
We build some techniques crucial for our {\sync} and {\async} algorithms in Section \ref{section:emptynodes}. 
We discuss the {\sync} and {\async} rooted dispersion algorithms in Sections \ref{section:RootedSync} and \ref{section:RootedAsync}, respectively. We then discuss algorithms handling general initial configurations in Section \ref{section:general_dispersion}.  Finally, we conclude in Section \ref{sec: conclusion}.

\section{Model, Notations, and Preliminaries}\label{sec:model}
\noindent{\bf Graph.} Let $G = (V, E)$ be a simple, undirected, and connected arbitrary graph with $n = |V|$ nodes, $m = |E|$ edges, and diameter $D$.   We denote the set of \emph{neighbors} of a node $v \in V$ by $N(v) = \{u \in V \mid \{u, v\} \in E\}$. We denote the {\em degree} of  $v\in V$ by $\delta_v = |N(v)|$. We have that $\Delta = \max_{v \in V} \delta_v$, i.e., $\Delta$ represents the {\em maximum degree} of $G$. The graph $G$ is {\em anonymous}, meaning that the nodes in $V$ do not have identifiers. However, $G$ is {\em port-labeled} meaning that the edges incident to a node $v$ are locally labeled at $v$ such that an agent located at $v$ can uniquely distinguish these edges. Specifically, these edges are assigned distinct labels $1, 2, \dots, \delta_v$ at node $v$. We refer to these local labels as \emph{port numbers}. 
The port number at $v$ for an edge $\{v, u\}$ is denoted by $p_v(u)$. Since each edge $\{v, u\}$ has two endpoints, it is assigned two labels, $p_v(u)$ and $p_u(v)$, at nodes $v$ and $u$, respectively. Note that these labels are independent, meaning $p_u(v) \neq p_v(u)$ may hold, and hence there is no correlation between port numbers assigned for any two nodes $u,v\in G$. For any $v \in V$, we define $N(v, i)$ as the node $u \in N(v)$ such that $p_v(u) = i$. For simplicity, we define $N(v, \bot) = v$ for all $v \in V$. Each node $v\in V$ is memory-less and cannot retain any information.

\noindent{\bf Agents.} We consider the set $\A=\{a_1,\ldots,a_k\}$ of $k\leq n$ agents positioned initially on the nodes of $G$. 
The agents have identifiers, i.e., each agent $a_i$ has a positive integer as its identifier $a_i.\id$. The identifiers are unique meaning that $a_i.\id \neq a_j.\id$ for any $a_i, a_j \in \A, j\neq i$. 
We assume that $a_i.\id\in [1,k^{O(1)}]$.

In {\async}, an agent $a_i$ might see another agent $a_j$ on an edge of $G$ but when an epoch (defined formally later) finishes, it is guaranteed that no agent is on an edge, i.e., all agents are on the nodes of $G$. In {\sync}, no agent sees another agent at any edge of $G$.
Each agent $a_i$ positioned at node $v$ has a read-only variable $a_i.\pin \in \{1, 2, \dots, \delta_v\} \cup \{\bot\}$. The execution starts at time $t=0$. At $t=0$, $a_i.\pin = \bot$ holds. For any time $t \geq 1$, if $a_i$ moves from $v$ to $u$, then $a_i.\pin$ is set to $p_u(v)$ (the port of $u$ incoming from $v$) at the beginning of round $t$. 
We refer to the value of $a_i.\pin$ as the \emph{incoming port} of $a_i$.
The values of all variables in the agent $a_i$, including its identifier $a_i.\id$ and the special variables $a_i.\pin$ and $a_i.\pout$, constitute the state of $a_i$. 
The $a_i.\pout$ variable denotes the {\em outgoing port} which, for agent $a_i$ current at node $v$, is the port used by $a_i$ to exit $v$.   
We use notations $\alpha(a_i)$ and $\alpha(w)$ to denote node where agent $a_i$ resides and agent that resides at node $w$, respectively, i.e., for agent $a_i$ currently residing at node $w$, $\alpha(a_i)=w$ and $\alpha(w)=a_i$.

\noindent{\bf Communication model.} We consider the local communication model in which two agents $a_i,a_j$ can exchange information at any time if they both are at the same node, otherwise, they could not do so. Another communication model is global \cite{KshemkalyaniMS22} in which $a_i,a_j$ can exchange information at any time, irrespective of their location (same node or different).

\noindent{\bf Time cycle.} At any time, an agent $a_i$
 could be active or inactive. When 
 becomes active, $a_i$ performs the ``Communicate-Compute-Move'' (CCM) cycle as follows.
 \begin{itemize}
     \item {\bf Communicate:} Agent $a_i$ positioned at node $u$ can observe the memory of another agent $a_j$ positioned at node $u$. Agent $a_i$ can also observe its own memory.
    \item {\bf Compute:}  Agent $a_i$ may perform an arbitrary computation using the information observed during the ``communicate'' portion of that cycle. This includes the determination of a port to use to exit $u$ and the information to store in the agent $a_j$ that is at $u$.
    \item {\bf Move:} At the end of the cycle,  $a_i$ writes new information (if any) in the memory of an agent $a_j$ at $u$, and exits $u$ using the computed port and reaches a neighbor of $u$.
 \end{itemize}

\noindent{\bf Round, epoch, time, and memory complexity.}
In {\sync}, every agent performs each CCM cycle in synchrony becoming active in every CCM cycle. Therefore, time complexity is measured in rounds (a cycle is a round). 
In {\async}, agents have no common notion of time. No limit exists on the number and frequency of CCM cycles in which an agent can be active except that every agent is active infinitely often. 
We use the idea of an epoch to measure time complexity. An {\em epoch} is the smallest time interval within which each robot is active at least once \cite{Cord-LandwehrDFHKKKKMHRSWWW11}. Formally, let $t_0$ denote the start time. Epoch $i\geq 1$ is the time interval from $t_i-1$ to $t_i$ where $t_i$ is the first time instant after $t_i-1$ when each agent has finished at least one complete LCM cycle. Therefore, for {\sync}, a round is an epoch. 
We will use the term ``time'' generically to mean rounds for {\sync} and epochs for {\async}.
Memory complexity is the number of bits stored at any agent over one CCM cycle to the next. The temporary memory needed during the Compute phase is considered free.

\noindent{\bf Initial configurations.} Since agents are assumed to be positioned arbitrarily initially, there may be the case that all $k\leq n$ agents are at the same node, which we denote as {\em rooted initial configuration}. Any initial configuration that is not rooted is denoted as {\em general}. In any general initial configuration, agents are on at least two nodes. A special case of general configuration is a {\em dispersion configuration} in which $k$ agents are on $k$ different nodes.

\section{Related Work}\label{sec: relatedwork}
The state-of-the-art results on dispersion are listed in Table \ref{tbl: Comparative_analysis-rooted} as well as the established results. 
The state-of-the-art in {\sync} is the result due to \cite{sudo24} appeared in [DISC'24], whereas the result due to \cite{KshemkalyaniS21-OPODIS} appeared in [OPODIS'21] is the state-of-the-art in {\async}. 
These two results solve the problem starting from any initial configuration (rooted or general). 

For the rooted initial configurations, some better bounds were obtained.  In {\sync}, there are two results \cite{sudo24} [DISC'24]: (i) time complexity optimal $O(k)$ with memory complexity $O(\Delta+\log k)$ and (ii) time complexity $O(k\log k)$ with memory complexity $O(\log (k+\Delta))$. Our {\sync} result achieves time optimality keeping memory $O(\log (k+\Delta))$. In {\async}, there is one result \cite{KshemkalyaniMS22} [JPDC'22]: time complexity $O(D\Delta(D+\Delta))$ with memory complexity $O(D+\Delta\log k)$. Our {\async} result makes time optimal within an $O(\log k)$ factor keeping memory $O(\log (k+\Delta))$. 

The state-of-the-art results discussed in the previous two paragraphs were the culmination of the long series of work \cite{Augustine:2018,BKM24, BKM25,ChandKMS23,DasCALDAM21,GorainSSS22,KaurD2D23,ItalianoPS22,KshemkalyaniICDCN19,KshemkalyaniALGOSENSORS19,KshemkalyaniWALCOM20,KshemkalyaniICDCS20,KshemkalyaniS21-OPODIS,tamc19,SaxenaK025,Saxena025,ShintakuSKM20}. 
The majority of works considered the faulty-free case, except \cite{BKM25,MollaIPDPS21,MollaMM21} which  considered {\em Byzantine} faults (where agents might act arbitrarily)  and \cite{BKM24,BKM25,ChandKMS23,Pattanayak-WDALFR20} considered {\em crash} faults (where some agents might stop working permanently at any time). Moreover, most of the works considered the {\em local communication} model where only agents co-located at a node can communicate at any epoch/round, except Kshemkalyani {\it et al.} \cite{KshemkalyaniMS22} who considered the {\em global communication} model in which agents can communicate irrespective of their positions on the graph at any epoch/round. 
Furthermore, most of the works considered static graphs, except  \cite{KshemkalyaniICDCS20,SaxenaK025,Saxena025} which considered different models of dynamic graphs. Moreover, most of the works presented deterministic algorithms except \cite{DasCALDAM21,tamc19} where randomness is used to minimize the memory complexity. Dispersion is considered with restricted local communication among co-located robots in \cite{GorainSSS22}. Reaching restricted final configurations, such as no two adjacent nodes can contain agents, are considered in \cite{KaurD2D23}. 
The majority of papers considered arbitrary graphs except for some papers where special graphs were considered, for example, grid \cite{BKM24, BKM25, KshemkalyaniWALCOM20}, ring \cite{Augustine:2018,MollaMM21}, and trees \cite{Augustine:2018,KshemkalyaniMS22}.  Finally, results in \cite{ChandKMS23,KshemkalyaniALGOSENSORS19} were established assuming parameters $n,m,\Delta,k$ known to the agents a priori.   

Additionally, dispersion is closely related to graph exploration which has been quite extensively studied 
for specific as well as arbitrary graphs, 
e.g., \cite{Bampas:2009,Cohen:2008,Dereniowski:2015,Fraigniaud:2005,MencPU17}. 
Another related problem is the scattering which has been studied for rings \cite{ElorB11,Shibata:2016} and grids \cite{Barriere2009,Das16,Poudel18}. 
Dispersion is also related to the load balancing problem, where a given
load has to be (re-)distributed among several nodes,
e.g., \cite{Cybenko:1989,Subramanian:1994}.
%
Very recently, a dispersion solution has been used in \cite{KshemkalyaniKMS24} electing a leader among agents as well as solving graph-level tasks such as computing minimum spanning tree (MST), maximal independent set (MIS), and minimal dominating set (MDS) problems.

\section{Overview of Challenges and Techniques}
\label{section:techniques}
\subsection{Challenges} The literature on dispersion relied mostly on search techniques,  depth-first-search (DFS) and breadth-first-search (BFS). DFS has been preferred over BFS since it facilitated optimizing memory complexity along with time complexity. 
DFS time complexity depends on how quickly a fully unsettled empty neighbor node can be found from the current node to settle an agent. Recall that a fully unsettled node is the one that has no agent positioned on it yet.  
Prior to the result of Sudo {\it et al.} \cite{sudo24} appeared in [DISC'24],  the fully unsettled neighbor search needed $O(\Delta)$ time. Additionally, $k-1$ different nodes need to be visited to settle $k$ agents (one per node). Therefore, the dispersion was solved in $O(k\Delta)$ time; precisely $O(\min\{m,k\Delta\})$ time since DFS finishes after examining all the edges in the graph. Kshemkalyani and Sharma \cite{KshemkalyaniS21-OPODIS} proved that this bound applies to both {\sync} and {\async} and is the state-of-the-art bound in {\async}. 
Sudo {\it et al.} \cite{sudo24} developed a clever technique in {\sync} to find a fully unsettled empty neighbor node in $O(\log k)$ time. Using this technique, they achieved dispersion for rooted initial configurations in $O(k\log k)$ time, improving significantly on $O(\min\{m,k\Delta\})$ time complexity.

In general initial configurations, let $\ell$ be the number of nodes with agents on them 
(for the rooted case, $\ell=1$). There will be $\ell$ DFSs initiated from those $\ell$ nodes. It may be the case that two or more DFSs {\em meet}. The meeting situation needs to be handled in a way that ensures it does not increase the time required to find fully unsettled neighbor nodes. Kshemkalyani and Sharma \cite{KshemkalyaniS21-OPODIS} showed that such meetings can be handled in additional time proportional to $O(\min\{m,k\Delta\})$ in both {\sync} and {\async}, and hence overall time complexity remained $O(\min\{m,k\Delta\})$ for general initial configuration, the same time complexity as in rooted initial configurations. Sudo {\it et al.} \cite{sudo24} handled such meetings in {\sync} with $O(\log k)$ factor overhead compared to $O(k\log k)$ time for the rooted initial configurations and hence the dispersion time complexity became $O(k\log^2k)$ for general initial configurations in {\sync}.

DFS starts in the forward phase and works alternating between forward and backtrack phases until $k-\ell$  fully unsettled nodes are visited ($\ell$ nodes already have agents and one on each such node can settle there). During each forward phase from one node to another, one such fully unsettled node becomes settled. To visit $k$ different fully unsettled nodes, even when starting from a rooted initial configuration, a DFS must perform at least $k-1$ forward phases, and at most $k-1$ backtrack phases. Therefore, the best DFS time complexity is $2(k-1)=O(k)$, which is asymptotically optimal since in graphs (such as line) exactly $k-1$ forward phases are needed in the worst-case (consider the case of all $k$ agents are on either end node of the line graph). 
Therefore, the challenge is how to guarantee only $k-1$ forward phases and each forward phase to finish in $O(1)$ time to obtain $O(k)$ time bound. 

In this paper, in {\sync}, we devise techniques to find, from any node, a fully unsettled empty neighbor (if exists) in $O(1)$ rounds and handle meetings of two DFSs with $O(1)$ overhead, achieving time-optimal $O(k)$-round algorithm with $O(\log (k+\Delta))$ memory per agent. 
Despite the efficacy of the developed techniques in {\sync} achieving time optimality with $O(\log (k+\Delta))$ memory, we are not able to extend the {\sync} techniques to {\async}. Nevertheless, we are able to devise techniques to find, from any node, a fully unsettled empty neighbor (if exists) in $O(\log k)$ epochs. Interestingly, our {\async} technique extends the {\sync} technique of Sudo {\it et al.} \cite{sudo24} that achieved $O(k\log^2k)$-round solution for dispersion in {\sync}. Additionally,  our technique handles meetings of two DFSs with $O(1)$ overhead even in {\async}, achieving  $O(k\log k)$-epoch algorithm with $O(\log (k+\Delta))$ memory in {\async}. This result is interesting and important since it answers an open question from Sudo {\it et al.} \cite{sudo24} in the affirmative about whether $o(\min\{m,k\Delta\})$-time algorithm could be designed for dispersion in {\async} with $O(\log (k+\Delta))$ memory.  Furthermore, our {\async} result improves the {\sync} time bound of Sudo {\it et al.} \cite{sudo24} by an $O(\log k)$ factor. It remains open whether our {\sync} technique can be extended to achieve $O(k)$-epoch algorithm with $O(\log (k+\Delta))$ memory in {\async}.   

\subsection{Techniques for \texorpdfstring{\sync}{sync}}
\label{subsection:sync-technique}

As discussed above, to be able to solve dispersion in $O(k)$ rounds, the number of forward phases should be $O(k)$ and each forward phase should take $O(1)$ rounds. To be able to run the forward phase in $O(1)$ rounds at a node $v$,  a fully unsettled neighbor node of $v$ should be found in $O(1)$ rounds, where an agent can settle. 
A simple idea to guarantee $O(1)$ rounds for the forward phase at $v$ is to probe all $\delta_v$ neighbors of $v$ in parallel (which we call {\em synchronous probing}). To probe $\delta_v$ neighbors in parallel, $v$ needs at least $\delta_v$ agents positioned. 
However, let $k'\leq k$ be the number of agents at $v$. 
When $k'<o(\delta_v)$, synchronous probing cannot finish in $O(1)$ rounds. We know that using the probing technique of Sudo {\it et al.} \cite{sudo24},  synchronous probing can be done in $O(\log \delta_v)$ rounds, even when $k'<o(\delta_v)$.  Therefore, dispersion needs $O(k+\Delta \log \Delta)$ time in the worst-case, which is $O(k \log k)$,  when $\Delta=\Theta(k)$, matching Sudo {\it et al.} \cite{sudo24}.

Our proposed technique runs synchronous probing in $O(1)$ rounds. Particularly, we guarantee that, during DFS, $\lceil k/3\rceil$ agents are available for synchronous probing. 
We make $\lceil k/3\rceil$ agents available leaving $\geq \lceil k/3\rceil$ fully unsettled nodes empty during DFS, i.e., the agents cannot settle on $\geq \lceil k/3\rceil$ nodes of the associated DFS tree $T_{DFS}$ until DFS finishes.  Therefore, the $\lceil k/3\rceil$ agents can finish synchronous probing at any node  in $O(1)$ rounds. 
If node $v$ has $\delta_v\leq \lceil k/3\rceil$, then $\lceil k/3\rceil$ agents can probe all $\delta_v$ neighbors of $v$ in 2 rounds. For $\delta_v>\lceil k/3\rceil$, notice that  probing only $\min\{k,\delta_v\}$ neighbors of $v$ is enough to find a fully unsettled empty neighbor of $v$ (if exists). This is because no more than $k$ nodes are going to be occupied with agents at any time since there are $k$ agents. Therefore, with $\lceil k/3\rceil$ agents, probing at $v$ can finish in at most 3 iterations, i.e.,  $O(1)$ rounds.

We provide an illustration of synchronous probing in Fig.~\ref{fig:sync-probe}.  Although having $\lceil k/3\rceil$ agents helps in finishing probing in $O(1)$ rounds, guaranteeing the availability of $\lceil k/3\rceil$ agents creates three major challenges Q1--Q3 below. 
We need some notations. Suppose each agent maintains a variable $\settled \in \{\bot,\top\}$. Initially, the $\lceil k/3\rceil$ agents start as   {\em seeker} and the remaining $\lfloor 2k/3\rfloor$ agents start as {\em explorer}.
 We say an explorer agent $a_i$ a \emph{settler} when $a_i.\settled = \top$. 
 The seeker agents are used in synchronous probing and they settle only after synchronous probing is not needed anymore.

\begin{itemize}
    \item [{\bf Q1.}] {\bf Which DFS tree nodes to leave empty?} 
    Since we allocate $\lceil k/3\rceil$ agents as seeker agents, they cannot settle during DFS. Additionally, the DFS needs to visit $k$ nodes, i.e., the DFS tree $T_{DFS}$ should have size $k$. For this, we need to leave at least $\lceil k/3\rceil$ nodes of $T_{DFS}$ empty. The challenge is how to meet such requirements.  
    We carefully leave the nodes of $T_{DFS}$ empty such that each such empty node has a settler within 2 hops in $T_{DFS}$. 
    The idea is to leave the nodes of $T_{DFS}$ at odd depth (depth of the root is $0$) empty. This is sufficient when no node in $T_{DFS}$ is a branching node (i.e., a degree more than 2). For branching nodes (depending on their depth), we carefully decide on whether to (i) put extra settlers on their empty children nodes or (ii) remove some settlers from their non-empty children nodes.    
   
    We prove that our approach leaves at least $\geq \lceil k/3\rceil$ nodes of $T_{DFS}$ empty until DFS finishes. Therefore, DFS finishes with having $k$ nodes in $T_{DFS}$ but with only at most $\lfloor 2k/3\rfloor$ nodes of $T_{DFS}$ occupied with settlers. This guarantees $\lceil k/3\rceil$ seeker agents we allocated are available to do synchronous probing until DFS finishes. We discuss the algorithm in detail in Section \ref{section:emptynodes} (Algorithm to Select Empty Nodes) along with illustrations (Fig.~\ref{fig:empty-tree-arbitrary-T}).

    \item [{\bf Q2.}] {\bf How to successfully run DFS despite being some of its tree nodes empty?} Suppose a settler $a_j$ is positioned on a node of $T_{DFS}$. Let that node be $a_j$'s {\em home node}. We ask settler $a_j$ to {\em oscillate} in a round-robin manner covering the empty nodes starting from and ending at its home node. We call settler $a_j$ that oscillates an {\em oscillating settler}. We guarantee that an oscillating settler at depth $d$ (the root of $T_{DFS}$ is at depth $0$) only needs to cover at most 3 empty children nodes at depth $d+1$  or at most 2 sibling nodes at the same depth $d$ in $T_{DFS}$. This guarantees the matching of settlers to empty nodes such that each settler's round-robin oscillation trip finishes in (at most) $6$ rounds. Some of the settlers may never oscillate and they are called {\em non-oscillating settlers}. A non-oscillating settler never leaves its home node. 
    Some settlers may initially be non-oscillating, become oscillating over time, and finally become non-oscillating. 
    When an oscillating settler is not at its home node, we say that it is at its {\em oscillating home node}. We provide illustrations in Figs.~\ref{fig:oscillation1} and \ref{fig:oscillation2}.   
    \item [{\bf Q3.}] {\bf How to settle seeker agents to the empty nodes after DFS finishes?} After having $k$ nodes in $T_{DFS}$, the DFS finishes. The $\lceil k/3\rceil$ seeker agents (and remaining explorers, if any) first go to the root of $T_{DFS}$ as a group following the parent pointers in $T_{DFS}$. 
    Then, they re-traverse $T_{DFS}$, starting from the root node, to position themselves on the empty nodes in $T_{DFS}$.
      We make this re-traversal process finish in $O(k)$ time with $O(\log (k+\Delta))$ memory through the use of a {\em sibling pointer} technique which we develop. 
      Without sibling pointers, the memory requirement for re-traversal becomes $O(\Delta\log(k+\Delta))$\footnote{This memory bound can be made $O(\Delta+\log k)$ by just storing the bit information to separate the ports at a node belonging to the DFS tree $T_{DFS}$, instead of storing the port numbers themselves. This is exactly what Sudo {\it et al.} \cite{sudo24} did to have memory $O(\Delta+\log k)$ for their {\sync} rooted dispersion algorithm running in $O(k)$ rounds. Through our sibling pointer technique, we were able to keep the time complexity $O(k)$ with memory only $O(\log(k+\Delta))$.}.
    
    An oscillating setter, once its job is done, stops oscillating, returns to its home node, transforms itself as a non-oscillating settler and settles.     
\end{itemize}

\noindent{\bf Handling general initial configurations.}
So far we discussed techniques to achieve $O(k)$ time complexity for rooted initial configurations. In general initial configurations, 
there will be $\ell$ DFSs initiated from $\ell$ nodes ($\ell$ not known). Each DFS follows the approach as in the rooted case. Let a node has $k_1$ agents running DFS $D_1$. We show that having $\geq \lceil k_1/3\rceil$ agents as seekers is enough to finish $D_1$ if $D_1$ does not meet any other DFS, say $D_2$. This provides the guarantee that, if no two DFSs ever meet, dispersion is achieved. In case of a meeting, we develop an approach that handles the meeting of two DFSs $D_1$ and $D_2$ with overhead the size of the larger DFS between $D_1$ and $D_2$. In other words, $k_1+k_2$ agents belonging to $D_1$ and $D_2$ disperse in $O(k_1+k_2)$ rounds, if a meeting with the third DFS $D_3$ never occurs during DFS. If that meeting occurs, we show that the time complexity becomes $O(k_1+k_2+k_3)$ rounds. Therefore, the worst-case time complexity starting from any $\ell$  nodes becomes $O(k)$ rounds. 
Specifically, to achieve this runtime, we extend the {\em size-based subsumption} technique of Kshemkalyani and Sharma \cite{KshemkalyaniS21-OPODIS} [OPODIS'21] which works as follows. 
Suppose DFS $D_1$ meets DFS $D_2$ at node $w$ (notice that $w$ belongs to $D_2$). Let $|D_i|$ denote the number of agents settled from DFS $D_i$ (in other words, the number of nodes in the DFS tree $T_{D_i}$ built by $D_i$ so far). $D_1$ subsumes $D_2$ if and only if $|D_2|<|D_1|$, otherwise $D_2$ subsumes $D_1$. The agents settled from subsumed DFS are collected and given to the subsuming DFS to continue with its DFS, which essentially means that the subsumed DFS does not exist anymore.  This subsumption technique guarantees that one DFS out of $\ell'$ met DFSs (from $\ell$ nodes) always remains subsuming and grows monotonically until all agents settle.

\subsection{Techniques for \texorpdfstring{\async}{}}

Although the techniques discussed in Section \ref{subsection:sync-technique} provided optimal $O(k)$-round solution in {\sync} with $O(\log (k+\Delta))$ memory, we are not able to make it work to obtain the same $O(k)$-epoch solution in {\async}. However, 
we are able to extend the {\sync} technique of Sudo {\it et al.} \cite{sudo24} to {\async} such that finding a fully unsettled neighbor node and moving to that neighbor finishes in $O(\log k)$ epochs. This extension indeed shows that the probing technique of Sudo {\it et al.} \cite{sudo24} is not inherently dependent on synchrony assumption.  

We first discuss the challenge for such an extension and how we overcome it. Suppose the DFS is currently at node $w$ with at least three agents $a_1,a_2,a_3$ on it, $a_1$ settled at $w$ and $a_2,a_3$ need to be settled on other nodes  To settle $a_2$, a fully unsettled neighbor node of $w$ needs to be found. The idea of Sudo {\it et al.} \cite{sudo24} for {\sync} is as follows. $a_2$ leaves $w$ via port-1 to visit port-1 neighbor. 
Suppose $a_2$ returns with the knowledge that port-1 neighbor is empty, which is in fact the fully unsettled neighbor node of $w$ due to {\sync}. The DFS then makes a forward move to port-1 neighbor, $a_2$ settles at that node, and the DFS continues at that node to find a fully unsettled node to settler $a_3$. 
Suppose $a_2$ returns with the knowledge that port-1 neighbor is non-empty (has an agent settled). While returning to $w$, $a_2$ brings that agent (say $a_{p1}$) to $w$. Next time, $a_2$ and $a_{p1}$ visit in parallel port-2 and port-3 neighbors of $w$ and return either with settled agents at those neighbors or empty-handed. This process continues until possibly $\min\{k,\delta_w\}$ neighbors are visited. This finishes in $O(\log \min\{k,\delta_w\})=O(\log k)$ rounds since every next iteration of probing is done by double the number of agents. 

The question is how to run this approach in {\async}. We synchronize agents running probing by waiting for all the agents doing probing in the current iteration to return to $w$ to start the next iteration. This can be easily done since $w$ can have a count on how many are left and how many are returned. Fig.~\ref{fig:async-probe} illustrates these ideas. 

Now suppose a fully unsettled neighbor is found (if exists) or all neighbors are visited and no empty neighbor is found. The question is how to return the settled helper agents back to their home nodes. The {\em settled helper agents} are the settled agents at $w$'s neighbors brought at $w$ to help with probing. 
In {\sync}, after a fully unsettled neighbor node is found, $a_2,a_3$ go to that neighbor, say $v$, where $a_2$ settles and the settled helper agents go to their respective nodes, which finished in a round. However, in {\async}, it may not be the case, i.e., it might hamper the search of $a_3$ to find a fully unsettled neighbor node of $v$ (if exists) to settle. To illustrate, let $u$ be the common neighbor of $w$ and $v$. Let the agent $a_u$ originally settled at $u$ is in transit from $w$ to $u$ and has not arrived at $u$ yet. Let $a_3$ start probing the neighbors from $v$ and visit $u$ before $a_u$ reaches $u$. $a_3$ finds $u$ as a fully unsettled neighbor node (since empty) and decides to settle at $u$, violating the dispersion configuration, since the settled helper agent $a_u$ which is in transit to its home node $u$ arrives at $u$, there will be two settlers at $u$. 

We overcome this difficulty as follows. Before, $a_2$ leaves $w$ to settle at $v$, we guarantee that settled helper agents reach their respective home nodes, which are the neighboring nodes of $w$. Let there be $\alpha\leq \min\{k,\delta_w\}$ settled helper agents at $w$ after probing finishes. We pair agents and send a pair each at $\lfloor \alpha/2\rfloor$ neighbors of $w$. For each neighbor, one agent for which it is its home stays at that neighbor and one returns to $w$, which guarantees that  $\lfloor \alpha/2\rfloor$ settled helper agents now reached their nodes, despite asynchrony. We wait until one robot each from  $\lfloor \alpha/2\rfloor$ pairs returns to $w$. We then send 2 agents each to $\lfloor \alpha/4\rfloor$ neighbors of $w$ and $\lfloor \alpha/4\rfloor$ agents return. Finally, repeating this procedure,  there will be only one settled helper agent at $w$ which will be settled at its node by sending it with $a_2$ and $a_2$ returns to $w$.  Fig.~\ref{fig:guestseeoff} illustrates these ideas. This process finishes in $O(\log k)$ epochs. Therefore, when $a_3$  runs its procedure to find a fully unsettled neighboring node and that procedure finds an empty neighbor of $v$, then that empty node is indeed fully unsettled. Thus, we can solve dispersion in {\async} for rooted initial configurations in $O(k\log k)$ epochs. 

\vspace{2mm}
\noindent{\bf Handling general initial configurations.} We proceed with the technique used in our {\sync} general initial configurations to handle general initial configurations in {\async}. Interestingly, the size-based subsumption technique of Kshemkalyani and Sharma \cite{KshemkalyaniS21-OPODIS}
works even in {\async} with $O(\log(k+\Delta))$ memory. Since we have an algorithm for rooted initial configuration running in $O(k\log k)$ epochs, we merge these two ideas to achieve $O(k\log k)$-epoch solution for the general initial configurations in {\async} with $O(\log(k+\Delta))$ memory.

\section{Empty Nodes, Oscillation, and \texorpdfstring{\sync}{} and \texorpdfstring{\async}{} Probing}
\label{section:emptynodes}
We first discuss empty node selection and oscillation techniques used in our {\sync} algorithms. We then discuss neighbor probing techniques for both {\sync} and {\async} algorithms.

\subsection{Empty Node Selection}
We describe here an algorithm \texttt{Empty\_Node\_Selection()} (pseudocode in Algorithm \ref{algorithm:emptynode}) that decides which nodes of $T_{DFS}$ to leave empty in our {\sync} algorithms (Sections \ref{section:RootedSync} and \ref{sec:gensync}). Let $T$ be an arbitrary $T_{DFS}$ with root $r$. We assume here that $T$ is known and \texttt{Empty\_Node\_Selection()} is a centralized algorithm.  We later show how \texttt{Empty\_Node\_Selection()} can be implemented during DFS when $T_{DFS}$ is monotonically growing in its size. 
Let the depth of $T$ be $d_{\max}$ (root $r$ at depth $0$).

\begin{algorithm}[t]
\caption{\texttt{Empty\_Node\_Selection()}}\label{algorithm:emptynode}
\SetKwInOut{Input}{input:}
\SetKwInOut{Output}{Output:}
{\bf input:} an arbitrary tree $T$ of size $k$ and $k$ agents\\
{\bf output:} settle $\leq \lfloor \frac{2k}{3}\rfloor$ agents at $\leq \lfloor \frac{2k}{3}\rfloor$ nodes of $T$ and leave $\geq \lceil \frac{k}{3}\rceil$ nodes of $T$ empty\\
$d_{\max} \leftarrow$ depth of $T$\\
$lf_i\leftarrow$ a leaf node in $T$\\
$d_{lf_i}\leftarrow$ depth of left node $lf_i$\\ 
settle an agent each on all the nodes of $T$ at even depth\\
\For{each $lf_i$ with a settler}
{
$lf_{i-1}\leftarrow$ parent of $lf_i$\\
$x\leftarrow$ the number of children of $lf_{i-1}$ which are leaves of $T$ like $lf_i$ (including $lf_i$)\\
remove settlers from $\lfloor \frac{2x}{3}\rfloor$ of the leaf children of $lf_{i-1}$\\ 
}
$lf_{even}\leftarrow$ non-leaf node in $T$ at any even depth\\
\For{each $lf_{even}$ with a settler}
{
$x\leftarrow$ the number of children of $lf_{even}$\\
\If{$x>3$}
{
put one settler each  on $\lceil \frac{x-3}{3}\rceil$ children of $lf_{even}$\\ 
}
}
\end{algorithm}

\begin{figure}[H]
    \centering
    \includegraphics[width=0.49\linewidth]{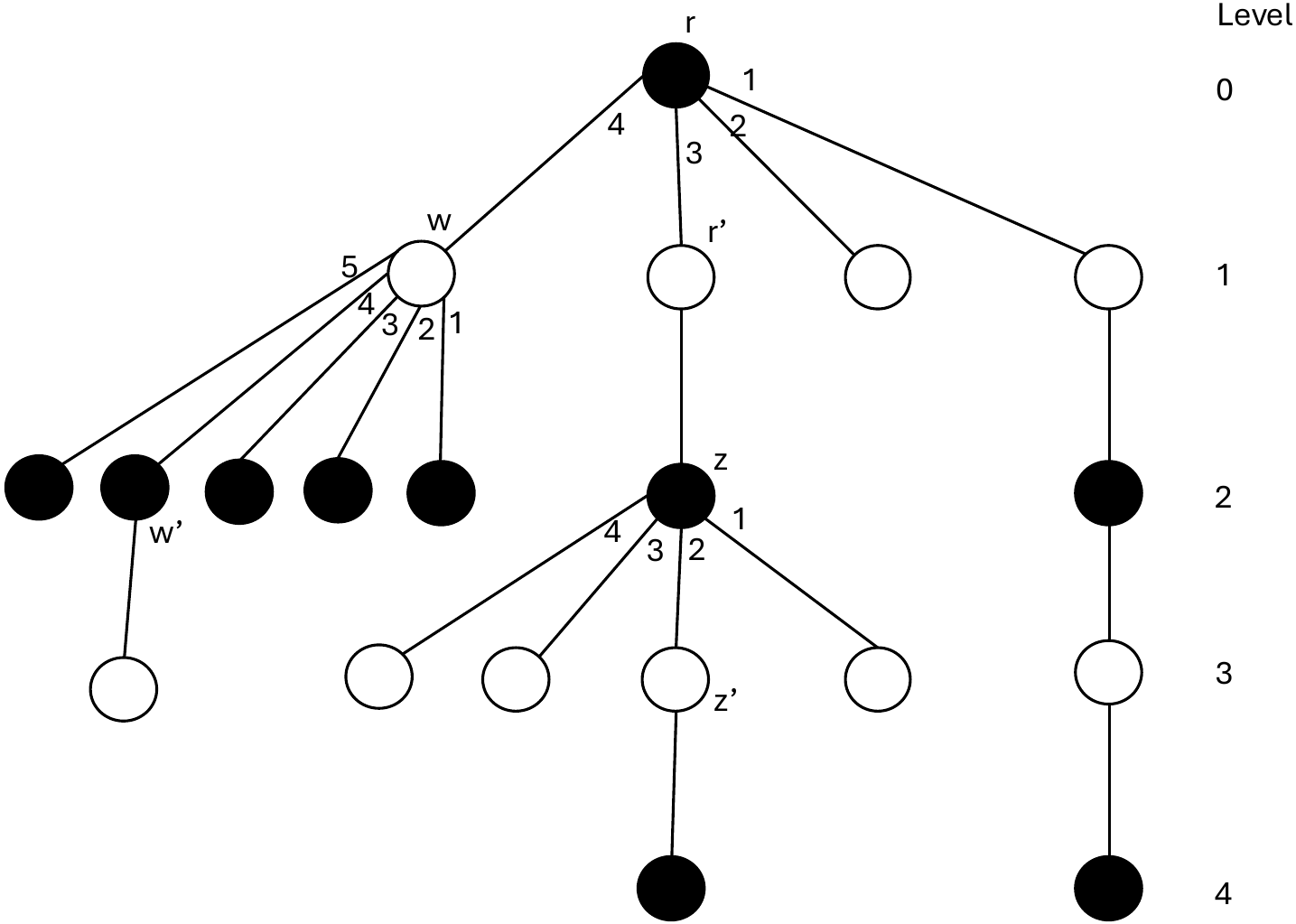}
    \includegraphics[width=0.49\linewidth]{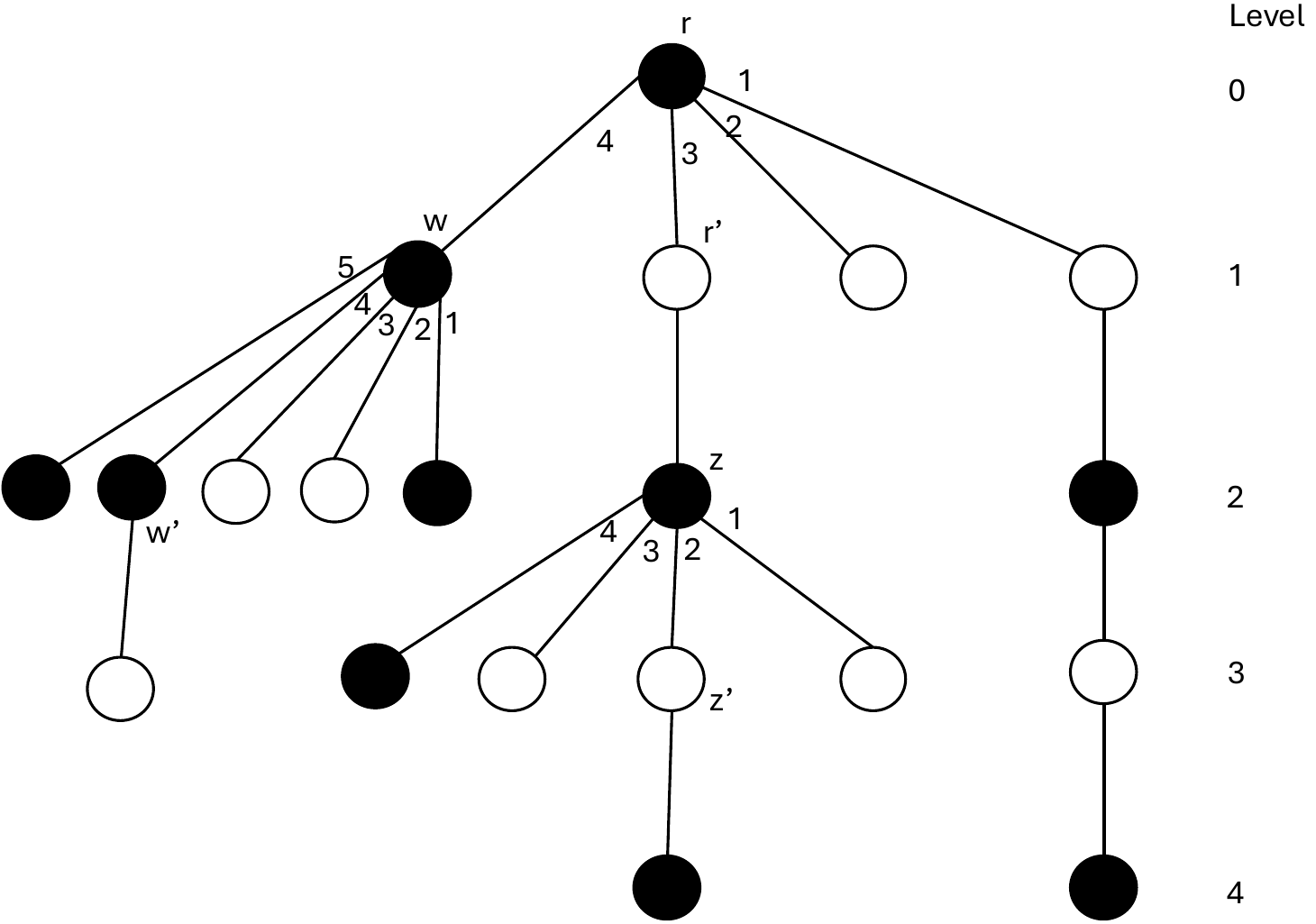}
    \caption{An illustration of which nodes of an arbitrary tree $T$ are occupied with a settler and which nodes are left empty. ({\bf left}) Tree $T$ after nodes at even depth are occupied with a settler. ({\bf right}) Tree $T$ after adjustment through removing some settlers and putting new settlers.}
    \label{fig:empty-tree-arbitrary-T}
\end{figure}

The algorithm is as follows.
We put a settler agent each on all the nodes of $T$ at depth $0,2,4,\ldots$. We then look at the nodes of $T$ and either (A) remove extra settlers or (B) put new settlers. Fig.~\ref{fig:empty-tree-arbitrary-T} provides an illustration. Let $lf_i$ be a leaf node of $T$. Let $lf_i$ has depth $d^i$; the depth of $T$ is $d_{\max}:=\max_{i}d^i.$ 
Depending on whether $d^i$ is even or odd, $lf_i$ has either a settler or is empty.
\begin{itemize}
    \item {\bf Case A -- Remove extra settlers:} Consider the leaves of $T$ which have settlers (the children of node $w$, except $w'$, in Fig.~\ref{fig:empty-tree-arbitrary-T} (left)). Let $lf_i$ be one such leaf on $T$. 
Let $lf_{i-1}$ be a node in $T$  which is the parent of $lf_i$ (node $w$ in Fig.~\ref{fig:empty-tree-arbitrary-T} (left)). Note that  
$lf_{i-1}$ does not have a settler since it is the parent of $lf_i$ with a settler.
Let $x$ be the total number of children of $lf_{i-1}$ that are leaves of $T$ ($x=4$ for $w$ in Fig.~\ref{fig:empty-tree-arbitrary-T} (left)). Note that all these $x$ children must have a settler each. If $x=1$, we do nothing  
(the only child of $z'$ in Fig.~\ref{fig:empty-tree-arbitrary-T} (left)).  
If $x>1$, we remove settlers from $\lfloor \frac{2x}{3}\rfloor$ of those children and hence only  
$\lceil \frac{x}{3}\rceil$ children are left with a settler each. 
For $w$ in Fig.~\ref{fig:empty-tree-arbitrary-T} (right)), we removed two setters from children of $w$ leading through ports 2 and 3 of $w$. 

\item {\bf Case B -- Put new settlers:} Let $lf_{\leq 1}$ be a non-leaf node in $T$ at any even depth from $r$. Notice that  $lf_{\leq 1}$ has a settler since it is at even depth. 
The nodes $r$ and $z$ in Fig.~\ref{fig:empty-tree-arbitrary-T}.
Let $x$ be the number of children of $lf_{\leq 1}$ (pick node $z$). Note that all these $x$ children do not have a settler since they are at odd depth. If $x>3$, we put one settler each on $\lceil \frac{x-3}{3}\rceil$ children of $lf_{\leq 1}$. As shown in Fig.~\ref{fig:empty-tree-arbitrary-T} (right), we put a settler on the child of $z$ leading through port 4 of $z$. 
\end{itemize}
We prove the following for \texttt{Empty\_Node\_Selection()} (Algorithm \ref{algorithm:emptynode}). 

\begin{lemma}
\label{lemma:kover3}
{\it \texttt{Empty\_Node\_Selection()} leaves $\geq \lceil \frac{k}{3}\rceil$ nodes empty in any tree $T$ of size $k\geq 3$. }
\end{lemma}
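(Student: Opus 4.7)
The plan is to establish the equivalent bound $s \leq \lfloor 2k/3 \rfloor$ on the number of settlers produced by \texttt{Empty\_Node\_Selection()}, where $s := k - e$, via a direct node-by-node accounting. For each non-leaf odd-depth node $u$, let $x_L^u$ and $x_{NL}^u$ be its numbers of leaf and non-leaf children (all at even depth); for each non-leaf even-depth node $v$, let $x^v$ be its number of children (all at odd depth). Because $k\geq 3$, the root $r$ is non-leaf, so every settler falls in exactly one of four classes (root; non-leaf even-depth descendant, always a settler; even-depth leaf surviving Case A; odd-depth node promoted by Case B), and the total is
\[
s \;=\; 1 \;+\; \sum_{u} x_{NL}^u \;+\; \sum_{u} \lceil x_L^u/3\rceil \;+\; \sum_{v} \max\{0,\lceil (x^v-3)/3\rceil\},
\]
while $k = 1 + \sum_u (x_L^u + x_{NL}^u) + \sum_v x^v$, with $\sum_u$ over non-leaf odd-depth nodes and $\sum_v$ over non-leaf even-depth nodes.

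Next I would compute $2k-3s$ and regroup by node to obtain $2k - 3s = -1 + \sum_u g_u + \sum_v g_v$ with $g_u := 2x_L^u - 3\lceil x_L^u/3\rceil - x_{NL}^u$ and $g_v := 2x^v - 3\max\{0,\lceil(x^v-3)/3\rceil\}$. Two routine per-node inequalities drive the bound: a case analysis on $x^v \bmod 3$ (separately handling $x^v \leq 3$ and $x^v \geq 4$) gives $g_v \geq x^v + 1$ for every non-leaf even-depth $v$, and the elementary identity $2x_L - 3\lceil x_L/3\rceil \geq x_L - 2\cdot\mathbf{1}[x_L\geq 1]$ (tight at $x_L = 0,1$) gives $g_u \geq x_L^u - x_{NL}^u - 2\cdot\mathbf{1}[x_L^u\geq 1]$. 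Substituting the identities $\sum_u x_L^u = |V_{\text{even leaf}}|$, $\sum_u x_{NL}^u = |V_{\text{even non-leaf}}| - 1$ (the $-1$ because $r$ is nobody's child), and $\sum_v x^v = |V_{\text{odd}}|$, the cross terms telescope to
\[
2k - 3s \;\geq\; k - |V_{\text{even non-leaf}}| - 2N_+,
\]
where $N_+$ denotes the number of odd-depth non-leaf nodes with at least one leaf child.

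To close the argument I would observe that the three sets (i) $V_{\text{even non-leaf}}$, (ii) the $u$'s counted by $N_+$, and (iii) one selected leaf child per such $u$, are pairwise disjoint subsets of $V(T)$ (they live at different depths or different leaf-status), so $|V_{\text{even non-leaf}}| + 2N_+ \leq k$. This gives $2k - 3s \geq 0$, and since $s$ is an integer, $s \leq \lfloor 2k/3 \rfloor$, hence $e = k - s \geq \lceil k/3 \rceil$.

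The main obstacle is the tight case $x_L^u = 1$ in Case A: no settler is removed there, so a single-leaf sibling group has local settler ratio $1$, and a purely local ``constant fraction empty per subtree'' argument breaks down. The global counting step in the previous paragraph -- that each such bad $u$ must spend two distinct nodes of $T$, itself at odd depth and its unique leaf child at even depth -- is precisely what absorbs the $-1$ slack contributed per bad $u$ and lets the global inequality close.
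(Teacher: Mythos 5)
Your proof is correct, and it takes a genuinely different route from the paper's. The paper argues locally and by levels: it claims that at every branching node at least half of the relevant children at the next depth remain empty, concludes that the number of empty nodes satisfies $n' \ge n''-1$ (empty at least non-empty minus one), and reads off the $\lceil k/3\rceil$ bound. Your argument is instead a global charging computation: you write the settler count $s=k-e$ exactly as a sum of per-node contributions, lower-bound $2k-3s$ by the two per-node inequalities, and close with the disjointness bound $|V_{\text{even non-leaf}}|+2N_+\le k$. I checked the per-node inequalities ($g_v\ge x^v+1$ via $3\lceil x/3\rceil\le x+2$, and $2x_L-3\lceil x_L/3\rceil\ge x_L-2\cdot\mathbf{1}[x_L\ge 1]$, tight at $x_L\in\{1,4\}$) and the telescoping of the cross terms; they are sound. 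What your approach buys is precisely the tight case you flag: when an odd-depth node $u$ has exactly one leaf child, Case A removes nothing, that sibling group is fully settled, and the paper's ``at least half of the children remain empty'' step fails at that branching node. On a caterpillar where every odd-depth spine node carries one extra even-depth leaf, the algorithm settles roughly $2k/3$ nodes, so the paper's intermediate claim $n'\ge n''-1$ is actually false there even though the lemma's $\lceil k/3\rceil$ conclusion holds with equality; your charging of the $-1$ deficit of each such $u$ against the two distinct nodes it consumes ($u$ and its unique leaf child) is exactly what makes the argument airtight where the paper's is not. When writing this up, state explicitly that for $k\ge 3$ the root is a non-leaf even-depth node (so your four settler classes are exhaustive and disjoint) and that Cases A and B act on disjoint node sets (even-depth leaves versus odd-depth children), so the two passes commute and the count is well defined.
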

\begin{proof}
Consider $k=3$. Let $T$ be a line with one endpoint the root $r$. The depth of $T$ is 2. The root node and the depth-2 node will have a settler. Therefore, we have $\geq \lceil k/3\rceil$ nodes empty.  Consider now the case of $k\geq 4$. We show that at least $\lfloor k/2\rfloor$ nodes remain empty. If $T$ is a line, this is immediate. If $T$ is not a line (or when $T$ is a line but the root is not at either end), there must be at least a branch. A branch starts from a node which we call a {\em branching} node.
Order the branching nodes based on their depth from $r$ (including $r$).
Consider the branching node closest to $r$ (including $r$. Let that branching node be $b_i$ at depth $i$. Node $b_i$ has at least two children at depth $i+1$. Since we settle agents at even depth nodes of $T$, if $i$ is even, $b_i$ has a settler and its children are empty. 
Let $x$ be the number of those children. If $x\leq 3$, we do not put any settler. If $x>4$, we divide the children into $\lceil \frac{x-3}{3}\rceil$ groups (one group may have at least one and at most 3 children and assign one settler for each group). Therefore, at least half of the children at depth $i+1$ remain empty for that branching node.  

If $i$ is odd, $b_i$ is empty and its children have a settler each. We classify those children according to whether they are leaves of $T$ or not. For non-leaf children, we do nothing. For leaf children (if there are at least two), we again divide them into groups of three (one group may have at least one and at most 3 children, and others have 3 children each) and remove all the settlers except one from each group.  Therefore, at least half of those nodes at depth $i+1$ remain empty for that branching node. 

Let $n'$ be the empty nodes in $T$ and let $n''=k-n'$ be the non-empty nodes. 
Now we examine the branching nodes and non-branching nodes in combination at level $i$. For the non-branching nodes, there is a child at level $i+1$ (which is empty if $i$ is even and non-empty if $i$ is odd). For the branching nodes at level $i$, we have that at level $i+1$, there are as many empty nodes as non-empty nodes. 
Therefore, for any $T$ from depth 1 up to $d_{\max}$, $n'\geq n''$. Since root $r$ has a settler, we have that $n'\geq n''-1$. Since $n'+n''=k$, the lemma follows for any $k\geq 3$. 
\end{proof}

\subsection{Matching Empty Nodes to Settlers and Oscillation} 
We saw that \texttt{Empty\_Node\_Selection()} leaves $\geq \lceil \frac{k}{3}\rceil$ nodes empty in any arbitrary tree $T$ of size $k$. We now prove a property such that the empty nodes could be covered by  $\leq \lceil \frac{2k}{3}\rceil$ agents settled on the nodes of $T$. Consider an agent $s_d$ positioned at a depth-$d$ node in $T$. 
The agent $s_d$ covers (I) either at most 3 empty children nodes at depth $d+1$ or (II) at most 2 empty sibling nodes at depth $d$.
Fig.~\ref{fig:oscillation1} provides an illustration where it is shown how agents (shown in circle with slanted lines) inside each group of dashed boundaries  cover empty children or sibling nodes inside the group.
For Case $I$, for the agent $s_d$ at depth $d$, let the empty nodes at depth $d+1$ be $a,b,c$. An oscillation trip for agent $s_d$ would be $s_d-a-s_d-b-s_d-c-s_d$. Notice that the order in which empty nodes are visited does not impact the length of the oscillation trip.  For Case $II$, let  $a,b$ be the empty sibling nodes at depth $d$. Let $p(s_d)$ be the parent of $s_d$ at depth $d-1$ (and also of  $a,b$) by construction. An oscillation trip is $s_d-p(s_d)-a-p(s_d)-b-p(s_d)-s_d$. The node $p(s_d)$ (at depth $d-1$) if empty will be covered by a settler agent at depth $d-2$ and hence we do not consider $p(s_d)$ as covered by $s_d$ in its oscillation trip although the oscillation trip of $s_d$ passes through $p(s_d)$. The order on which $a,b$ are visited has no impact on the oscillation trip length for $s_d$.

\begin{lemma}
{\it An oscillation trip for an agent $s_d$ settled at a node at depth $d$ finishes in 6 rounds.}
\end{lemma}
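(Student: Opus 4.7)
The plan is to argue directly from the two cases already isolated in the preceding discussion. By construction of \texttt{Empty\_Node\_Selection()} and the matching property just established, a settler $s_d$ at depth $d$ in $T$ is assigned to cover either (I) at most three empty children at depth $d+1$, or (II) at most two empty siblings at depth $d$, and nothing else. I would first argue that, in either case, the oscillation trip is a closed walk of length at most $6$ in $T$; since in {\sync} each edge traversal takes exactly one round, this immediately yields the $6$-round bound.

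For Case (I), I would write the oscillation trip explicitly as $s_d \to a \to s_d \to b \to s_d \to c \to s_d$, where $a,b,c$ are the (at most three) empty children of $s_d$'s home node at depth $d+1$. Each of the segments $s_d \to x \to s_d$ uses one parent-edge traversal followed by the reverse, so it contributes $2$ rounds; with at most three such excursions the total is at most $6$ rounds. If $s_d$ has fewer than three empty children to cover, the trip is shorter, so the bound still holds.

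For Case (II), I would similarly write the trip as $s_d \to p(s_d) \to a \to p(s_d) \to b \to p(s_d) \to s_d$, where $p(s_d)$ is the parent of $s_d$ in $T$ and $a,b$ are the (at most two) empty siblings $s_d$ is required to cover. The trip consists of $6$ single-edge moves in $T$: one to ascend to $p(s_d)$, two excursions $p(s_d)\to a \to p(s_d)$ and $p(s_d)\to b\to p(s_d)$ of length $2$ each, and one descent back from $p(s_d)$ to $s_d$. Again, each move costs one round in {\sync}, and fewer empty siblings only shorten the trip.

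The main (and essentially only) subtlety to address is why no other node in $T$ need appear on $s_d$'s oscillation trip. I would justify this by appealing to the matching property: the parent $p(s_d)$ appearing in Case (II), although possibly empty, is guaranteed by \texttt{Empty\_Node\_Selection()} to be covered by a settler at depth $d-2$, so $s_d$ is not responsible for it; and in Case (I) the children $a,b,c$ are all adjacent to $s_d$'s home node, so each excursion is a single edge out and back. Consequently the trip is the concatenation of at most three length-$2$ excursions (plus, in Case (II), one extra up-edge and one matching down-edge), and its total length is at most $6$, which under {\sync} is at most $6$ rounds.
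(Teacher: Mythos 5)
Your proof is correct and follows essentially the same route as the paper's: the identical two-case decomposition, the same explicit closed walks ($s_d$–child–$s_d$ excursions of length 2 in Case I, and the $1+2+2+1$ count through $p(s_d)$ in Case II), and the same observation that $p(s_d)$ is covered by a settler at depth $d-2$ and hence not $s_d$'s responsibility. No gaps.
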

\begin{proof}
   The agent $s_d$ at a node $w$ needs to visit (I) either 3 children nodes of $w$ at depth $d+1$ or (II) 2 sibling nodes at depth $d$. For case I, starting from $w$, a roundtrip of 2 rounds takes $w$ to one empty children and back to $w$. Therefore, at most 3 children are visited in 6 rounds with $s_d$ starting from $w$ and ending at $w$. For case II, let $p(s_d)$ be the parent of $s_d$. $p(s_d)$ must be the parent of the 2 empty sibling nodes $s_d$ needs to visit. $s_d$ reaches $p(s_d)$ in 1 round, it then visits two siblings from $p(s_d)$ in 4 rounds roundtrip. It then returns to $w$ from $p(s_d)$ in 1 round. Therefore, in total 6 rounds for both cases I and II.
\end{proof}

The agent at a node that performs an oscillation trip is an oscillating settler. The agent at a node that does not need to perform an oscillation trip is a non-oscillating settler. In Fig.~\ref{fig:oscillation2}, the black agents are non-oscillating settlers for the given tree $T$. However, while running DFS, $T_{DFS}$ monotonically grows and we have to decide on which nodes to leave empty and which agents to assign for oscillation to cover those empty nodes in real-time. At that time, a settler may start as non-oscillating, it become oscillating, and then eventually after DFS finishes become non-oscillating. Fig.~\ref{fig:oscillation2} provides an illustration in which the agents at nodes $y,w,w''$ which were non-oscillating in Fig.~\ref{fig:oscillation1} now become oscillating since the tree $T$ grew during DFS.  

\begin{lemma}
\label{lemma:oscillating}
{\em A settler $\alpha(w)$ at a node $w$ of $T$ is an oscillating settler if and only if:
\begin{itemize}
    \item $w$ is an even depth non-leaf node.
    \item $w$ is an even depth leaf node  with at least one sibling leaf node not covered by another agent. 
    \item $w$ is an odd depth leaf node with at least one sibling leaf node not covered by another agent. 
\end{itemize}
Otherwise, $\alpha(w)$ is a non-oscillating settler. 
}
\end{lemma}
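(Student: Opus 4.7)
The plan is to proceed by cases that mirror the three operations of \texttt{Empty\_Node\_Selection()}: the default placement of settlers on every even-depth node, the removal step (Case A) applied to even-depth leaf children of odd-depth parents, and the addition step (Case B) applied to odd-depth children of non-leaf even-depth nodes. For each settler $\alpha(w)$ claimed to be oscillating I will exhibit the specific empty nodes it is assigned to cover in its round-robin trip, and for each settler claimed to be non-oscillating I will argue that every empty node in its vicinity is already assigned to a different agent.

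For the $(\Leftarrow)$ direction, first consider an even-depth non-leaf $w$. By the default placement $w$ carries a settler, and all its children at odd depth $d+1$ start empty; Case B re-populates at most $\lceil (x-3)/3\rceil$ of them, so at least $\min\{x,3\}$ children remain empty and are not assigned to any other agent, whence $\alpha(w)$ covers them through a Case-I trip and is oscillating. Next, for an even-depth leaf $w$ whose odd-depth parent $p$ has another leaf child whose settler was removed by Case A, the natural assignment has $\alpha(w)$ visit up to two such uncovered leaf siblings via $p$ using a Case-II trip. Finally, for an odd-depth leaf $w$ that received a settler in Case B, its even-depth grandparent's settler can cover at most three of its odd-depth siblings; the remaining ones are partitioned into groups of at most two and assigned to the $\lceil (x-3)/3\rceil$ Case-B settlers (possible since $3\lceil (x-3)/3\rceil\ge x-3$), giving $\alpha(w)$ a non-trivial Case-II trip.

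For the $(\Rightarrow)$ direction I invert the case analysis. If $w$ is an even-depth leaf whose leaf siblings are absent, all retain their own settlers, or are already absorbed into some other retained sibling's Case-II trip, then $\alpha(w)$ has no leaf sibling left to visit, and since $w$ is itself a leaf it has no child either, hence it is non-oscillating. If $w$ has a settler at odd depth and is not a leaf, then $w$ received its settler from Case B, its children at depth $d+2$ are even and therefore carry default settlers, so $\alpha(w)$ has nothing to visit below; any odd-depth sibling of $w$ left empty was, by the Case-B grouping scheme, already assigned to a distinct Case-B settler, so again $\alpha(w)$ is idle. Lastly, an odd-depth leaf with a settler but every sibling leaf already covered belongs to a fully assigned group, so its Case-II trip is empty.

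The main obstacle I expect is fixing once and for all a concrete assignment of empty nodes to covering settlers so that the ``not covered by another agent'' clause in the statement is unambiguous. Specifically, I need a canonical partition of the $x$ leaf siblings (Case A) into $\lceil x/3\rceil$ groups of size at most three, and of the $x-3$ remaining odd-depth children (Case B) into $\lceil (x-3)/3\rceil$ groups of size at most three, so that every empty node has exactly one assigned covering settler and so that no settler's oscillation trip exceeds the six-round bound from the previous lemma. Once this bookkeeping is pinned down, both directions reduce to routine checks against Algorithm~\ref{algorithm:emptynode}.
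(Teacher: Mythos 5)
Your proposal is correct and follows essentially the same route as the paper: a direct case analysis against the placement, removal (Case A), and addition (Case B) rules of \texttt{Empty\_Node\_Selection()}, with the ``not covered by another agent'' clause resolved by fixing a canonical grouping of empty nodes to covering settlers (the paper pins this grouping down, by increasing port numbers, in the discussion immediately following the lemma). Two minor slips to correct: the settler covering the first three empty odd-depth siblings of $w$ sits at $w$'s even-depth \emph{parent}, not its grandparent, and the children of an odd-depth node lie at depth $d+1$, not $d+2$.
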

\begin{proof}
The proof follows from the working principle of \texttt{Empty\_Node\_Selection()}.
If $w$ is an even depth non-leaf node, then it must have a child that is empty. $\alpha(w)$ then covers that child. If $w$ is an even depth leaf node and there is at least one other even depth sibling node, $\alpha(w)$ covers it if it was not covered by another sibling agent.  If $w$ is an odd depth leaf node and there are another sibling leaf node, $\alpha(w)$ needs to cover it if it was not covered by another sibling agent.   The agent that does not satisfy any of the above conditions does not oscillate.      
\end{proof}

We now discuss how we form the groups for oscillation. Fig.~\ref{fig:oscillation-groups} illustrates these ideas. We follow \texttt{Empty\_Node\_Selection()} and arrange them as groups according to the increasing port numbers until the group size limit (3 children or 2 siblings to cover) is reached.

\begin{figure}[!t]
    \centering
    \includegraphics[width=0.7\linewidth]{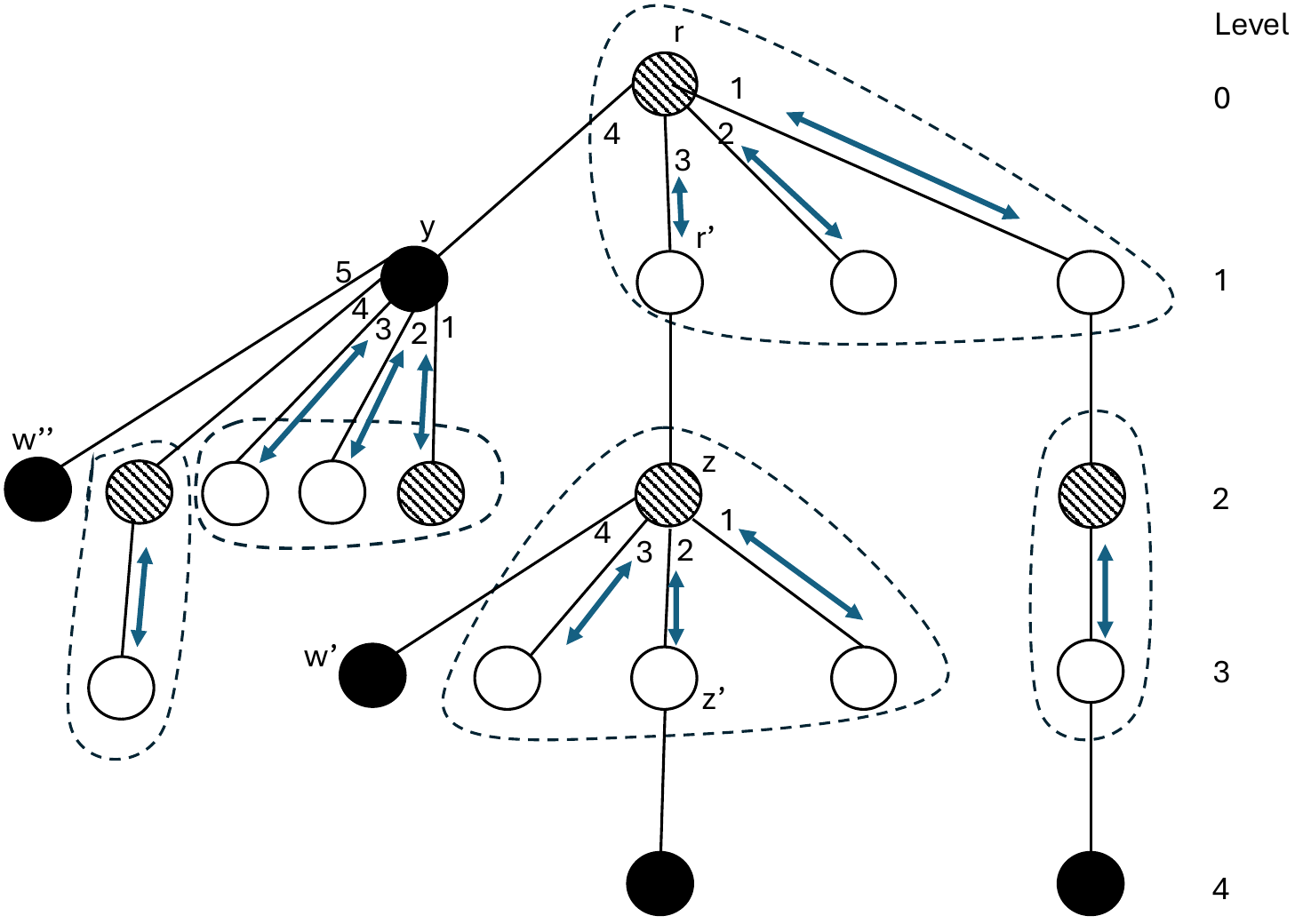}
    \caption{An illustration of how settlers (shown in circle with slanted lines)  cover the empty nodes through oscillation shown as groups inside dashed boundaries. There is exactly one oscillating settler in each group that is responsible for oscillation to cover the empty nodes (at most 2 for the sibling case, otherwise at most 3). Oscillations are two types: (i) An oscillating settler at even depth $d$ covers at most 3 empty nodes at depth $d+1$  (ii) An oscillating settler at even/odd depth $d$ covers at most 2 empty sibling nodes at same depth $d$. The settlers shown as solid black circles are non-oscillating settlers for this figure.}
    \label{fig:oscillation1}
\end{figure}

\begin{figure}[H]
    \centering
    \includegraphics[width=0.4\linewidth]{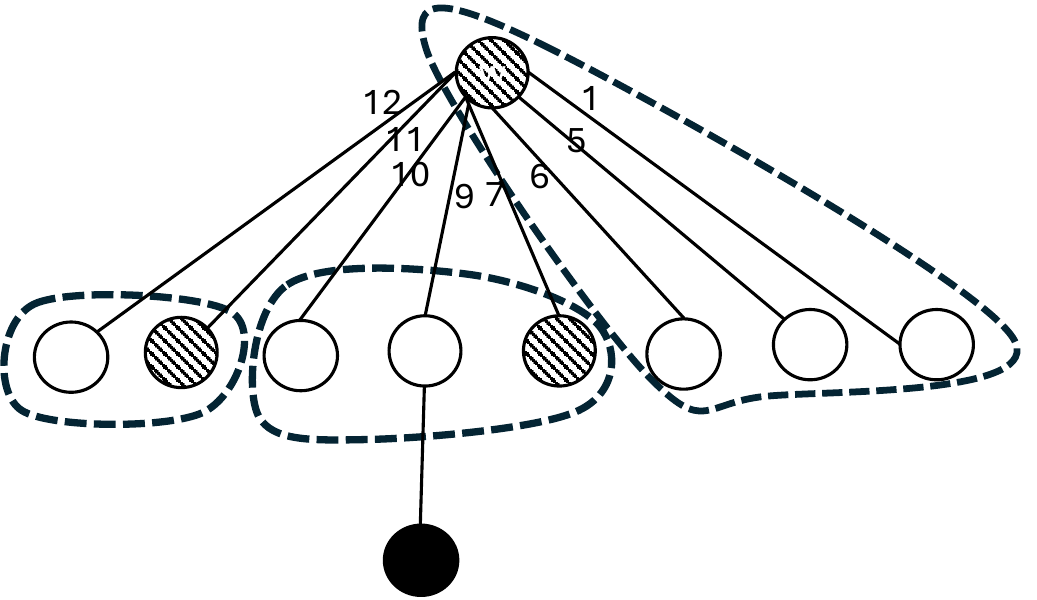}
    \includegraphics[width=0.4\linewidth]{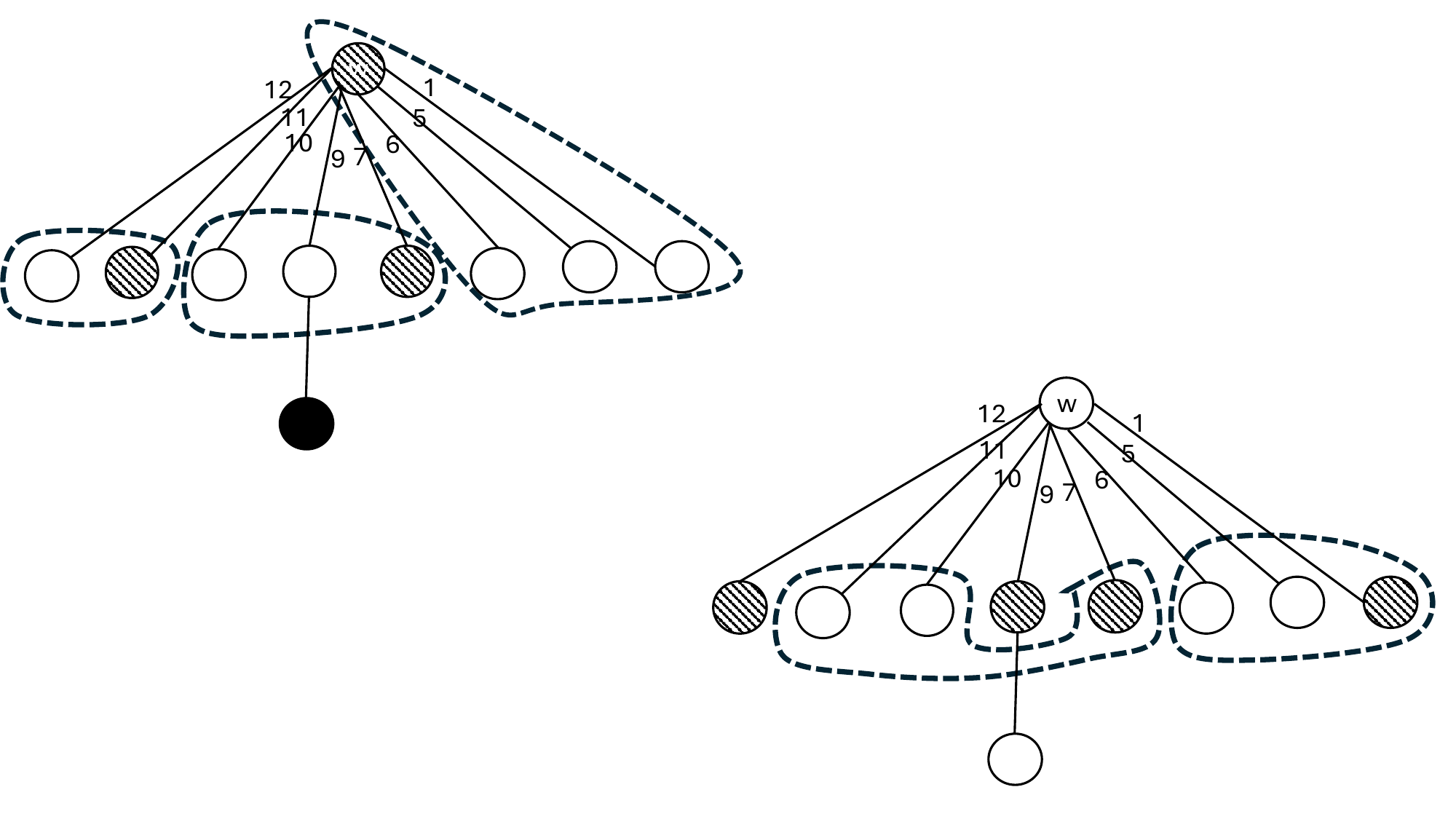}
    \caption{An illustration of how groups are formed for oscillation for a branching node $w$. Groups are formed according to the port numbers in increasing orders for the children that satisfy the oscillation criteria.}
    \label{fig:oscillation-groups}
\end{figure}

\begin{figure}[H]
    \centering
    \includegraphics[width=0.8\linewidth]{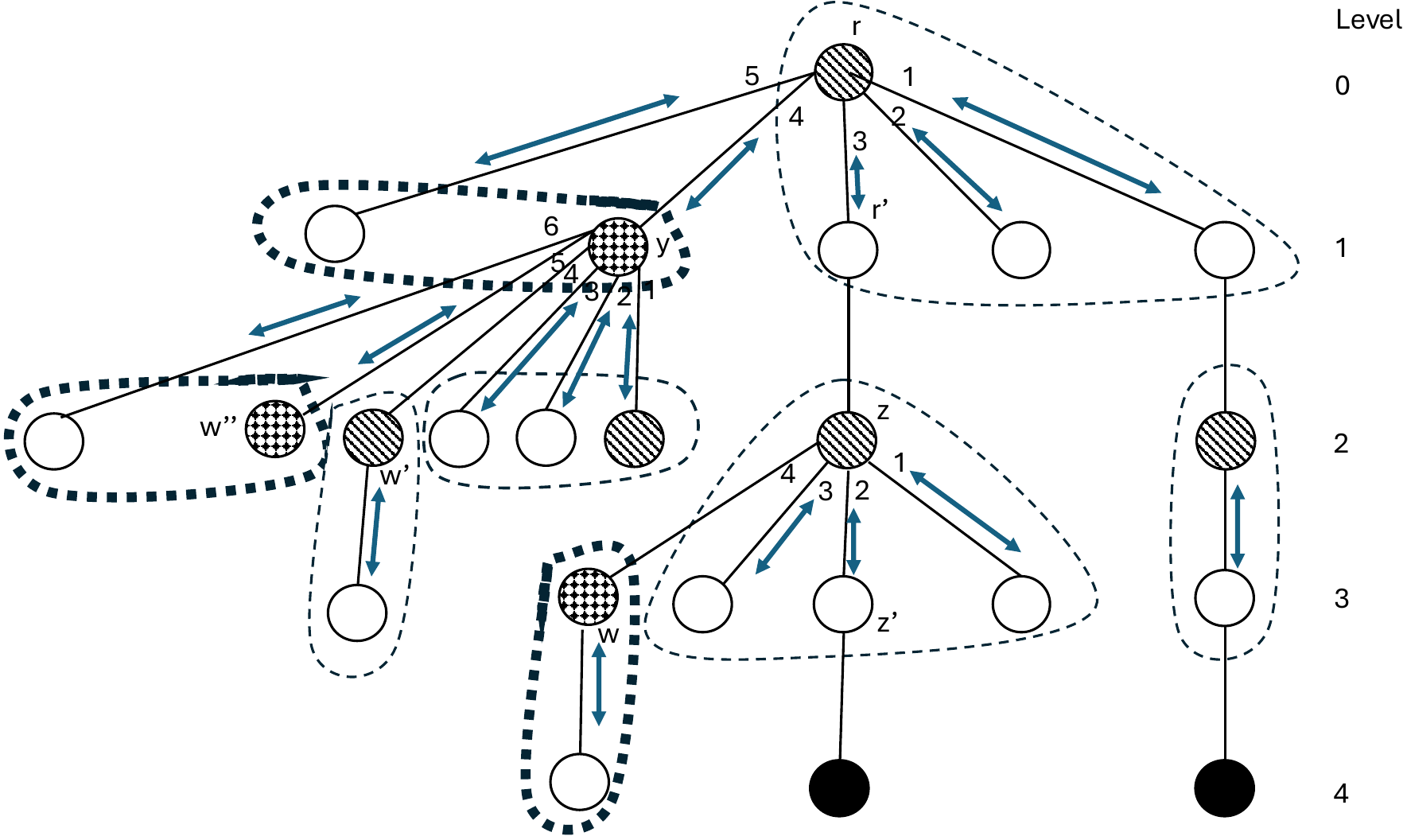}
    \caption{An illustration of how non-oscillating settlers become oscillating when DFS progresses adding more nodes in $T_{DFS}$. Three non-oscillating settlers ($y$, $w$, and $w''$) in Fig.~\ref{fig:oscillation1} now became oscillating (shown in the circle with diamonds) covering empty nodes in their respective groups.}
    \label{fig:oscillation2}
\end{figure}

\begin{figure}[!t]
    \centering
    \includegraphics[width=0.6\linewidth]{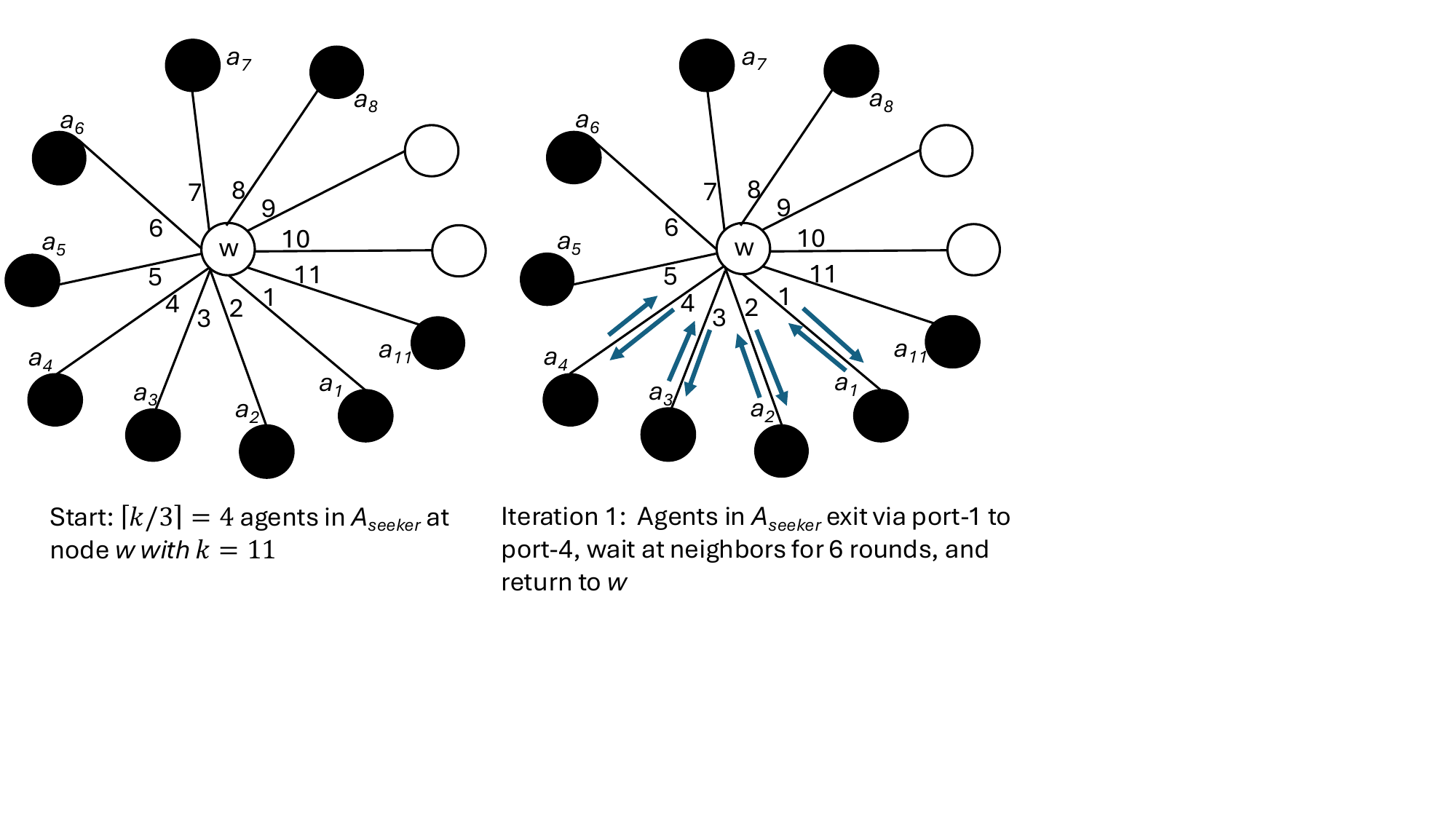}
    \includegraphics[width=0.27\linewidth]{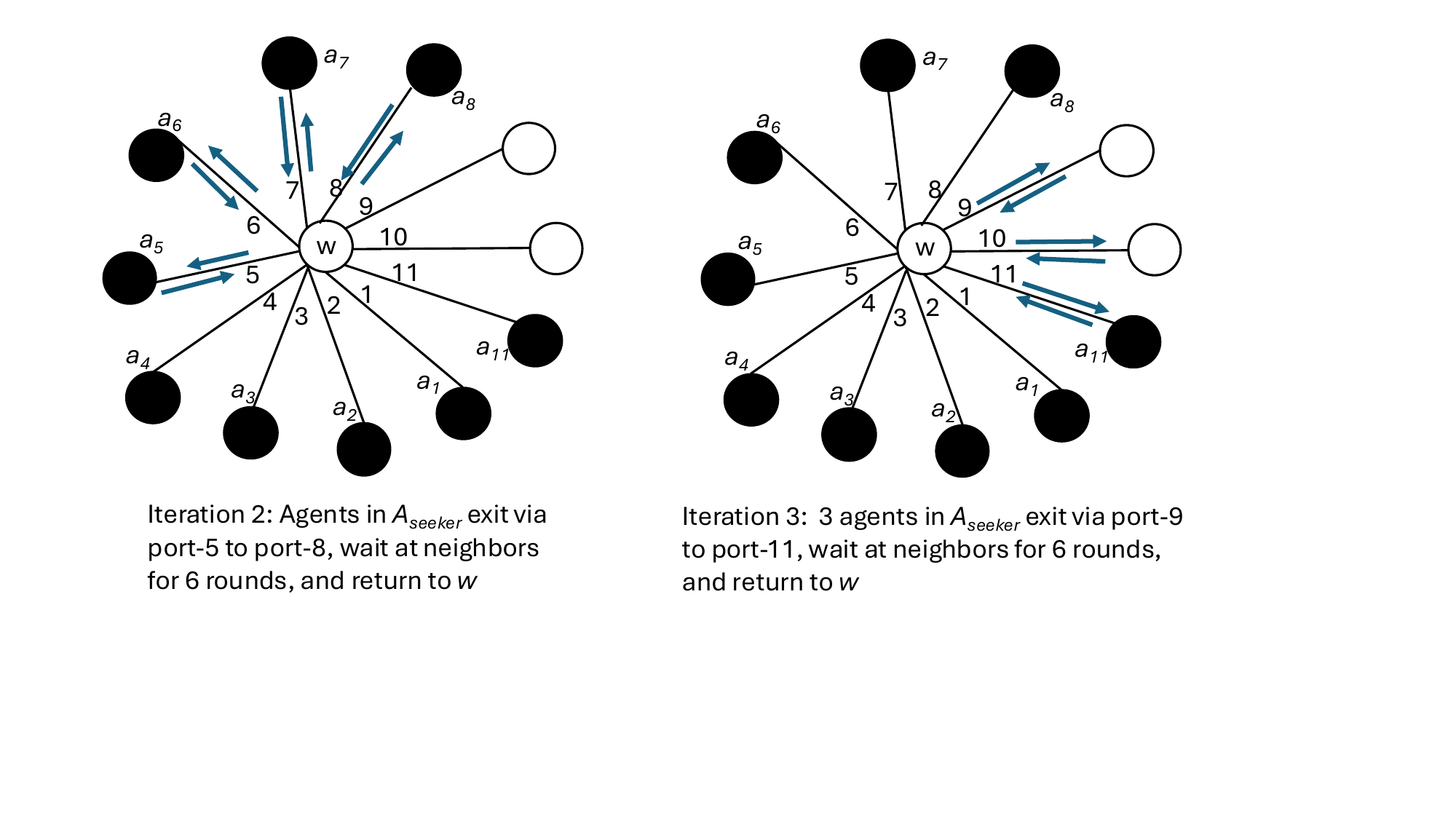}
    \includegraphics[width=0.27\linewidth]{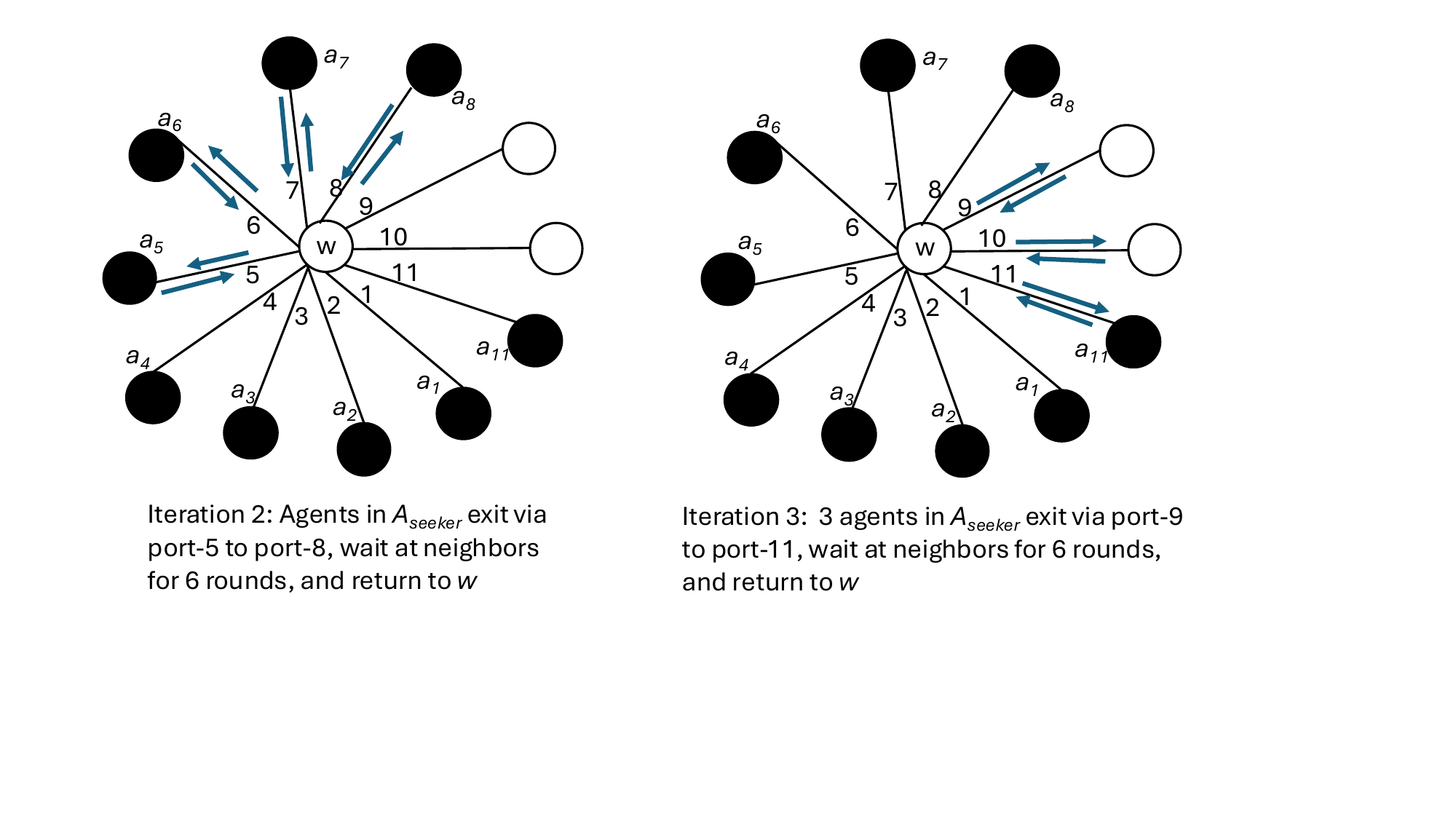}
    \includegraphics[width=0.6\linewidth]{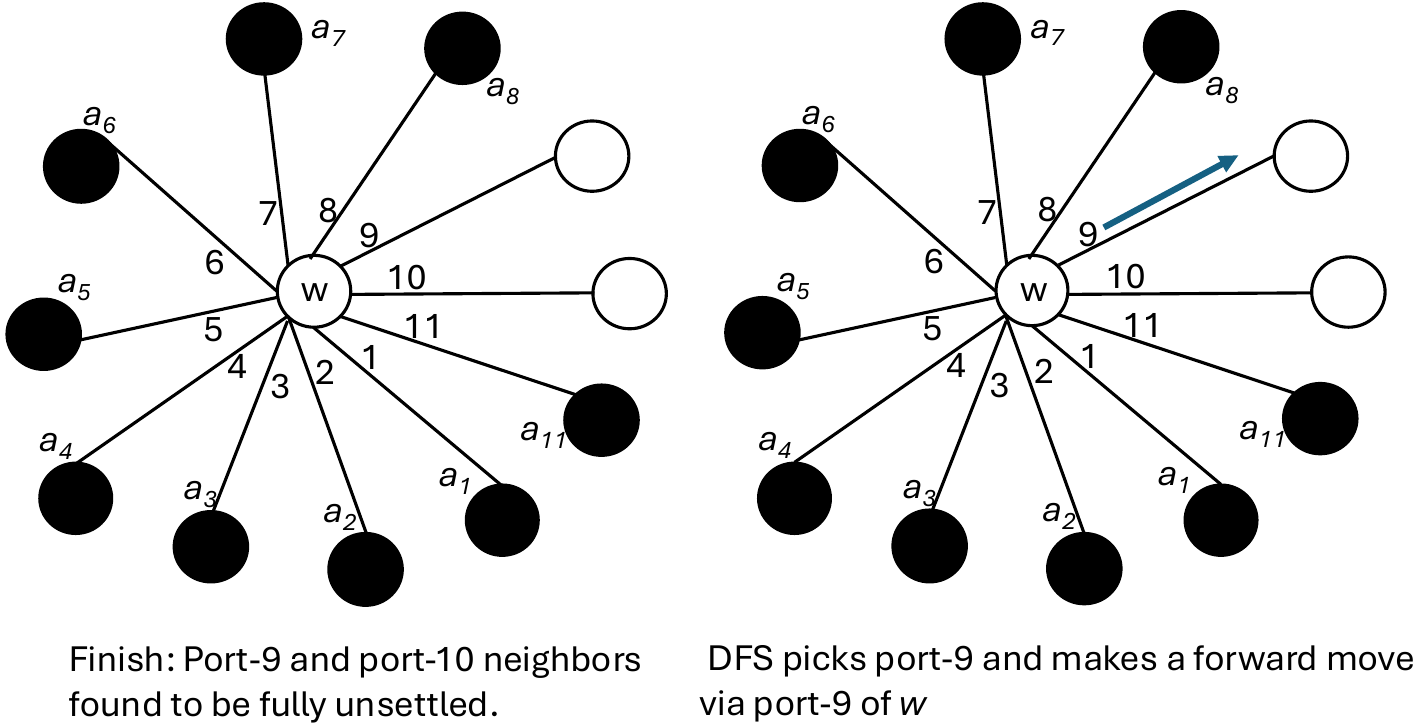}
    \caption{An illustration of how \texttt{Sync\_Probe()} (Algorithm \ref{alg: sync_super_probe()}) finds a fully unsettled neighbor at a node $w$ (if exists one). In the figure, port-9 and port-10 neighbors of $w$ were found to be fully unsettled. DFS does its forward move by exiting through port-9 of $w$.}
    \label{fig:sync-probe}
\end{figure}

\subsection{Empty Node Selection and Oscillation during DFS}
DFS starts from source $s\in V$ where all $k$ agents are initially located. The largest ID agent $a_{max}$ takes the responsibility of running DFS. DFS settles one agent at $s$. 
Regarding other fully unsettled nodes the DFS visits, it settles an agent on each such node that is at even depth in $T_{DFS}$ built by the DFS so far. If no branching node, Lemma \ref{lemma:kover3} immediately satisfies. If there is a branching node, the DFS adjusts while at the child node of that branching node by either putting an extra settler (branching node is at any even depth) or removing a settler (branching node is at odd depth and at least one of its children is a leaf of $T_{DFS}$). Fig.~\ref{fig:oscillation2} illustrates these ideas. Consider the agents at nodes $y,w',w''$. They were non-oscillating in Fig.~\ref{fig:oscillation1}. Suppose DFS made a forward move from node $y$ through port 6. The agent at $w''$ now sees there is one sibling node added in $T_{DFS}$ and there is no other agent to cover it. The agent at $w''$ becomes an oscillating settler to cover that new sibling node. 
Therefore, whether to put an extra settler or remove a settler whether a settler oscillates or not and when it oscillates can be done during DFS. 

\begin{observation}
{\it \texttt{Empty\_Node\_Selection()} can be implemented during DFS even when $T_{DFS}$ is monotonically growing.}
\end{observation}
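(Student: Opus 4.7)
The plan is to argue that every decision made by the centralized algorithm \texttt{Empty\_Node\_Selection()} can be translated into a local decision taken by the DFS agent at the moment it first visits, or backtracks through, the relevant node, using only information about the current $T_{DFS}$. First, the default rule — place a settler at every even-depth node — is inherently online: the DFS agent tracks its depth via a counter maintained along the parent pointers in $T_{DFS}$, and upon the first visit to a fully unsettled node it leaves a settler iff the depth is even; otherwise the node is left empty. This matches the first assignment phase of Algorithm~\ref{algorithm:emptynode}.

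Second, Case B (extra settlers on children of an even-depth non-leaf node whose child count exceeds $3$) can be folded into forward exploration. When DFS performs a forward move from an even-depth node $v$ through its $i$-th tree-child in order of discovery, the agent at $v$ maintains a local counter of $T_{DFS}$-children explored so far. Whenever this counter hits a value in $\{4,7,10,\dots\}$, the newly added odd-depth child is marked as a Case B extra-settler node; otherwise it is left empty per the default rule. This online grouping exactly reproduces the $\lceil (x-3)/3\rceil$ additions used in Algorithm~\ref{algorithm:emptynode}. For Case A (removing settlers from leaf-children of an odd-depth parent $lf_{i-1}$), I would defer the adjustment to backtrack time: a child $lf_i$ of $lf_{i-1}$ is confirmed to be a leaf of $T_{DFS}$ only when DFS backtracks from it without exploring any further child. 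Each time DFS returns from a confirmed leaf child to $lf_{i-1}$, the local leaf-child counter $x$ is incremented; whenever $x$ crosses a multiple of $3$, the settler is lifted from the most recently confirmed leaf child and returned to the pool of seekers carried by the DFS. This reproduces the $\lfloor 2x/3\rfloor$ removals of Algorithm~\ref{algorithm:emptynode}.

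\textbf{Main obstacle.} The hard part is not matching the final configuration but ensuring the $\lceil k/3 \rceil$-empty-node invariant of Lemma~\ref{lemma:kover3} holds at \emph{every} intermediate moment, so that $\lceil k/3 \rceil$ seekers are always available for synchronous probing. The delicate situation is a node that is a leaf of the current $T_{DFS}$ but later becomes internal when DFS forwards through one of its previously unexplored ports; its parent's leaf-child count then effectively decreases, and the Case B grouping at the newly internal node has to restart. I would handle this by showing that the local rules above re-adjust cleanly at each such transition: a lifted settler is either immediately re-placed on the new child (if the default rule demands it) or returned to the pool, and the add/remove operations commute in aggregate. The running invariant ``number of empty nodes in $T_{DFS}$ is at least $|T_{DFS}|/3$'' can then be verified to be preserved by each DFS step (forward or backtrack) by mirroring the branch-by-branch counting in the proof of Lemma~\ref{lemma:kover3}, applied to the partial tree constructed so far. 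Once this invariant is established, Lemma~\ref{lemma:kover3} transfers verbatim to the online setting, which yields the observation.
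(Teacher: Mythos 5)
Your proposal takes essentially the same route as the paper: settle at even-depth nodes by default as DFS discovers them, fold Case B into the forward move (the paper's \texttt{Forward\_Move()} settles an agent at the $x$-th child exactly when $x\ge 4$ and $x\bmod 3=1$), defer Case A to backtrack time when a child is confirmed to be a leaf (the paper's \texttt{Backtrack\_Move()}), and verify that the $\lceil k/3\rceil$-empty invariant holds throughout (the paper's Lemma~\ref{lemma:seeker}). One small slip: your Case A rule "lift a settler whenever the leaf-child counter $x$ crosses a multiple of $3$" removes only $\lfloor x/3\rfloor$ settlers, not the required $\lfloor 2x/3\rfloor$; the correct online rule is to remove the settler from the $x$-th confirmed leaf child whenever $x>1$ and $x\not\equiv 1\pmod 3$, keeping one settler per group of three, which is what the paper does. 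With that correction your argument matches the paper's.
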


\begin{algorithm}[t]
\caption{\texttt{Sync\_Probe()}}\label{alg: sync_super_probe()}
\SetKwInOut{Input}{Input}
\SetKwInOut{Output}{Output}

$(\alpha(w).next, \alpha(w).\checked) \gets (\bot, 0)$\; 

\While{$\alpha(w).\checked \neq \delta_w$}
{
    $\{a_1, a_2, \dots, a_x\}\in A_{seeker}$ with $x=|A_{seeker}|$\;
    $\Delta' \gets \min(x, \delta_w - \alpha(w).\checked)$\;
    \For{$j \gets 1$ \KwTo $\Delta'$}
    {
        assign $a_j$ to the port $N(w, j + \alpha(w).\checked)$\;
        $a_j$ leaves $w$ using its port (suppose node reach is $u_j$), waits at $u_j$ for 6 rounds, and returns to $w$\; 
    }
    \If{there exists $a_j$ that did not meet an agent $\alpha(u_i)$}
    {
        $\alpha(w).next \gets j + \alpha(w).\checked$\; 
        
        \textbf{break} the while loop\;
    }
    $\alpha(w).\checked \gets \alpha(w).\checked + \Delta'$\;
}
\end{algorithm}

\subsection{Synchronous Probing}
We present an algorithm \texttt{Sync\_Probe()} (pseudocode in Algorithm \ref{alg: sync_super_probe()}) which finishes probing at a node $w$ in {\sync} in 
$O(1)$ rounds. Fig.~\ref{fig:sync-probe} provides an illustration. The goal in \texttt{Sync\_Probe()} is to verify whether there is at least a neighbor of $w$ that is fully unsettled and return the port of $w$ leading to that node (if there exists one).  \texttt{Sync\_Probe()} uses agents in $A_{seeker}$ present on $w$ to search for a fully unsettled node in $N(w)$. 
We have that $|A_{seeker}|= \lceil k/3\rceil$.
Since there are $k\leq n$ agents, $T_{DFS}$ never has more than $k$ nodes (both empty and non-empty). Since $\geq \lceil k/3\rceil$ agents in $A_{seeker}$  run \texttt{Sync\_Probe()} at $w$, running \texttt{Sync\_Probe()} (at most) three times at $w$ should cover all neighbors in $N(w)$, taking total 6 rounds with 2 roundtrip round per iteration. 
However, since some nodes of $T_{DFS}$ may be empty (Algorithm \ref{algorithm:emptynode}), the empty neighbor in $N(w)$ found by \texttt{Sync\_Probe()} may be the one that was already visited. To avoid this, we ask our $\geq \lceil k/3\rceil$ seeker agents to return only after waiting at the neighboring nodes for 6 rounds. 
While waiting for 6 rounds, if that node was previously visited by DFS, it is guaranteed that a seeker agent will meet an oscillating settler doing its trip at that node in those 6 rounds. 

We implement \texttt{Sync\_Probe()} with a variable $\alpha(w).\checked \in [0, \delta_w]$. The variable $\alpha(w).\checked$ stores the most recently checked port number, initially, $\alpha(w).\checked$ is set to $0$.  $\alpha(w).\checked=l$ implies that the neighbors $N(w,1),N(w,2),\dots,N(w,l)$ are not fully unsettled. Let $x=|A_{seeker}|$. 
In the first iteration of \texttt{Sync\_Probe()}, the $\min\{x,\delta_w\}$ agents visit $\min\{x,\delta_w\}$ neighbors in parallel (one to one mapping), wait at the respective neighbors they reached for 6 rounds, and return to $w$ (Lines 5--7). 
If there is an agent in $A_{seeker}$ that does not find a settler at the neighbor of $w$ it visited, then the node visited by that agent must be fully unsettled. 
In such a case, the port used by one of those agents (picked smallest port neighbor in case of multiple) is stored in $\alpha(w).\nxt$, and the while loop terminates (Line 10). If all agents find a settler (no empty node), we repeat \texttt{Sync\_Probe()} until $\min\{k,\delta_w\}$ ports are checked. 
We have the following guarantees. 
\begin{lemma}
\label{lemma:syncprobe}
 {\it  
  At the end of \texttt{Sync\_Probe()} at a node $w\in V$, it is guaranteed that:
(i) if there exists a fully unsettled node in $N(w)$, then $N(w,\alpha(w).\nxt)$ is unsettled, and 
(ii) if there are no fully unsettled nodes in $N(w)$, then $\alpha(w).\nxt=\bot$ holds true.
\texttt{Sync\_Probe()} at a node $w$ finishes in  $O(1)$ rounds.
}
\end{lemma}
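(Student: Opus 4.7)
The plan is to split the lemma into two parts: a correctness argument establishing (i) and (ii), and a short complexity argument for the $O(1)$ round bound. Both rest on two previously established facts: the oscillation lemma, which guarantees that every empty node of $T_{DFS}$ covered by an oscillating settler is visited by that settler at least once in any window of $6$ rounds, and the bookkeeping $|T_{DFS}| \leq k$ together with $|A_{seeker}| \geq \lceil k/3 \rceil$.

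First I would prove correctness via a single ``meeting characterization.'' Fix any iteration and any dispatched agent $a_j$ sent to $u_j = N(w, j + \alpha(w).\checked)$; by the code, $a_j$ waits at $u_j$ for exactly $6$ rounds before returning to $w$. If $u_j$ is fully unsettled, $a_j$ meets no one. If $u_j$ is not fully unsettled, then $u_j \in T_{DFS}$, so by Lemma~\ref{lemma:oscillating} and the output of \texttt{Empty\_Node\_Selection()}, either a non-oscillating settler is permanently stationed at $u_j$ (meeting $a_j$ on arrival), or $u_j$ is empty but is covered by some oscillating settler whose round-robin trip of length $\leq 6$ must visit $u_j$ at least once during $a_j$'s wait. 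In both subcases, $a_j$ meets an agent. Thus ``$a_j$ returns alone'' is equivalent to ``$u_j$ is fully unsettled,'' which directly yields (i) via Line~10 and yields (ii) because, if no dispatched agent is ever alone across the entire while loop, every probed neighbor is non-fully-unsettled, the loop exits at $\alpha(w).\checked = \delta_w$, and $\alpha(w).\nxt$ remains $\bot$.

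Next I would establish the round bound. Each iteration of the while loop spends $1$ round stepping out, $6$ rounds waiting, and $1$ round returning, totalling $8$ rounds, and advances $\alpha(w).\checked$ by $\min\{\lceil k/3 \rceil, \delta_w - \alpha(w).\checked\}$ whenever it does not break. Since $|T_{DFS}| \leq k$, at most $\min\{k, \delta_w\}$ neighbors of $w$ can fail to be fully unsettled. Hence after at most three iterations either a fully unsettled neighbor has been found and the loop breaks, or (when $\delta_w \leq k$) all $\delta_w$ ports have been exhausted and the while condition fails. In every case at most three iterations execute, giving $O(1)$ total rounds.

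The most delicate step will be the ``meets within $6$ rounds'' assertion, because the phase of an oscillating settler's trip relative to the seeker's arrival is arbitrary; I would close this by invoking the oscillation-length lemma ($\leq 6$ rounds per round-robin trip) and observing that a settler whose trip has period at most $6$ visits each node it covers at least once in any window of $6$ consecutive rounds, independent of the relative phase. The rest of the argument is routine accounting on $\alpha(w).\checked$ and on the number of dispatched seekers per iteration.
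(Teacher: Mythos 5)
Your proof is correct and follows essentially the same route as the paper's: both hinge on the facts that a non-fully-unsettled neighbor must lie in $T_{DFS}$ and hence hosts either a stationary settler or an oscillating settler whose trip of length at most $6$ rounds intersects the seeker's $6$-round wait, and both bound the runtime by $3$ iterations of $8$ rounds each using $|A_{seeker}|=\lceil k/3\rceil$ and $|T_{DFS}|\leq k$. Your explicit ``returns alone iff fully unsettled'' biconditional is in fact slightly more careful than the paper's contradiction argument, which spells out only one direction and dismisses case (ii) as ``similar.''
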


\begin{proof}
    We prove the first two cases by contradiction. Consider first Case (i). Suppose there exists a fully unsettled node $u\in  N(w)$ which $\texttt{Sync\_Probe()}$ classifies as settled. For that to happen, $u$ must have been a node in $T_{DFS}$, and $u$ either has an agent settled or covered through oscillation by the agent at a node previously visited. This is a contradiction since $u$ was not previously visited by the DFS and hence cannot be in $T_{DFS}$. Therefore, $\texttt{Sync\_Probe()}$ cannot classify it as settled since there is no agent settled or covered it through oscillation.  Case (ii) follows from a similar argument.   

We now prove the runtime of $\texttt{Sync\_Prove()}$.
Since there are $k\leq n$ agents, the size of $T_{DFS}$, $|T_{DFS}|=k$.
Consider  
$\texttt{Sync\_Prove()}$ at node $w$. 
Irrespective of degree $\delta_w$ of $w$, there must be a neighbor in $N(w)$ that is fully unsettled found by $\texttt{Sync\_Probe()}$ after probing at most $k-1$ other nodes. Since $|A_{seeker}|= \lceil \frac{k}{3}\rceil$, (at most) $k-1$ neighbors of $w$ are probed in 3 repetitions of $\texttt{Sync\_Probe()}$. Each  $\texttt{Sync\_Probe()}$ finishes in 8 rounds, 6 rounds wait at the neighbor and 2 rounds roundtrip from $w$ to $w$'s neighbor. 
Therefore, $\texttt{Sync\_Probe()}$ at $w$ finishes in $24 = O(1)$ rounds.

\end{proof}

\subsection{Asynchronous Probing}
We present an algorithm \texttt{Async\_Probe()} (pseudocode in Algorithm \ref{algorithm:asyncprobe}) which finishes probing at a node $w$ in {\async} in $O(\log k)$ epochs. Fig.~\ref{fig:async-probe} provides an illustration. 
The goal in \texttt{Async\_Probe()} is to  verify whether there is at least a neighbor of $w$ that is fully unsettled and return the port of $w$ leading to that node (if there exits one).  \texttt{Async\_Probe()} uses agents present at $w$ as well as the settled helper agents from the previously visited non-empty neighbors in $N(w)$ while running \texttt{Async\_Probe()} at $w$ to search for a fully unsettled node in $N(w)$. 
Let $x=|\aset(w)\setminus\{\alpha(w)\}|$, i.e., there are $x$ agents at $w$ when \texttt{Async\_Probe()} is invoked, excluding the agent settled at $\alpha(w)$. We have that $x\geq 1$, otherwise the DFS is complete and dispersion is achieved at $w$.

\begin{algorithm}[t]
\caption{\texttt{Async\_Probe()}}
\label{algorithm:asyncprobe}
\SetKwInOut{Input}{Input}
\SetKwInOut{Output}{Output}
\SetKwFunction{FProbe}{Probe}  

\SetKwFunction{FProbe}{\texttt{Async\_Probe()}}
\SetKwProg{Fn}{Function}{:}{end}

Let $w \gets \nu(a_{\texttt{max}})$\;
$(\alpha(w).next, \alpha(w).\texttt{checked}) \gets (\bot, 0)$\; 
$A_{guest}(w)\leftarrow \emptyset$\;
\While{$\alpha(w).\texttt{checked} \neq \delta_w$}
{
    let $a_1, a_2, \dots, a_x$ be the agents in $A(w) \setminus \{\alpha(w)\}$\;
    $\Delta' \gets \min(x, \delta_w - \alpha(w).\texttt{checked})$\;
    \For{$i \gets 1$ \KwTo $\Delta'$}
    {
        assign $a_i$ to the port $i + \alpha(w).\texttt{checked}$ of $w$ leading to neighboring node $u_i$\;
        $a_i$ makes a round trip: $w \to u_i \to w$\;
        \If{$a_i$ finds a settler at $u_i$}
        {
            it brings the settler $\alpha(u_i)$ to $w$\;
            $A_{guest}(w) \leftarrow A_{guest}(w) \cup \{\alpha(u_i)\}$\;
        }
    }
    \If{$\exists$ $a_i$ that did not bring $\alpha(u_i)$ to $w$}
    {
        $\alpha(w).next \gets i + \alpha(w).\texttt{checked}$\; 
        \textbf{break} the while loop\;
    }
    $\alpha(w).\texttt{checked} \gets \alpha(w).\texttt{checked} + \Delta'$\;
}
\If{$A_{guest}(w)\neq \emptyset$}
{
\texttt{Guest\_See\_Off()}\;
}
\end{algorithm}

\begin{algorithm}[t]
\caption{\texttt{Guest\_See\_Off()}}
\label{algorithm:guestseeoff}
\SetKwInOut{Input}{Input}
\SetKwInOut{Output}{Output}

\SetKwFunction{FProbe}{\texttt{Async\_Probe()}}
\SetKwProg{Fn}{Function}{:}{end}

\While{$A_{guest}(w) \neq \emptyset$}
{
    \If{$|A_{guest}(w)|=1$}
    {
        $\alpha(w)$ and the only agent $u\in A_{guest}(w)$  leave $w$ from the port of $w$ from which $u$ entered $w$ for the first time during  \texttt{Async\_Probe()}\;

        $\alpha(w)$ returns to $w$ after both $\alpha(w)$ and $u$ reached the neighbor node\;
    }
    \Else{
    order the agents $gw_1,gw_2,\ldots,gw_{|A_{guest}(w)|}$ in $A_{guest}(w)$ in an increasing order of their IDs\;
    
    pair the agents $(gw_i,gw_{i+1})$ from the beginning (in total $\lfloor \frac{|A_{guest}(w)|}{2}\rfloor$ pairs with at most one left unpaired)\;

    pair $(gw_i,gw_{i+1})$ leave $w$ from port of $w$ from which $gw_i$ entered $w$ for the first time during \texttt{Async\_Probe()}\;
    $gw_{i+1}$ returns to $w$ after both $gw_i,gw_{i+1}$ reached the neighbor node of $w$\;
    
    \If {$\lfloor \frac{|A_{guest}(w)|}{2}\rfloor$ agents returned $w$}
    {
    $A_{guest}(w)\leftarrow \lfloor \frac{|A_{guest}(w)|}{2}\rfloor$ agents available at $w$ and one possibly unpaired\;  
    }
    }
    }
\end{algorithm}

The $\min\{x,\delta_w\}$ agents visit in parallel $\min\{x,\delta_w\}$ neighbors and then return to $w$.  Each agent brings back the agent $\alpha(u_i)$ settled (if exists) at the neighbor $u_i$ it visited ($\alpha(u_i)$ is the settled helper agent). If no such agent $\alpha(u_i)$ at $u_i$, then that neighbor $u_i$ must be fully unsettled. This property is guaranteed through \texttt{Guest\_See\_Off()} (pseudocode in Algorithm \ref{algorithm:guestseeoff}). If $w$ is the first node where {\texttt{Async\_Probe()} is executing, it is indeed true that an empty neighbor is fully unsettled. If $w$ is not the first node (let the previous node be $w'$, then \texttt{Guest\_See\_Off()} executed at that node guarantees that a neighbor found empty  while {\texttt{Async\_Probe()} is running at $w$ is indeed fully unsettled.  If a neighbor in $N(w)$ is found empty, the port of $w$ leading to that neighbor is stored in $\alpha(w).\nxt$ and {\texttt{Async\_Probe()} is complete. 
Otherwise, if all $x$ agents bring back one agent each (i.e., no visited neighbor is fully unsettled), then there are $2x$ agents on $w$, excluding $\alpha(w)$. In the second iteration of \texttt{Async\_Probe()}, these $2x$ agents visit the next $2x$ neighbors in search of fully unsettled neighbors. As long as no fully unsettled neighbor is discovered, the number of agents on $w$, excluding $\alpha(w)$, doubles with each iteration of \texttt{Async\_Probe()}. \texttt{Async\_Probe()} stops as soon as an empty neighbor is found or all neighbors of $w$ are visited. 

\begin{lemma}
\label{lemma:asyncprobe}
{\it At the end of \texttt{Async\_Probe()} at each node $w$, it is guaranteed that:
(i) if there exists a fully unsettled node in $N(w)$,  $N(w,\alpha(w).\nxt)$ leads to that node,  
(ii) if there are no fully unsettled nodes in $N(w)$, then $\alpha(w).\nxt=\bot$ holds true. 
\texttt{Async\_Probe()} at $w$ finishes in $O(\log k)$ epochs. 
}
\end{lemma}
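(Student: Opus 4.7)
\begin{prf*}[Proof proposal]
The plan is to establish parts (i)--(ii) via a single global invariant, and then bound the runtime in (iii) by a doubling/halving argument over the iterations of the outer \texttt{while} loop and of \texttt{Guest\_See\_Off()}.

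For correctness, the key invariant I would maintain is: at the moment \texttt{Async\_Probe()} is invoked at any node $w$, every agent that has ever been settled is physically located at its home node. Under this invariant, when an agent $a_i$ is dispatched through port $i+\alpha(w).\checked$ to neighbor $u_i$, the outcome of its round trip faithfully reveals whether $u_i$ is fully unsettled: if $u_i$ has ever been settled by some earlier DFS step, then by the invariant its settler is physically at $u_i$ and will be brought back to $w$; conversely, if no settler is found at $u_i$, then $u_i$ has never been entered by the DFS and is therefore fully unsettled. This immediately gives both (i) and (ii), since the loop either breaks and records $\alpha(w).\nxt$ on the first port witnessing an empty neighbor, or exhausts all $\min\{k,\delta_w\}$ relevant ports and leaves $\alpha(w).\nxt = \bot$ (recall that $T_{DFS}$ has at most $k$ nodes, so checking beyond the first $k$ ports cannot reveal a new empty neighbor).

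The invariant itself I would prove by induction on the DFS forward-move number. The base case is the starting node, where nothing has yet been settled. For the inductive step, suppose the invariant holds when \texttt{Async\_Probe()} begins at the previous node $w'$. The invocation at $w'$ may pull up to $\delta_{w'}$ settled agents as guests into $A_{guest}(w')$; these are exactly the agents that can violate the invariant if the DFS moves forward while they are still away from home. However, \texttt{Guest\_See\_Off()} is executed before the DFS leaves $w'$, and by inspection of Algorithm~\ref{algorithm:guestseeoff} each guest returns to its home port and is left there (its partner returns to $w'$), so when the DFS finally makes its forward move to $w$ every settled agent is again at its home node. Thus the invariant carries over to $w$. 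This inductive argument is the one nontrivial correctness step, and it is the place where asynchrony would otherwise break things---precisely the failure mode warned about in Section~\ref{section:techniques} and neutralized by \texttt{Guest\_See\_Off()}.

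For the runtime in (iii), I would analyze the two phases separately. In the outer \texttt{while} loop, the number of agents available for probing at $w$ starts at $x\geq 1$ and doubles after every unsuccessful iteration (since each probing agent returns accompanied by a settled helper). Hence after $O(\log \min\{k,\delta_w\}) = O(\log k)$ iterations either a fully unsettled neighbor is discovered, or a total of $\min\{k,\delta_w\}$ ports have been checked, whichever comes first. Each iteration consists of a single parallel round trip from $w$ to the neighbors and back, which takes $O(1)$ epochs. For \texttt{Guest\_See\_Off()}, the symmetric argument applies: each iteration sends the guests home in pairs, with one of each pair returning to $w$, so $|A_{guest}(w)|$ halves every epoch-bounded iteration; therefore the see-off finishes in $O(\log k)$ epochs as well. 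Summing, \texttt{Async\_Probe()} at $w$ completes in $O(\log k)$ epochs. The main obstacle here is purely bookkeeping---confirming that each ``iteration'' in both loops really is $O(1)$ epochs in the asynchronous model, which follows from the fact that $w$ can count the returning agents and only advance once the expected count is reached, so each iteration is bounded by a constant number of epochs.
\end{prf*}
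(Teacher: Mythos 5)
Your proposal is correct and follows essentially the same route as the paper: the runtime is the identical doubling argument over iterations of the outer loop (plus the halving argument for \texttt{Guest\_See\_Off()}, which the paper isolates as Lemma~\ref{lemma:guestseeoff}), and the correctness of (i)--(ii) rests on the same key point the paper makes in prose around the lemma, namely that \texttt{Guest\_See\_Off()} returns every settled helper to its home before the DFS advances, so an empty neighbor really is fully unsettled. Your explicit ``all settlers are at home when probing starts'' invariant, proved by induction on forward moves, is a cleaner formalization of what the paper handles by appeal to the contradiction argument of Lemma~\ref{lemma:syncprobe} together with the surrounding discussion, but it is not a different proof idea.
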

\begin{proof}
Cases (i) and (ii) follow similarly as in Lemma \ref{lemma:syncprobe}. 
For the runtime, each iteration of \texttt{Async\_Probe()} takes exactly two epochs.
    Since there are at most $\min\{k,\Delta\}$ settled nodes in $N(w)$ and in each subsequent operation double the number of neighbors in $N(w)$ can be probed, after running \texttt{Async\_Probe()} for at most $O(\log \min\{k,\Delta\})=O(\log k)$ times, either a fully unsettled neighbor node of $w$ will be found, or the search will be concluded with no fully unsettled neighbor. 
\end{proof}
\begin{figure}[!t]
    \centering
    \includegraphics[width=0.28\linewidth]{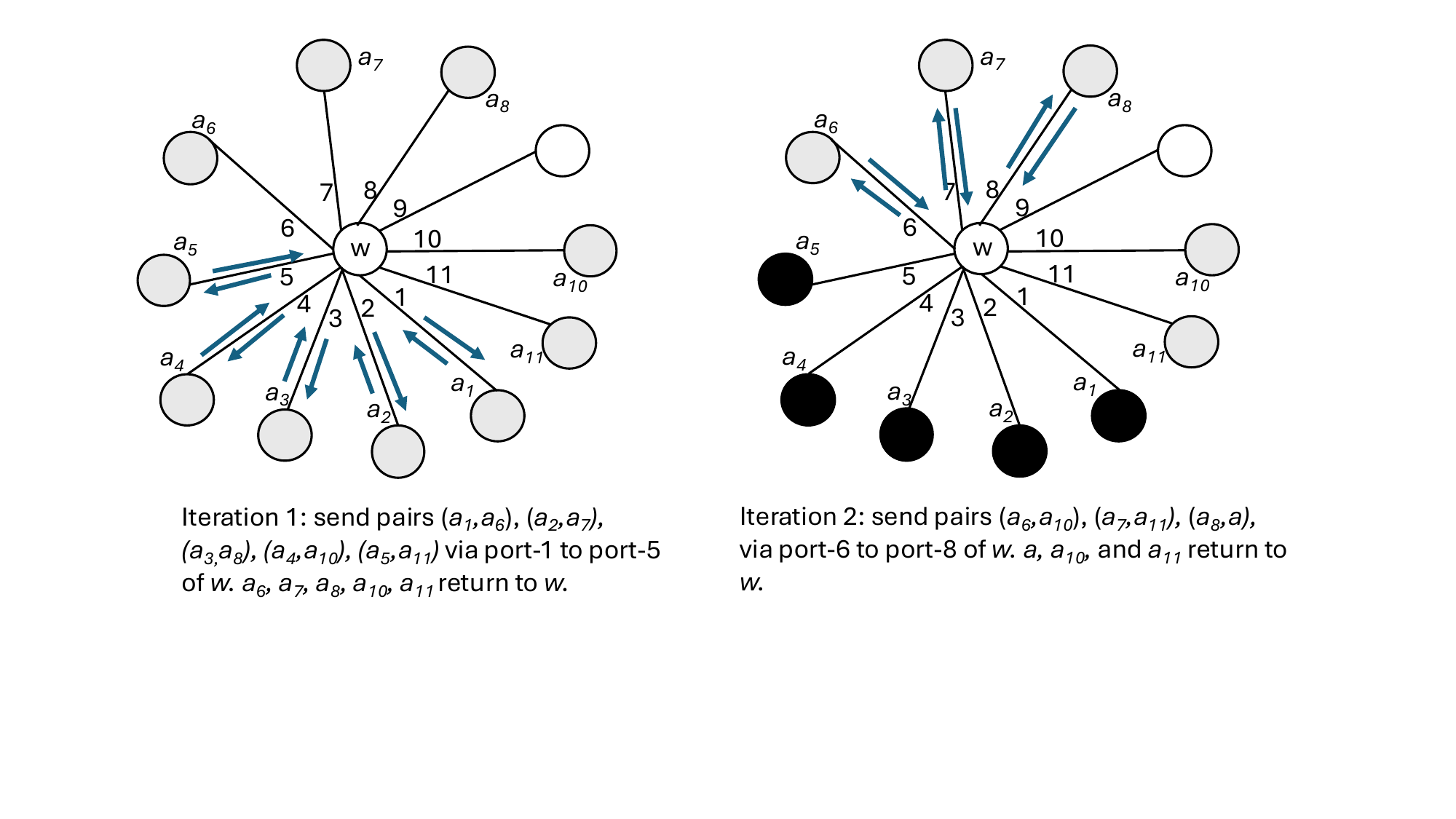}
    \includegraphics[width=0.3\linewidth]{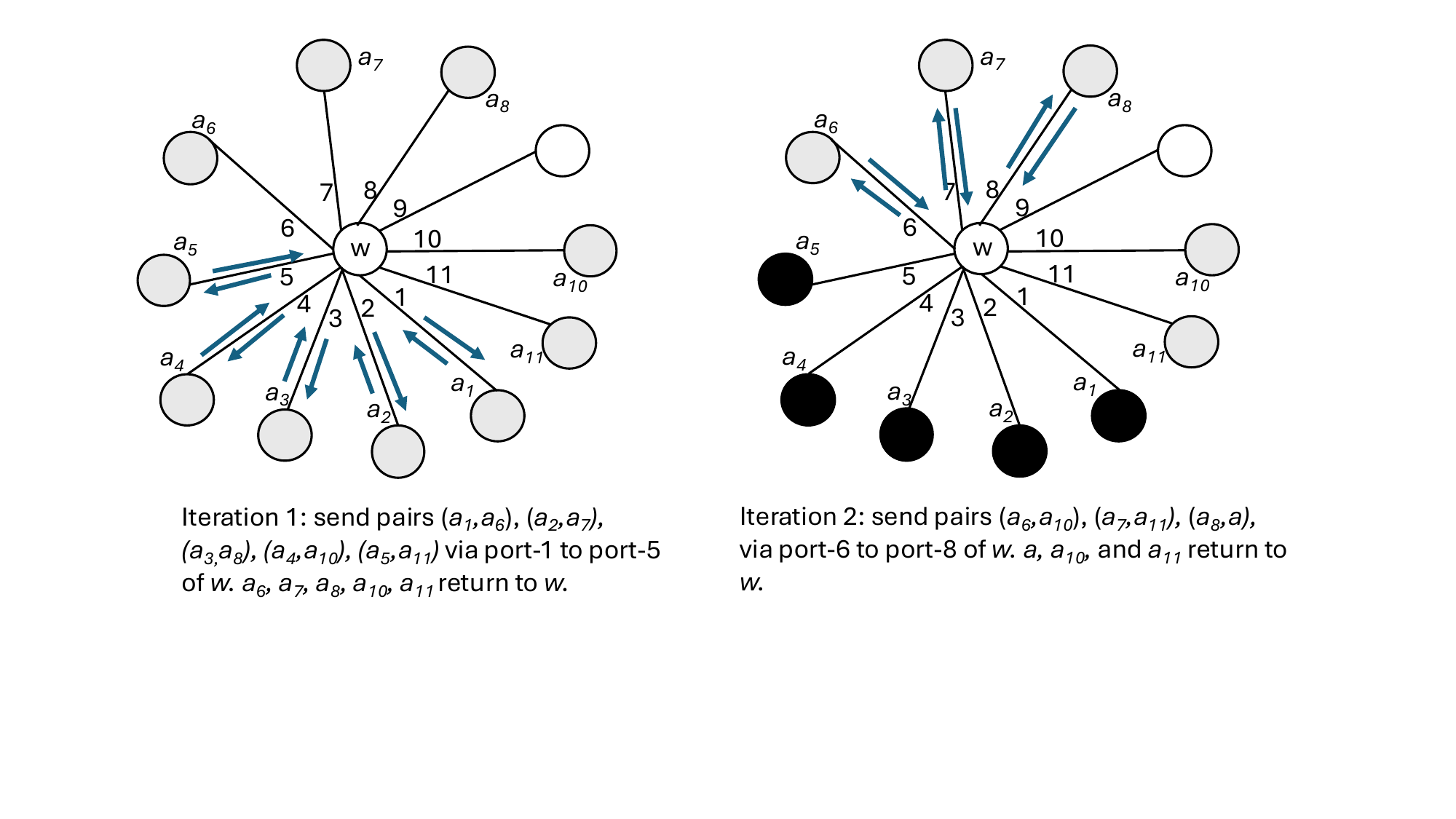}
    \includegraphics[width=0.28\linewidth]{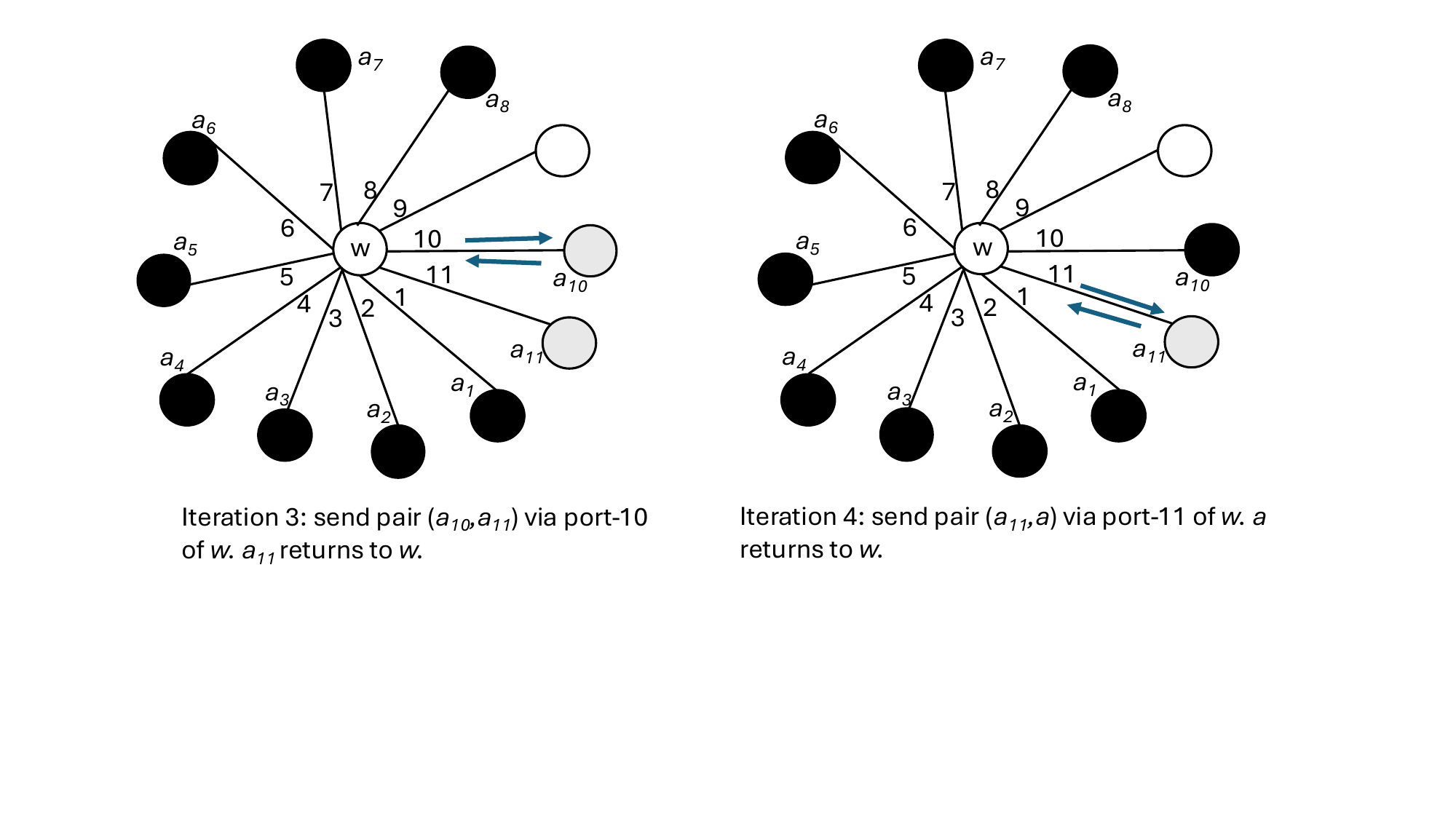}
    \includegraphics[width=0.3\linewidth]{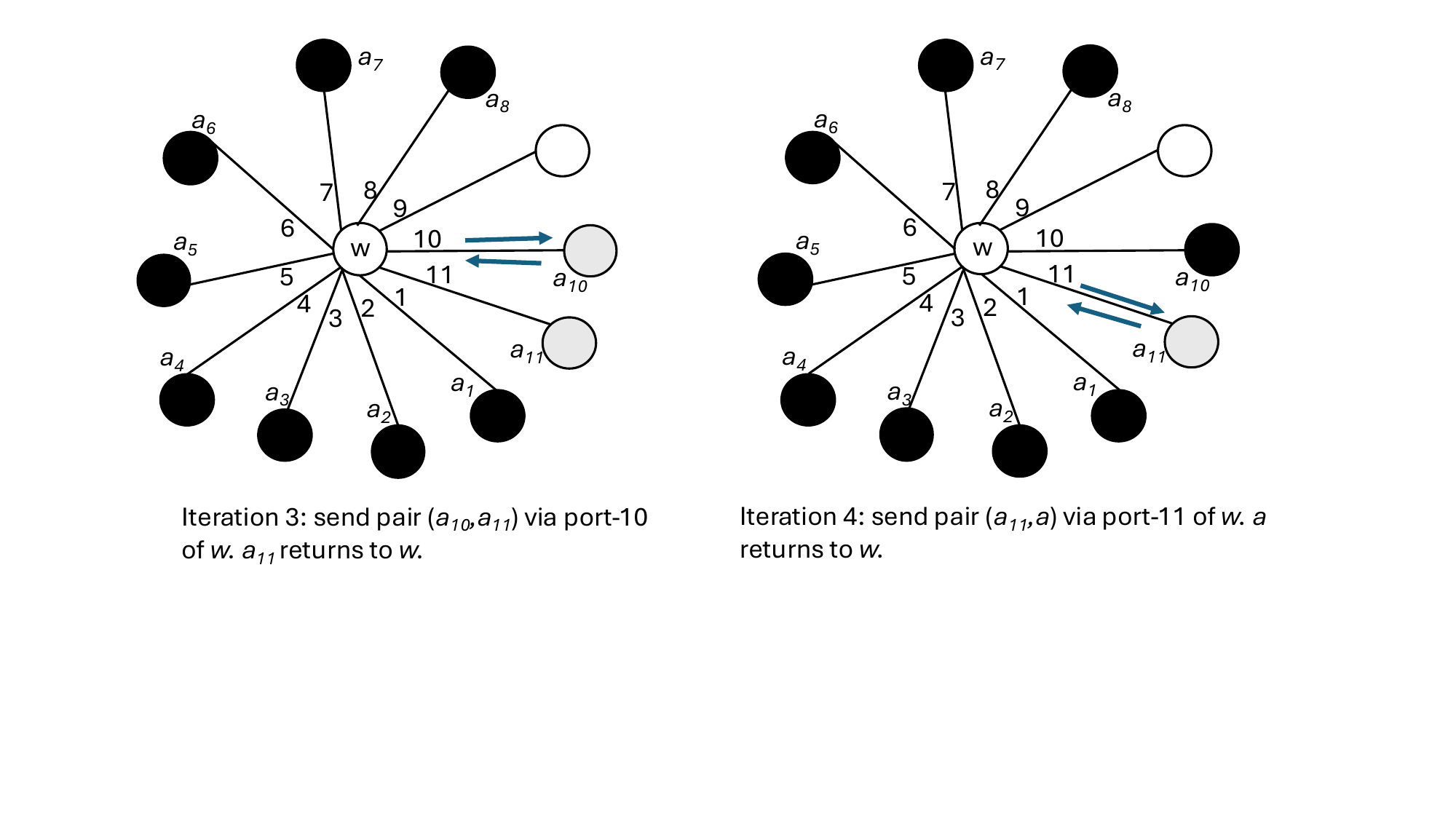}
    \includegraphics[width=0.28\linewidth]{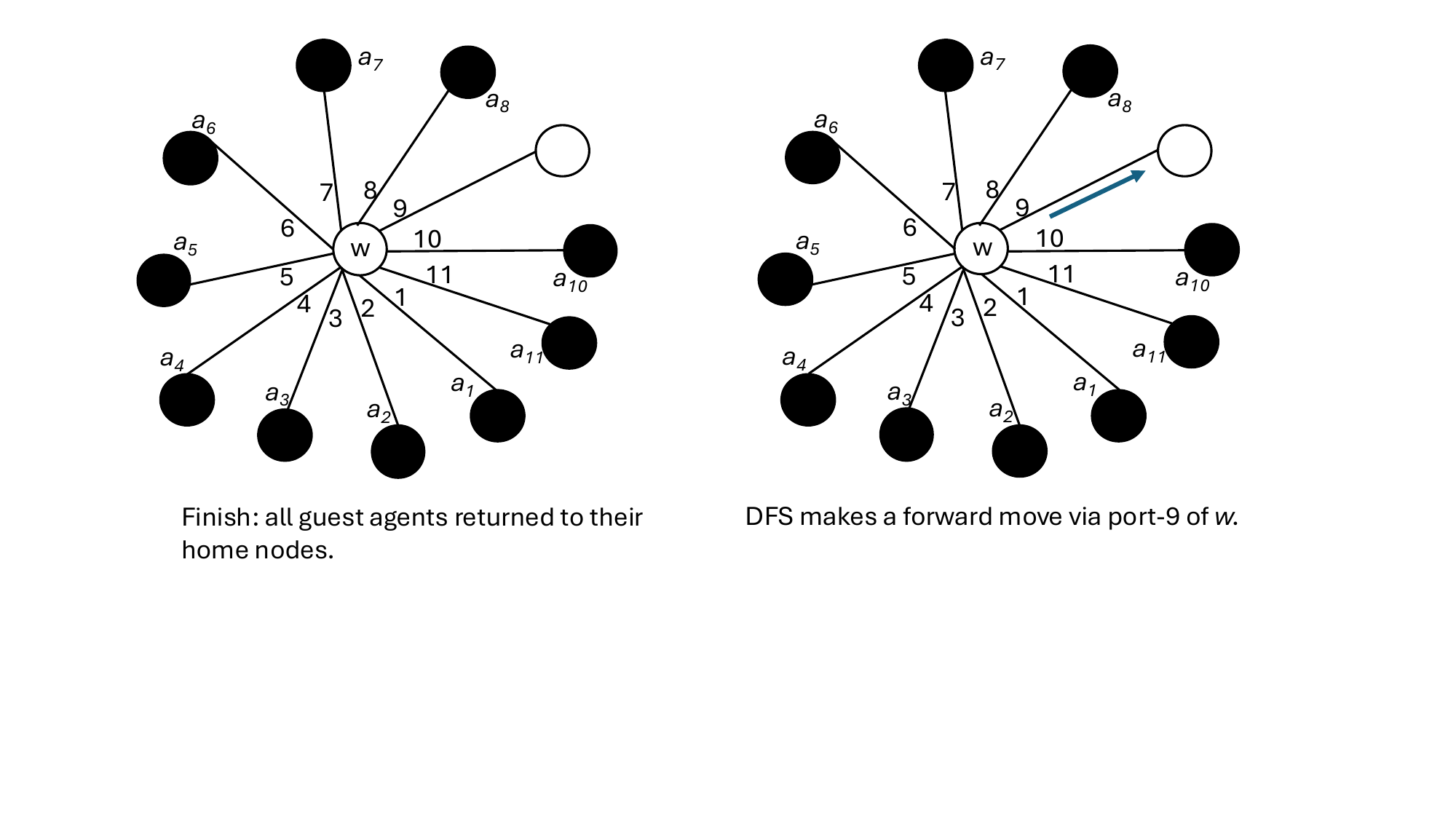}
    \includegraphics[width=0.28\linewidth]{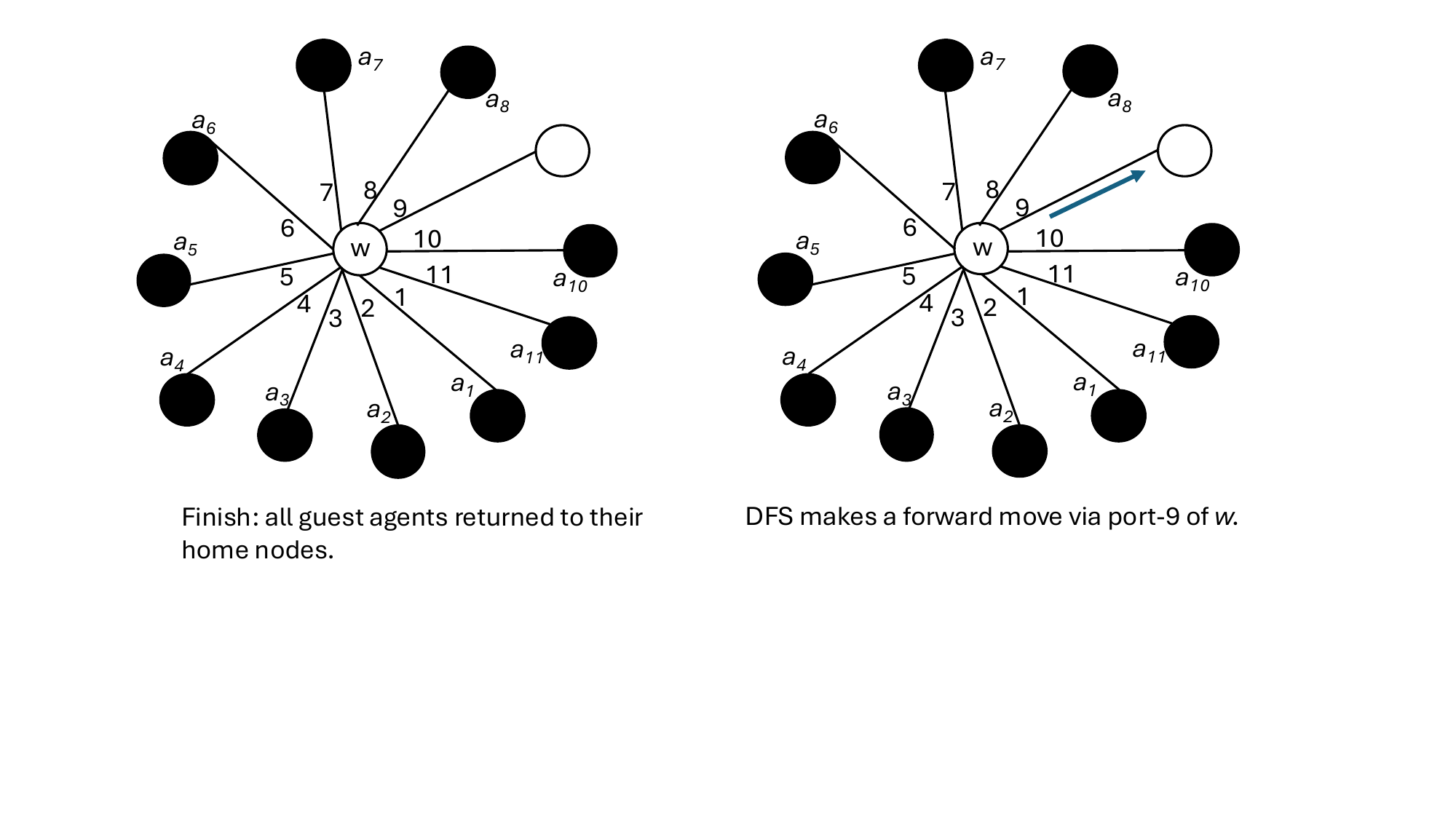}
    \caption{An illustration of how agents brought at node $w$ during \texttt{Async\_Probe()} (Algorithm \ref{algorithm:asyncprobe}) are sent back to their homes executing \texttt{Guest\_See\_Off()} (Algorithm \ref{algorithm:guestseeoff}). DFS does its forward move via exiting through port-9 of $w$ (which was found to be fully unsettled during \texttt{Async\_Probe()}). If no neighbor was found fully unsettled, DFS would go to parent of $w$ via a  backtrack move.}
    \label{fig:guestseeoff}
\end{figure}

After {\texttt{Async\_Probe()} completes at $w$, we need to send back each settled helper agent $\alpha(u_i)$ brought to $w$ to help with probing back to their homes. Only after that the agents in $|\aset(w)\setminus\{\alpha(w)\}|$ can exit $w$. This guarantees that when  {\texttt{Async\_Probe()} executes at the next node, due to asynchrony, the home nodes of the guests at $w$ were not found as empty.  
This sending back of guests to their home nodes is done via \texttt{Guest\_See\_Off()} (Algorithm \ref{algorithm:guestseeoff}). 
Let $A_{guest}(w)$ be the set of such settler helper agents brought to $w$. $|A_{guest}(w)|\leq \min\{k,\Delta\}$.  
For this implementation, 
we pair the agents  in $A_{guest}(w)$. Let $a,b$ be a pair (for odd $|A_{guest}(w)|$, a non-paired agent remains at $w$).  We send $a,b$ to the home of $a$ using the port of $w$ from which $a$ entered $w$ from its home $h(a)$ during {\texttt{Async\_Probe()}}. Each agent in $A_{guest}(w)$ stores such information in its memory. $b$ returns to $w$ after making sure $a$ reached its home $h(a)$. 
After one agent from each pair returns to $w$, we have $|A_{guest}(w)|=|A_{guest}(w)|/2.$ We then again make pairs among the agents in $A_{guest}(w)$ and settle half of the agents to their homes.
This process continues until either $A_{guest}(w)=\emptyset$ (even) or $|A_{guest}(w)|=1$ (odd). For the case of $|A_{guest}(w)|=1$, $\alpha(w)$ is used as a pair to see off the only agent in $A_{guest}(w)$ to its home.  This careful implementation of settling the guests to their homes plays a crucial role in guaranteeing that the empty neighbors found in {\texttt{Async\_Probe()}} are in fact the fully unsettled empty neighbors. 

\begin{lemma}
\label{lemma:guestseeoff}
{\it At the end of \texttt{Guest\_See\_Off()} at node $w$, each agent brought to $w$ goes backs to its home node and
\texttt{Guest\_See\_Off()} at $w$ finishes in $O(\log k)$ epochs. 
}
\end{lemma}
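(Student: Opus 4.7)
The plan is to establish both correctness (every guest returns to its home) and the $O(\log k)$-epoch bound by induction on the iterations of the outer while loop of \texttt{Guest\_See\_Off()}. The backbone fact I will invoke is that during \texttt{Async\_Probe()} each guest $a\in A_{guest}(w)$ recorded, upon its first entry to $w$, the port of $w$ leading from its home node $h(a)$; traversing this port from $w$ therefore always returns $a$ to $h(a)$.

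For correctness, consider a generic iteration with $|A_{guest}(w)|\ge 2$. Guests are ordered by ID, paired into $\lfloor |A_{guest}(w)|/2\rfloor$ pairs $(gw_i,gw_{i+1})$, and each pair is dispatched across the stored home-port of $gw_i$. Because the algorithm explicitly requires $gw_{i+1}$ to start its return only after both members have reached the neighbor, by the end of the iteration $gw_i$ is physically at $h(gw_i)$ and, since it is never touched again by \texttt{Guest\_See\_Off()}, it stays there. The partners $gw_{i+1}$ together with a possible unpaired guest form the updated $A_{guest}(w)$. In the singleton case $|A_{guest}(w)|=1$, the agent $\alpha(w)$ plays the role of escort for the last remaining guest and then returns to $w$. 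A straightforward induction on the iteration index then yields that every originally-collected guest ultimately arrives and stays at its home node.

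For the runtime, each iteration takes at most two epochs: one during which every paired agent traverses the edge $w \to h(gw_i)$, and one during which the designated returning agent (or $\alpha(w)$ in the singleton case) walks back to $w$; the epoch definition guarantees that both members of a pair have completed their move within a single epoch. Since $|A_{guest}(w)| \le \min\{k,\Delta\} \le k$ at entry and is at most halved per iteration (with at most one carried-over unpaired guest), a standard halving argument bounds the number of iterations by $O(\log k)$, which gives a total of $O(\log k)$ epochs.

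The main obstacle I expect to handle carefully is formalizing the asynchronous synchronization hidden in the clause ``$gw_{i+1}$ returns only after both reached the neighbor.'' Under asynchrony, $gw_{i+1}$ may arrive at $h(gw_i)$ before or after $gw_i$, so it must stall at $h(gw_i)$ until $gw_i$ is co-located there; the local-communication model makes this detection possible, and by the definition of an epoch the stall costs at most one additional epoch and is therefore absorbed into the $O(1)$ per-iteration bound. Verifying that this overhead, together with the rounding-induced carryover of an unpaired guest, does not inflate the iteration count beyond $O(\log k)$ is where the bulk of the rigor will be concentrated.
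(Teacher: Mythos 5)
Your proof is correct and follows essentially the same route as the paper's: guests return home because each remembers the port of $w$ through which it first entered, each pairing iteration halves $|A_{guest}(w)|$ and costs $O(1)$ epochs, and $|A_{guest}(w)|\le\min\{k,\Delta\}$ gives $O(\log k)$ iterations. The extra care you take with the asynchronous wait of $gw_{i+1}$ at the neighbor is a reasonable elaboration of what the paper leaves implicit, but it does not change the argument.
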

\begin{proof}
Let $u_i$ be a neighboring node of $w$. Agent $\alpha(u_i)$ while brought to $w$ (the first time coming to $w$) remembers the port of $w$ from which it entered $w$. When it is sent back during  {\texttt{Guest\_See\_Off()}}, it uses the same port to exit $w$. And hence, $\alpha(u_i)$ reaches $u_i$. 

Regarding runtime, we have that $|A_{guest}(w)|\leq \min\{k,\Delta\}$. Each iteration of {\texttt{Guest\_See\_Off()}} settles $\lfloor \frac{|A_{guest}(w)|}{2}\rfloor$ guests at node $w$ to their home nodes.  Therefore, executing {\texttt{Guest\_See\_Off()}} for $\lceil \log |A_{guest}(w)| \rceil +1$ iterations settles all guests in  $A_{guest}(w)$ to their home nodes.  It is easy to see that each iteration of {\texttt{Guest\_See\_Off()}} finishes in 2 epochs. Therefore, the total time is $O(\log \min\{k,\Delta\})=O(\log k)$ epochs. 
\end{proof}

\begin{figure}[!t]
    \centering
    \includegraphics[width=0.26\linewidth]{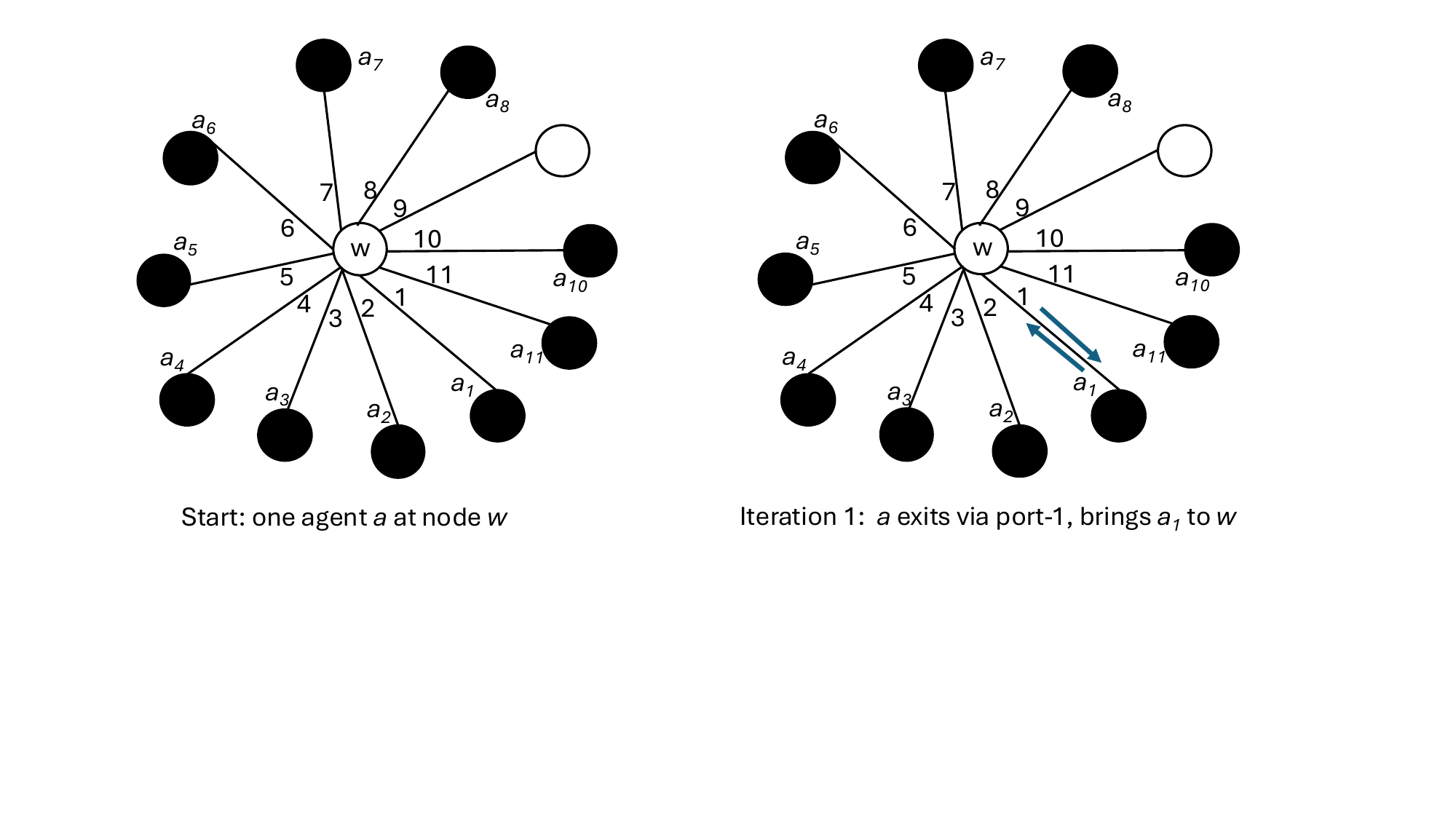}
    \includegraphics[width=0.3\linewidth]{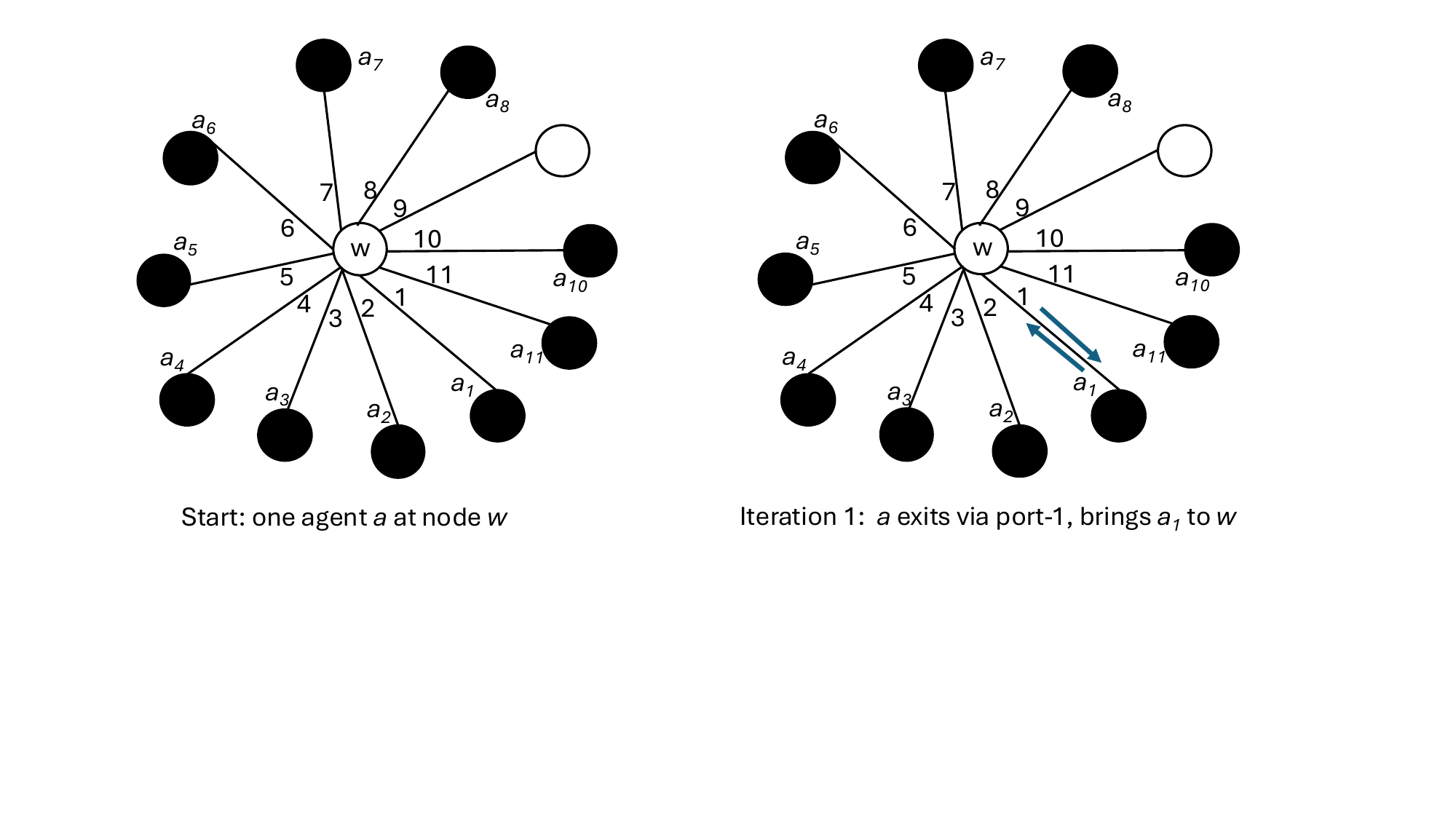}
    \includegraphics[width=0.25\linewidth]{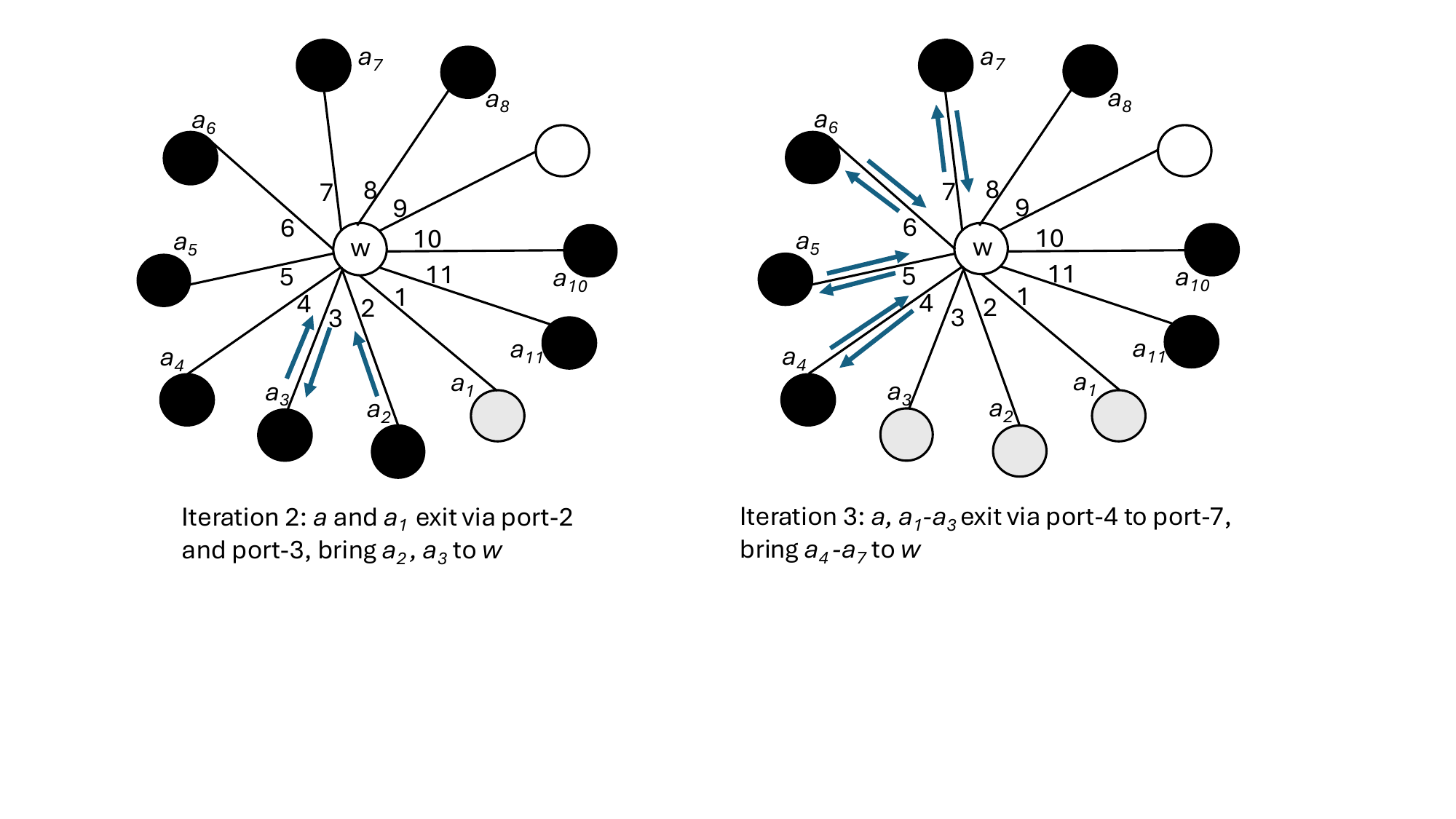}
    \includegraphics[width=0.3\linewidth]{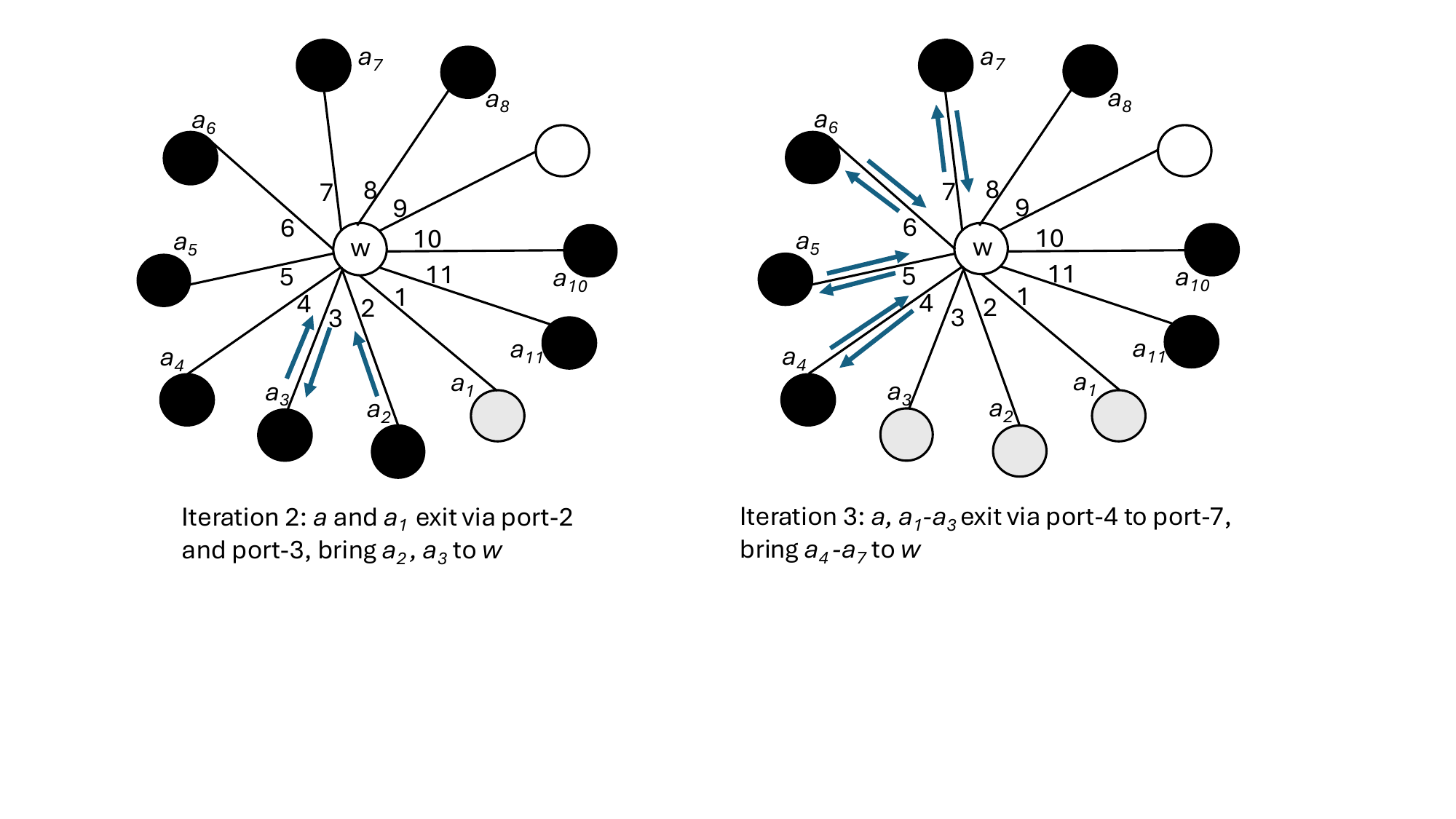}
    \includegraphics[width=0.25\linewidth]{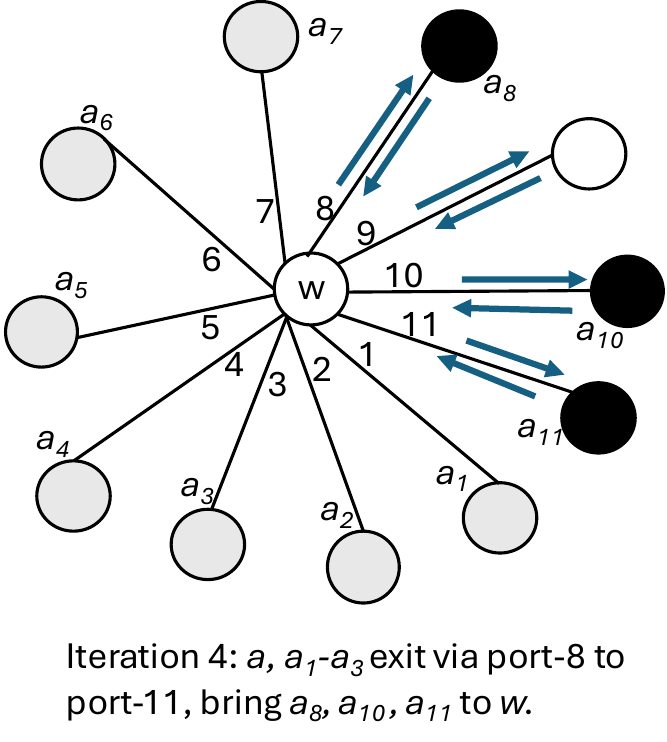}
    \includegraphics[width=0.3\linewidth]{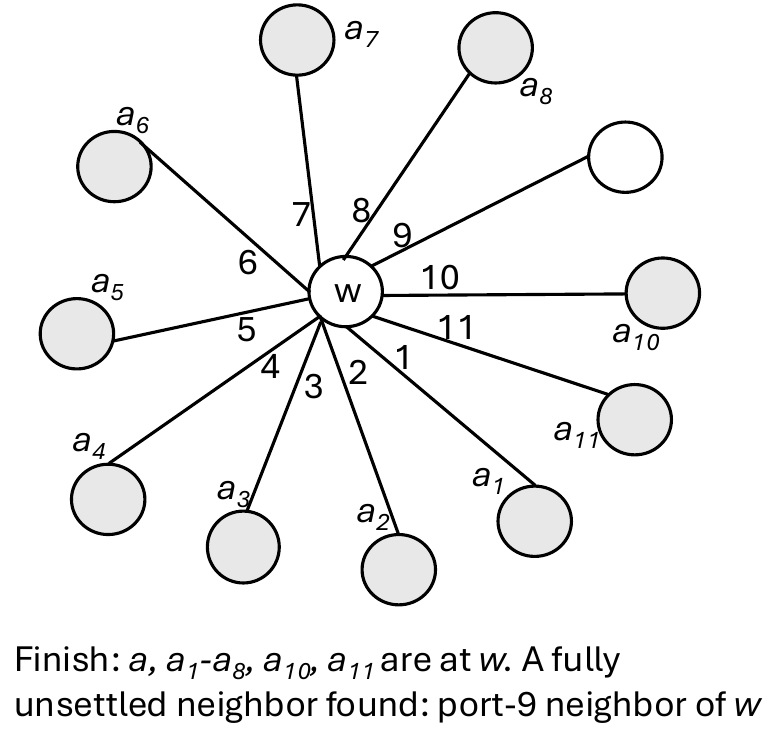}
    \caption{An illustration of how \texttt{Async\_Probe()} (Algorithm \ref{algorithm:asyncprobe}) finds a fully unsettled neighbor at a node $w$ (if exists one). In the figure, port-9 neighbor of $w$ is found to be fully unsettled. If port-9 neighbor was occupied by an agent $a_9$, \texttt{Async\_Probe()} would report non-existence of such node.}
    \label{fig:async-probe}
\end{figure}

\section{\texorpdfstring{\sync}{} Rooted  Dispersion Algorithm}\label{section:RootedSync}
In this section, we present our first algorithm, $\rootsync$, that solves the dispersion of $k\leq n$ agents initially located at a single node $s\in V$ in $O(k)$ rounds with $O(\log(k+\Delta))$ bits per agent in {\sync}. The algorithm is DFS-based which constructs the DFS tree, $T_{DFS}$. 



\begin{algorithm}[t]
\caption{$\rootsync$}\label{alg: sync_rooted}
\SetKwInOut{Input}{Input}
\SetKwInOut{Output}{Output}


$a_{\min}.settled \gets \top$ and $a_{\min}.parent \leftarrow \bot$ at node $s$\; 
$A_{seeker}\leftarrow$ the $\lceil \frac{k}{3}\rceil$ agents in $A$ with largest IDs, except $a_{\max}$\;
$A_{explorer}(s)\leftarrow A(s)\backslash (A_{seeker}\cup\{a_{min}\})$\;
agents in $A_{explorer}(s)\cup A_{seeker}$ exit $s$ via port-1 neighbor\; 





\While{$k$ different nodes visited}
{
    $w \gets$ the node $\alpha(a_{\texttt{max}})$ where $a_{max}$ is positioned\;
    \texttt{Sync\_Probe()} at $w$\;
    
    \If{$\alpha(w).next \neq \bot$}
    {
        \texttt{Forward\_Move()\;}
    }
    \Else
    {   
        \texttt{Backtrack\_Move()\;}
    }
}
$A_{explorer}^{un} \leftarrow$ agents in $A_{explorer}(w)$ not yet settled\;
agents in $A_{explorer}^{un}\cup A_{seeker}$ (and others not yet settled) follow $parent$ pointers at each node until reaching $s$, the root of $T_{DFS}$\;

agents in $A_{explorer}^{un}\cup A_{seeker}$ (and others not yet settled) re-traverse $T_{DFS}$ through using sibling pointers and settle at empty nodes\;
\label{line: linear_DFS}
\end{algorithm}

\begin{algorithm}[t]
\caption{\texttt{Forward\_Move()}}\label{alg: rooted_sync_forward}
\SetKwInOut{Input}{Input}
\SetKwInOut{Output}{Output}

\If{$\alpha(w).firstchild = \perp$}
{
    
    $\alpha(w).firstchild \leftarrow  \alpha(w).next$\; 

    $\alpha(w).latestchild \leftarrow \alpha(w).next$\; 
}
\Else
{
    $a_{\texttt{max}}$ moves to $\alpha(w).latestchild$, sets  $\alpha(w).latestchild.nextsibling\leftarrow \alpha(w).next$, and 
    %
     returns to $w$\; 
}
$A(w)\leftarrow$ agents present at $w$, except the settler $\alpha(w)$, if any\;
all agents in $A_{explorer}(w)\cup A_{seeker}$ exit $w$ through port $N(w, \alpha(w).next)$ to reach node $u$; $u$ must be a fully unsettled node\;

\If{$u$ is at odd depth} 
{
 
    $a_{min}\leftarrow$ agent with smallest ID in $A(u)$\; 
    $x\leftarrow$ the number of times \texttt{Sync\_Probe()} found $\alpha(w).next\neq \bot$ at $w$ so far\; 
    \If{$x \leq 3$}
    {ask $\alpha(w)$ to include $u$ in its oscillation trip\;}
    \Else
    {\If{$x \mod 3=1$} 
      {
      $a_{\min}.settled \gets \top$, $a_{\min}.parent \gets a_{\texttt{max}}.\texttt{pin}$ (which leads to $w$)\;
      }
      \Else 
      {
       \If{$(x-1) \mod 3=1$}
       {
       $\alpha(u')\leftarrow$ the agent settled at node $u'$ reached through $(x-1)$-th  $\alpha(w).next$ port at $w$\;}
       \If{$(x-1) \mod 3=2$}
       {
       $\alpha(u')\leftarrow$ the agent settled at node $u'$ reached through $(x-2)$-th  $\alpha(w).next$ port at $w$\;
       }
    ask (sibling settler) $\alpha(u')$ to include $u$ in its oscillation trip\;}
       
      }
    }

\Else
{
    $a_{min}\leftarrow$ agent with smallest ID in $A(u)$\; 
    $a_{\min}.settled \gets \top$,
    $a_{\min}.parent \gets a_{\texttt{max}}.\texttt{pin}$\;
}
$\alpha(u).\mathtt{treelabel} \leftarrow a_{\max}.\mathtt{treelabel}$\; 
\end{algorithm}

\begin{algorithm}[t]
\caption{\texttt{Backtrack\_Move()}}\label{alg: rooted_sync_backward}
\SetKwInOut{Input}{Input}
\SetKwInOut{Output}{Output}
\SetKwFunction{FProbe}{Probe}  

\SetKwFunction{FProbe}{Probe()}
\SetKwProg{Fn}{Function}{:}{end}

\If{$w$  is at even depth}
{
$p_w\leftarrow$ parent of $w$ which is reached via $a_{\texttt{max}}.\texttt{pin}$\; 
let $w$ be the $x$-th leaf children of $\alpha(p_w)$ in the DFS tree\; 
\If{$x>1$}
{
\If{$x\mod 3 = 0$}
{
$\alpha(u')\leftarrow$ the agent settled at node $u'$ which is the $(x-2)$-th leaf children of $\alpha(p_w)$\; 
}
\If{$x\mod 3 = 2$}
{
$\alpha(u')\leftarrow$ the agent settled at node $u'$ which is the $(x-1)$-th leaf children of $\alpha(p_w)$\; 
}
remove agent $\alpha(w)$ settled at $w$ setting $\alpha(w).settled\leftarrow \bot$\;
ask (sibling settler) $\alpha(u')$ to include node $w$ in its oscillation trip\;
}
}
    agents in $A_{explorer}(w) \cup A_{seeker} \backslash \{\alpha(w)\}$ leave $w$ via port $a_{\max}.\texttt{pin}$ from $w$\; 

\end{algorithm}


In $\rootsync$ (Algorithm~\ref{alg: sync_rooted}), the largest ID agent, denoted as $\amax$, serves as the leader and 
conducts  DFS. 
The non-leader agents move with the leader and one of them settles at a fully unsettled node when they visit it. If $\amax$ encounters a fully unsettled node alone, it settles itself on that node, achieving dispersion. During DFS, $\amax$ currently at node $w$ must determine 
\begin{itemize}
    \item 
[(i)] whether there is a fully unsettled neighbor node of $w$, and 
\item [(ii)] if so, which neighbor of $w$ is fully unsettled.  
\end{itemize}
%
%
%
%
$\amax$ makes this decision through function \texttt{Sync\_Probe()} (Algorithm \ref{alg: sync_super_probe()}).
$a_{\max}$ uses $ \lceil \frac{k}{3}\rceil$ seeker agents to execute \texttt{Sync\_Probe()} in $O(1)$ rounds. 

$\rootsync$ runs DFS either in {\em forward} phase or {\em backtrack} phase. The forward phase is handled by \texttt{Forward\_Move()}~(Algorithm \ref{alg: rooted_sync_forward}) and the backtrack phase is handled by \texttt{Backtrack\_Move()}~(Algorithm \ref{alg: rooted_sync_backward}). 
The forward phase is executed at $w$, if \texttt{Sync\_Probe()}
at $w$ finds at least one neighbor of $w$ fully unsettled, otherwise the backtrack phase.  After DFS completes (i.e., $T_{DFS}$ has $k$ nodes), the seeker agents then go to the root of $T_{DFS}$ and then re-traverse $T_{DFS}$ to settle at its empty nodes. After all this is done, the $k$ nodes of $T_{DFS}$ finally have each node occupied and dispersion is achieved.




We now provide details.
In $\rootsync$, every agent $a$ maintains a variable $a.\settled \in \{\bot, \top\}$. 
Let $\alpha(w)$ be the settler/oscillating agent at $w$.
Agent $\alpha(w)$ maintains the following variables, $\alpha(w).\parent$ (the parent of $w$), $\alpha(w).firstchild$ (the first child of $w$), $\alpha(w).sibling1$, $\alpha(w).sibling2$ (two siblings of a child of $w$), $\alpha(w).latestchild$ (the latest child of $w$ to facilitate sibling traversal), $\alpha(w).depth$ (the depth of $w$ in $T_{DFS}$), and 
$\alpha(w).\nxt$ $\in [1, \delta_w] \cup \{\bot\}$.
In  \texttt{Sync\_Probe()} at node $w$,  if there exists a fully unsettled node  in $N(w)$,
the corresponding port number will be in $\alpha(w).\nxt$.
More precisely, an integer $i$ such that $N(w, i) = u$ and $\alpha(u) = \bot$
is assigned to $\alpha(w).\nxt$.
If all neighbors $N(w)$ are settled, 
$\alpha(w).\nxt$ will be set to $\bot$.

At round 0, all agents are located at node $s$. Let $A_{seeker}\subset A$ be the $\lceil k/3\rceil$ agents at $s$ with largest IDs, except $a_{\max}$. 
The agents in $A_{seeker}$ will run $\texttt{Sync\_Probe()}$ at each node. 
Let $A_{explorer}(w)\in A\backslash A_{seeker}$ denote the agents on any node $w$ except $\alpha(w)$  settled at $w$.
Let $a_{\min}(w)$ be the smallest ID agent at $w$ among the agents in $A_{explorer}(w)$. 
Agent $a_{\min}(s)$ settles at node $s$ and sets $a_{\min}(s).\parent\leftarrow \bot$. 

\vspace{2mm}
\noindent{\bf Forward move.}
The DFS starts at $s$ in the forward phase.
As long as there is at least a fully unsettled node in $N(s)$ (for multiple such nodes, pick one associated with the smallest port), all agents in $A_{explorer}(s)\cup A_{seeker}$  move to one of those nodes together (Line 7 of \texttt{Forward\_Move()}). We call this kind of movement \emph{forward move}.  

%
Suppose a forward move from node $w$ brought  agents $A_{explorer}(u)\cup A_{seeker}$ to $u$ ($w$ is the parent of $u$). The following happens at $u$:
\begin{itemize}
\item [1.] Suppose $u$ is the even depth node in $T_{DFS}$. $a_{\min}(u)\in A_{explorer}(u)$ settles at $u$ and sets $a_{\min}(s).settled \gets \top$ and
    $a_{\min}(s).parent \gets a_{\texttt{max}}.\texttt{pin}$.
\item [2.] Suppose $u$ is the odd depth node in $T_{DFS}$. Whether an agent settles at $u$ or not depends on how many times \texttt{Forward\_Move()} has been successful so far from $w$, $u$'s parent. Let $x$ be that number. If $x\leq 3$, we leave $u$ empty and
ask $\alpha(w)$ to oscillate to $u$ (i.e., $\alpha(w)$ will have at most three children nodes to oscillate to). 
If $x\geq 4$, we settle an agent at $u$ if $(x\mod 3)=1$. If $(x-1) \mod 3\in [1,2]$, we leave $u$ empty. The agent settled at  $(x \mod 3=1)$-th children of $w$ oscillates to $x+1$-th and $x+2$-th children of $w$ (i.e., an agent at a sibling node covers two other empty sibling nodes).     
\end{itemize}

\vspace{1mm}
\noindent{\bf Backtrack move.}
When the current location $w$ has no fully unsettled neighbor, all explorers $A_{explorer}(w)\cup A_{seeker}$ exit $w$ through $\alpha(w).parent$ (Line 11 of \texttt{Backtrack\_Move()}). 
We call this kind of movement \emph{backtrack move}.
Suppose there is an agent $\alpha(w)$ settled at node $w$. 
Before exiting $w$, depending on the situation, $a_{\max}$ needs to decide on whether to leave $\alpha(w)$ settled or make it an explorer (i.e.,  add to $A_{explorer}(w)$).
Let $p_w$ be the parent node of $\alpha(w)$. Let $w$ be the $x$-th leaf children of $\alpha(p_w)$ in $T_{DFS}$ known so far. The leaf children means $\alpha(w)$ must be the agent positioned on the leaf of $T_{DFS}$ (if $\texttt{Sync\_Probe()}$ does not return $\alpha(w).next\neq \bot$ even once at $w$, $\alpha(w)$ must be the leaf) and $\texttt{Backtrack\_Move()}$ is needed. 
If $x=1$, we leave $\alpha(w)$ settled.
If $x>1$, we leave $\alpha(w)$ settled only when $(x\mod 3)=1$ (i.e., $x=4,7,10$, etc.) 
In all other values of $x$, we make $\alpha(w)$ the explorer. The agent settled at  $(x \mod 3=1)$-th children of $w$ oscillates to $(x+1)$-th and $(x+2)$-th children of $w$ (i.e., an agent at a sibling node covers two other empty sibling nodes). This transitioning of a settled agent to an unsettled one is to make sure that we never need agents from $A_{seeker}$ to settle while running DFS. 

\begin{lemma}\label{lemma:seeker}
{\it For any $k\geq 3$, during DFS, only (at most) $\lfloor \frac{2k}{3}\rfloor$ agents settle. 
}
\end{lemma}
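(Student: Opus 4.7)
The plan is to reduce the claim directly to Lemma~\ref{lemma:kover3} via the observation stated just before the lemma: the rules applied incrementally inside \texttt{Forward\_Move()} and \texttt{Backtrack\_Move()} are exactly the centralized rules of \texttt{Empty\_Node\_Selection()} applied to the final tree $T_{DFS}$. Once this correspondence is established, the lemma is immediate, since $|T_{DFS}|=k$ (DFS terminates only after visiting $k$ different nodes) and the settler count is $k$ minus the empty count.

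First I would argue that $T_{DFS}$ eventually contains exactly $k$ nodes. Because $k\le n$ and \texttt{Sync\_Probe()} always discovers a fully unsettled neighbour whenever one exists (Lemma~\ref{lemma:syncprobe}), each forward move adds a fresh node; the main \texttt{while}-loop in $\rootsync$ terminates exactly when $k$ distinct nodes have been visited. Thus $|T_{DFS}|=k$ at termination.

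Next I would match the incremental rules to Cases~A and B of \texttt{Empty\_Node\_Selection()}. By construction in \texttt{Forward\_Move()}, when a forward step reaches a new child $u$ of $w$: if $u$ is at even depth, an agent settles at $u$; if $u$ is at odd depth and is one of the first three children of $w$ at that depth, $u$ is left empty and $\alpha(w)$ oscillates to it; otherwise, among groups of three subsequent children a single settler is placed on the first of each triple and the others are covered by that settler oscillating among siblings. This is precisely Case~B of \texttt{Empty\_Node\_Selection()}: every non-leaf even-depth node $lf_{\le 1}$ with $x>3$ children at odd depth receives an extra settler on $\lceil (x-3)/3\rceil$ of those children. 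Symmetrically, \texttt{Backtrack\_Move()} handles the case where an odd-depth branching node has children that turn out to be leaves of $T_{DFS}$: whenever the backtrack discovers that a leaf-child $w$ is the $x$-th leaf-child of its parent $p_w$ with $x\mod 3\in\{0,2\}$, the previously settled agent at $w$ is demoted and a sibling settler takes over oscillation. This matches Case~A of \texttt{Empty\_Node\_Selection()} verbatim, where $\lfloor 2x/3\rfloor$ settlers are removed from leaf-children groups of size $x$. The main obstacle is precisely this bookkeeping: verifying that the local decisions made while $T_{DFS}$ is still growing do not need revision later. The key invariant is that adjustments are only triggered by children being added at the current branching node, and adjustments at a node $w$ depend only on $w$'s own children, so adjustments at ancestors never need to be redone.

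Given this correspondence, the tree $T_{DFS}$ with its settler/empty labels at termination coincides with the output of \texttt{Empty\_Node\_Selection()} on the same tree. Applying Lemma~\ref{lemma:kover3} (valid since $k\ge 3$ and $|T_{DFS}|=k$), at least $\lceil k/3\rceil$ nodes of $T_{DFS}$ remain empty throughout DFS. Consequently at most $k-\lceil k/3\rceil=\lfloor 2k/3\rfloor$ agents have settled at the end of DFS, and since settlers only settle during DFS (seekers defer their settling to the post-DFS re-traversal), at most $\lfloor 2k/3\rfloor$ agents settle during DFS, as claimed.
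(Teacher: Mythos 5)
Your proposal is correct and follows essentially the same route as the paper: both reduce the claim to Lemma~\ref{lemma:kover3} by arguing that the incremental settling/removal rules of \texttt{Forward\_Move()} and \texttt{Backtrack\_Move()} reproduce exactly the placement produced by \texttt{Empty\_Node\_Selection()} on $T_{DFS}$, so that at most $k-\lceil k/3\rceil=\lfloor 2k/3\rfloor$ nodes carry settlers. The only presentational difference is that the paper applies the correspondence to every intermediate prefix tree of size $k'<k$ (giving the bound $\lfloor 2k'/3\rfloor$ at all times), whereas you apply Lemma~\ref{lemma:kover3} to the final tree and appeal to your locality invariant to cover the intermediate configurations; this is a matter of emphasis, not a gap.
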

\begin{proof}
We prove that $\rootsync$ satisfies Lemma \ref{lemma:kover3}. 
Let $\rootsync$ is currently at node $w$. Let $T_{DFS}$ built so far $T_{DFS}$ has size  $|T_{DFS}|=k'< k$. We show that out of $k'$ nodes in $T_{DFS}$, only at most $\lfloor \frac{2k'}{3}\rfloor$ nodes have a settled agent.  Consider $T_{DFS}$ as the arbitrary tree $T$ used in the proof of Lemma \ref{lemma:kover3}. 
It is sufficient to show that the nodes with a settled agent in $T_{DFS}$ are the nodes where agents are settled in $T$. 
The forward phase of $\rootsync$ settles agents on the nodes of $T_{DFS}$ at even depth $0,2,4,\ldots$. 
Suppose no node in $T_{DFS}$ is a branching node. Then no node at an odd depth of $T_{DFS}$ has a settled agent and $\rootsync$ satisfies Lemma \ref{lemma:kover3}. Suppose there is at least a branching node $w$ in $T_{DFS}$. The branching node $w$ is either at odd depth (empty) or even depth (occupied). 
Let $w$ be at an odd depth (empty). 
Let $C_w=\{c_{w1},c_{w2},\ldots,c_{w\nu}\}$ be the $\nu$ children of $w$. Whenever DFS reaches child $c_{wi}$ from $w$ in the forward phase, it settles an agent at $c_{wi}$. If the forward phase continues from $c_{wi}$, there is at least a child of $c_{wi}$ that will be empty (since it will be at an odd depth). If the forward phase cannot proceed from $c_{wi}$, it must be the case that $c_{wi}$ is a leaf in $T_{DFS}$. Suppose there are $\nu\geq x>1$ such children of $w$ so far which are leaves in $T_{DFS}$. Since $w$ is empty and all $x$ children have $w$ as their parent, $\rootsync$ removes 2 agents out of every 3 children of $w$. The removal of the settled agent at a child of $w$ happens in the backtrack phase following the forward phase to that child. Therefore, it satisfies the case of $w$ being a parent of the leaf in Lemma \ref{lemma:kover3}. 
Now suppose $w$ is at an even depth (occupied). All $\nu$ children of $w$ will be empty. For $\nu> 4$, one agent is placed on $c_{w4},c_{w7}, \ldots$ to cover 2 children in between the settlers; agent at $w$ covers $c_{w1},c_{w2},c_{w3}$. This again satisfies the case of $w$ being a parent at even depth in Lemma \ref{lemma:kover3}. Since Lemma \ref{lemma:kover3} shows no more than $\lfloor \frac{2k'}{3}\rfloor$ nodes in $T$ have a settled agent, the lemma follows.
\end{proof}

\begin{lemma}
\label{lemma:rootal}
{\it $\rootsync$ 
solves dispersion in $O(k)$ rounds using $O(\Delta \log (k+\Delta))$ bits per agent.}
\end{lemma}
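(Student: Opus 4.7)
The plan is to split the verification into (i) correctness of the DFS and oscillation machinery, (ii) the $O(k)$ round bound, and (iii) the per-agent memory count. For correctness, I would first invoke Lemma~\ref{lemma:seeker}: at every point during DFS at most $\lfloor 2k/3\rfloor$ agents are settled, so the pool $A_{seeker}$ always retains at least $\lceil k/3\rceil$ seekers accompanying $a_{\max}$. This is precisely the hypothesis Lemma~\ref{lemma:syncprobe} needs to guarantee that $\texttt{Sync\_Probe()}$ at the current node $w$ correctly returns a fully unsettled neighbor (if one exists) in $O(1)$ rounds. Combined with the 6-round wait built into $\texttt{Sync\_Probe()}$ and the oscillation guarantee from Section~\ref{section:emptynodes} that every empty node of $T_{DFS}$ is covered by an oscillating settler within a 6-round trip, any seeker that fails to meet an agent at a probed neighbor is reporting a node that is truly absent from $T_{DFS}$. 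Hence every $\texttt{Forward\_Move()}$ grows $T_{DFS}$ by one fresh node, $\texttt{Backtrack\_Move()}$ fires only when no such neighbor exists, and the while loop terminates after exactly $k-1$ forward moves with $|T_{DFS}|=k$.

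For the time bound I would account for three phases separately. In the DFS phase, each iteration of the main while loop executes one $\texttt{Sync\_Probe()}$ followed by a forward or backtrack move; by Lemma~\ref{lemma:syncprobe} the probe takes $O(1)$ rounds and the move itself is $O(1)$. A standard DFS charging argument gives at most $k-1$ forward moves and at most $k-1$ backtrack moves, hence $O(k)$ rounds total. The regrouping phase, in which the remaining explorers and all seekers follow the $\parent$ pointers back to $s$, finishes in at most $\mathrm{depth}(T_{DFS})-1 \leq k-1$ rounds. For the final re-traversal, the group descends from $s$ using $firstchild$ and hops between siblings using $nextsibling$ (which was set in $\texttt{Forward\_Move()}$), retreating via $\parent$ only when a subtree is exhausted; each tree edge is traversed a constant number of times, so this phase also runs in $O(k)$ rounds. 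The three phases add to $O(k)$ rounds.

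For the memory bound, each agent persistently stores a constant number of variables: $\settled$, $\parent$, $depth$, $firstchild$, $latestchild$, $nextsibling$, $\nxt$, $\mathtt{treelabel}$, the identifier, and its mode flag. Each is a port number in $[1,\Delta]$ or an integer in $[1,k^{O(1)}]$, so occupies $O(\log(k+\Delta))$ bits; the total per agent is $O(\log(k+\Delta))$, well within the $O(\Delta\log(k+\Delta))$ bound in the statement. The step I expect to be the main obstacle is certifying the re-traversal: I must argue that the $\lceil k/3\rceil$ seekers can deposit themselves onto precisely the empty nodes of $T_{DFS}$, without ever mistaking a node that already hosts a non-oscillating settler, and do so in $O(k)$ rounds using only $O(1)$ pointers per node. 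The plan there is to show that traversing via the $firstchild$--$nextsibling$ chain enumerates children in the same port order in which oscillation assignments were made during DFS, so the empty slots (either odd-depth children of non-branching nodes, or the extra/removed slots introduced at branching nodes in Cases~A and~B of \texttt{Empty\_Node\_Selection()}) are encountered in a known order; whenever the group reaches an empty slot, one seeker detaches and settles, and by Lemma~\ref{lemma:kover3} the count of empty slots equals the count of available seekers so that the final configuration is a dispersion.
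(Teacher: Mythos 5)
Your time analysis and correctness argument match the paper's: the same count of at most $k-1$ forward and $k-1$ backtrack moves, each preceded by an $O(1)$-round \texttt{Sync\_Probe()} (Lemma~\ref{lemma:syncprobe}), followed by an $O(k)$-round return to the root and an $O(k)$-round re-traversal. Where you genuinely diverge is the memory count. The paper's proof of this lemma deliberately uses the naive re-traversal in which the settler at $w$ records \emph{all} ports of $w$ leading to children of $w$ in $T_{DFS}$ --- up to $\Delta$ ports at $O(\log\Delta)$ bits each --- and that is exactly why the lemma is stated with the weaker bound $O(\Delta\log(k+\Delta))$; the reduction to $O(\log(k+\Delta))$ via sibling pointers is done separately afterward (Lemma~\ref{lemma:retraversal}) and only then combined into Theorem~\ref{thm: root_sym}. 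You instead jump straight to the constant-pointer $firstchild$--$nextsibling$ chain, which, if fully certified, proves a strictly stronger statement than the lemma asks for. That is legitimate (your count is indeed ``well within'' the stated bound), but the obstacle you flag at the end is real and is precisely what the paper's batched sibling-pointer scheme is designed to resolve: an empty child of $w$ hosts no agent, so a $nextsibling$ pointer cannot be stored \emph{at} that child; the paper instead stores, at each \emph{settled} node in the chain ($\alpha(w)$, then $c_{w,4}$, $c_{w,7}$, $\dots$), the identities of up to three intervening empty children plus one pointer to the next settled sibling. Without that device, ``$nextsibling$ set in \texttt{Forward\_Move()}'' has nowhere to live for empty nodes. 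Also, a small imprecision: the matching of seekers to empty slots at the end follows from conservation ($k$ agents, $k$ tree nodes, settled agents on distinct nodes), not directly from Lemma~\ref{lemma:kover3}, which only gives the $\lceil k/3\rceil$ lower bound on empty nodes. For the lemma as stated, with its $O(\Delta\log(k+\Delta))$ bound, your proposal is correct; just be aware you are implicitly proving the content of the later re-traversal lemma as well.
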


\begin{proof}
Let the DFS be currently at node $w$. As long as there is (at least) a fully unsettled neighbor node of $w$, $\amax$ makes a forward move to that node. If there is no such neighbor, $\amax$ makes a backward move to the parent node of $w$.  Since $G$ is a connected graph, the DFS visits $k$ nodes with exactly $k-1$ forward moves and at most $k-1$ backward moves.  Thus, the number of calls to \texttt{Sync\_Probe()} is at most $2(k-1)$. From Lemma \ref{lemma:syncprobe}, \texttt{Sync\_Probe()} at a node finishes in $O(1)$ rounds. Additionally, the agents in $A_{explorer}^{un}\cup A_{seeker}$ need $k-1$ rounds to reach the root node of $T_{DFS}$ following parent pointers. 
The agents in $A_{explorer}^{un}\cup A_{seeker}$ reach to empty nodes of $T_{DFS}$ and settle, re-traversing $T_{DFS}$, in $O(k)$ rounds following the information about all children of $w$. Therefore, $\rootsync$ finishes in $O(k)$ rounds.
Regarding memory, an agent at node $w$ handles several $O(\log \Delta)$-bit variables, $firstchild$, $\nxt$, $\checked$, $\parent$, $latestchild$, $sibling1$, $sibling2$, and the information about all the ports of $w$ that lead to children of $w$ in $T_{DFS}$ (at most $\Delta$ such ports with each port needing  $O(\log \Delta)$ bits). 
Other variables can be stored in $O(1)$ bits. 
An agent needs $\lceil \log k\rceil $ bits to store its ID.  
Therefore, the memory becomes $O(\Delta\log (k+\Delta))$ bits.
\end{proof}

\noindent{\bf Memory-efficient DFS tree re-traversal.}
We now discuss how the memory complexity of  
$O(\Delta\log (k+\Delta))$ bits per agent in Lemma \ref{lemma:rootal} can be reduced to $O(\log (k+\Delta))$ bits per agent.  Notice that this is due to the information to keep at the agent at node $w$ in Line 14 of Algorithm \ref{alg: sync_rooted} about all the ports of $w$ that lead to children of $w$ in $T_{DFS}$ (at most $\Delta$ such ports with each port needing  $O(\log \Delta)$ bits). This information is used by 
agents in $A_{explorer}^{un}\cup A_{seeker}$ to settle at empty nodes of $T_{DFS}$. 
We describe a technique of {\em sibling pointer} that allows to re-traverse $T_{DFS}$ in $O(\log \max\{k,\Delta\})$ bits, reducing the memory in Lemma \ref{lemma:rootal} to $O(\log \max\{k,\Delta\})$ bits.
Let $C_w=\{c_{w1},c_{w2},\ldots,c_{w\nu}\}$ be the $\nu$ children of $w$ in $T_{DFS}$. 
Consider $i$-th child $c_{w,i}$ of $w$.
\begin{itemize}
\item {\bf Suppose $c_{w,i}$ is at odd depth:} For $i\leq 3$, $\alpha(w)$ stores the information about $c_{w,1},c_{w,2},c_{w,3}$, so that re-traversal visits them in order, and additionally it stores the first sibling pointer to $c_{w,4}$. Notice that $c_{w,4}$ has a settler. We store at $c_{w,4}$ the information about $c_{w,5}$ and $c_{w,6}$, and the next sibling  pointer to $c_{w,7}$.
Therefore, after DFS finishes re-traversing $c_{w,1}$ to $c_{w,3}$, it visits $c_{w,4}$ to collect information about $c_{w,5}$ and $c_{w,6}$ to visit and the next sibling pointer  information to $c_{w,7}$. Since $c_{w,i},i=4,7,\ldots$ have an agent positioned, $\alpha(w)$ can have information about (at most) four children at a time to re-traverse $T_{DFS}$ without needing to store information about all $\nu$ children.  

\item {\bf Suppose $c_{w,i}$ is at even depth:}
We only consider the subset of the children in $C_{wl}\subseteq C_w = \{c_{wl,1},c_{wl,2},\ldots,c_{wl,\nu'}\}$ which are the leaves in $T_{DFS}$.  
$\alpha(w)$ stores the information about $c_{wl,1},c_{wl,2},c_{wl,3}$, so that re-traversal visits them in order, and additionally it stores the first sibling pointer to $c_{wl,4}$. Notice that $c_{wl,4}$ has a settler. We store at $c_{wl,4}$ the information about $c_{wl,5}$ and $c_{wl,6}$, and the next sibling  pointer to $c_{wl,7}$. Therefore, the children of $\alpha(w)$ can be re-traversed keeping information about (at most) four children at a time.
\end{itemize}

\begin{lemma}\label{lemma:retraversal}
{\it After DFS  completes,  re-traversal of $T_{DFS}$ by agents in  $A_{explorer}^{un}\cup A_{seeker}$ starting from root in $O(k)$ time can be done with only $O(\log(k+\Delta))$ bits per agent.}  
\end{lemma}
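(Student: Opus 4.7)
My plan is as follows. By Lemma~\ref{lemma:rootal}, the re-traversal already runs in $O(k)$ rounds and the only obstruction to an $O(\log(k+\Delta))$ memory bound per agent is the explicit list, kept at $\alpha(w)$, of ports toward all children of $w$ in $T_{DFS}$ that is used to drive the post-DFS re-traversal in Algorithm~\ref{alg: sync_rooted}. I would therefore argue two things: (i) using the sibling-pointer scheme just described, that list can be replaced by a constant number of $O(\log(k+\Delta))$-bit fields per settler; and (ii) the resulting re-traversal still covers every node of $T_{DFS}$ in $O(k)$ rounds and settles every remaining agent in $A_{explorer}^{un}\cup A_{seeker}$.

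For step (i), I would combine \texttt{Empty\_Node\_Selection()} with Lemma~\ref{lemma:oscillating} to show that, in the relevant child-index ordering (all children when $w$ is at odd depth, or only the leaf-children when $w$ is at even depth), a settler is guaranteed to occupy every position of the form $3i+1$. This gives a well-defined chain of \emph{anchors} whose inter-anchor gaps contain at most two empty nodes each. Each anchor then only has to remember at most two follow-on empty child ports and one \emph{nextsibling} pointer, while $\alpha(w)$ itself stores the first three child ports together with a \emph{firstsibling} pointer. Together with the pre-existing $parent$, $firstchild$, $latestchild$, $depth$, and ID fields, the persistent state of every settler fits in $O(\log(k+\Delta))$ bits; the moving agents carry only $O(1)$ pointers during the walk.

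For step (ii), I would traverse $T_{DFS}$ as an in-order DFS driven by these pointers: at $w$ the moving group first consumes the local triple of child ports (recursing into each subtree in order), then follows the \emph{firstsibling}/\emph{nextsibling} chain from anchor to anchor, recursing into each anchor's subtree before continuing along the chain. Since every node of $T_{DFS}$ is entered $O(1)$ times and $|T_{DFS}|=k$, this finishes in $O(k)$ rounds, and each moving agent settles at the first empty node it reaches.

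The step I expect to be the main obstacle is showing that the sibling chain can be built \emph{on the fly} while $T_{DFS}$ is still growing, rather than reconstructed in a separate post-DFS phase (which would reintroduce the $\Theta(\Delta)$ blow-up). I would resolve this by appealing to Line~5 of \texttt{Forward\_Move()}: whenever a new child of $w$ is produced, $a_{\texttt{max}}$ already hops to $\alpha(w).latestchild$ and writes the new child's port into the \emph{nextsibling} field there, so the anchor chain is maintained incrementally in $O(1)$ time and with only $O(\log(k+\Delta))$ extra bits per update, closing the argument.
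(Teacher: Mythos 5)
Your proposal is correct and follows essentially the same route as the paper: the paper's own (much terser) proof also rests on the sibling-pointer technique with anchors at every third child position, each storing a constant number of child/sibling ports at $O(\log\Delta)$ bits, maintained incrementally via the $latestchild$ hop in \texttt{Forward\_Move()}. Your write-up simply makes explicit the anchor-chain invariant and the $O(1)$-visits-per-node traversal argument that the paper leaves implicit.
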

\begin{proof}
    We have that Algorithm \ref{alg: sync_rooted}, except Line~\ref{line: linear_DFS}, needs only $O(\log(k+\Delta))$ bits and finishes in $O(k)$ rounds. For Line~\ref{line: linear_DFS}, the sibling pointer idea asks to store information about 4 of the children of a node (at any depth of $T_{DFS}$) to re-traverse $T_{DFS}$ by agents in $A_{explorer}^{un}\cup A_{seeker}$. This storage adds  $O(\log \Delta)$ bits/agent.   Therefore, the total memory complexity becomes $O(\log (k+\Delta))$ bits.
\end{proof}


We have the following main theorem combining Lemmas \ref{lemma:rootal} and \ref{lemma:retraversal}.

\begin{theorem}\label{thm: root_sym}
{\it     Algorithm $\rootsync$ solves dispersion within
$O(k)$ rounds using $O(\log(k + \Delta))$ bits per agent in {\sync}.}
\end{theorem}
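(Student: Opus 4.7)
The plan is to assemble Theorem~\ref{thm: root_sym} directly from the two preceding lemmas, treating the theorem as the ``integration step'' that patches the memory bottleneck of Lemma~\ref{lemma:rootal} using the sibling pointer construction analyzed in Lemma~\ref{lemma:retraversal}. Lemma~\ref{lemma:rootal} already establishes that $\rootsync$ (Algorithm~\ref{alg: sync_rooted}) terminates in $O(k)$ rounds, and traces through every part of the algorithm except the final re-traversal line to show that all other per-agent storage is $O(\log(k+\Delta))$ bits. The only obstruction to the claimed bound is the $O(\Delta \log(k+\Delta))$ term coming from the explicit child-port list kept at a settler to support Line~\ref{line: linear_DFS} of $\rootsync$. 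So the theorem follows once we verify that the sibling pointer scheme is a drop-in replacement that preserves both time and correctness.

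First I would argue the time bound. By Lemma~\ref{lemma:rootal}, the DFS phase of $\rootsync$ executes at most $2(k-1)$ forward/backtrack calls, each of which invokes \texttt{Sync\_Probe()} once; by Lemma~\ref{lemma:syncprobe} each such call finishes in $O(1)$ rounds, and the seeker oscillation trips run in parallel. The regrouping step where $A_{explorer}^{un}\cup A_{seeker}$ climbs to the root of $T_{DFS}$ following $\parent$ pointers costs $O(k)$ rounds since $T_{DFS}$ has depth $O(k)$. For the re-traversal of $T_{DFS}$ down from the root, Lemma~\ref{lemma:retraversal} gives an $O(k)$-round bound using the sibling pointers, since each tree edge is crossed a constant number of times in the re-traversal, and reading a settler's three-child-plus-sibling-pointer record at each visited node is $O(1)$ work. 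Summing, the total runtime remains $O(k)$ rounds.

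Next I would argue the memory bound. Each agent stores (i) its identifier in $O(\log k)$ bits, (ii) the constant-size status fields ($\settled$, $\mode$, \emph{seeker}/\emph{explorer} flags, oscillation counters), and (iii) the structural fields $\parent$, $firstchild$, $latestchild$, $sibling1$, $sibling2$, $\nxt$, $\checked$, $\mathtt{treelabel}$, each a single port number or an integer in $[1,k]$, hence $O(\log(k+\Delta))$ bits. Crucially, instead of storing an $O(\Delta)$-sized list of children ports at each settler $\alpha(w)$, the sibling pointer scheme of Lemma~\ref{lemma:retraversal} keeps at $\alpha(w)$ only the ports to at most three children together with one sibling pointer to the next ``anchor'' child, and the remaining children are indexed by analogous constant-size records at their anchor siblings. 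This is exactly $O(1)$ port numbers per settler, so the per-agent cost is $O(\log(k+\Delta))$ bits.

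The main obstacle to address cleanly is verifying that the sibling pointer records are actually \emph{writable} at the right moments during \texttt{Forward\_Move()} and \texttt{Backtrack\_Move()} without exceeding $O(\log(k+\Delta))$ bits transiently. I would handle this by noting that whenever $a_{\max}$ discovers the $i$-th child of $w$ (via \texttt{Sync\_Probe()} returning $\alpha(w).next$), it visits the current $latestchild$ and writes the new sibling link there in $O(1)$ rounds, as already done in Algorithm~\ref{alg: rooted_sync_forward}; no settler ever needs to hold more than four child ports simultaneously because the oscillation grouping of Lemma~\ref{lemma:oscillating} enforces a group size of at most three. With both bounds established, $\rootsync$ achieves dispersion in $O(k)$ rounds with $O(\log(k+\Delta))$ bits per agent, proving Theorem~\ref{thm: root_sym}. \qed
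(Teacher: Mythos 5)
Your proposal is correct and follows exactly the paper's route: the paper proves Theorem~\ref{thm: root_sym} simply by combining Lemma~\ref{lemma:rootal} (which gives the $O(k)$ time bound and all memory costs except the $O(\Delta\log(k+\Delta))$ child-list term) with Lemma~\ref{lemma:retraversal} (which replaces that term by the sibling-pointer scheme at $O(\log(k+\Delta))$ bits while keeping $O(k)$ time). Your additional elaboration of the time accounting and the writability of the sibling records is consistent with what those lemmas and Algorithm~\ref{alg: rooted_sync_forward} already establish.
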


\section{\texorpdfstring{\async}{} Rooted Dispersion Algorithm}
\label{section:RootedAsync}

In this section, we present our second algorithm, $\rootasync$, that solves the dispersion of $k\leq n$ agents initially located at a single node $s\in V$ in $O(k\log k)$ epochs with $O(\log (k+\Delta))$ bits per agent in {\async}. This algorithm is also DFS-based and extends the technique of \cite{sudo24} [DISC'24] developed for solving dispersion in {\sync}. We use our algorithm \texttt{Async\_Probe()} (Algorithm \ref{algorithm:asyncprobe}) so that DFS at a node $w$ finds a fully unsettled neighbor node of $w$ in $O(\log k)$ epochs in {\async}. Sudo {\it et al.} \cite{sudo24} [DISC'24] provided the $O(\log k)$-round solution just for {\sync}.

The pseudocode for $\rootasync$ is shown in Algorithm \ref{alg: async_rooted_memory}. $\rootasync$ calls  \texttt{Async\_Probe()} (Algorithm \ref{algorithm:asyncprobe}). \texttt{Async\_Probe()} finds a fully unsettled empty neighbor at any node $w$ (if exists) and then  calls  \texttt{Guest\_See\_Off()} (Algorithm \ref{algorithm:guestseeoff}) to send the settled agents used in probing back to their home nodes.  Every agent $a$ maintains a variable $a.\settled \in \{\bot, \top\}$, which decides whether $a$ is an explorer or a settler. In addition, the settler $\alpha(w)$ at node $w$ maintains two variables, $\alpha(w).\parent, \alpha(w).\nxt \in [1, \delta_w ] \cup {\bot}$. 

Again, $a_{\max}$ is the leader that runs DFS; non-leaders follow $a_{\max}$ until they settle. The following is guaranteed each time $\amax$ invokes \texttt{Async\_Probe()} at node $w$:
\begin{itemize}
\item If there exists a fully unsettled neighbor node in $N(w)$, \texttt{Async\_Probe()} finds it and stores the corresponding port number leading to that neighbor in $\alpha(w).\nxt$. 
\item If all neighbors have a settler (no neighbor fully unsettled), $\alpha(w).\nxt=\bot$ is true.
\item \texttt{Async\_Probe()} will finish in $O(\log k)$ epochs.
\item \texttt{Guest\_See\_Off()} will settle the collected helper agents back to their homes.
\item \texttt{Guest\_See\_Off()} will finish in $O(\log k)$ epochs.
\end{itemize}

\noindent{\bf Forward and backtrack moves.}
In the beginning of $\rootasync$, 
all agents in $\A$ are located at node $s$. The agent with the smallest ID $a_{\min}(s)$ settles at node $s$ and sets $a_{\min}(s).\parent\leftarrow \bot$. We denote $a_{\min}(s)$ as $\alpha(s)$, the agent settled at node $s$.   Then, as long as there is (at least) a fully unsettled node in $N(s)$, all explorers move to that node together (which is a forward move). After each forward move from a node $w$ to $u$, the agent with the smallest ID among $A(u)$ settles on $u$, and $\alpha(u).\parent$ is set to the port of $u$ leading to $w$.
When the current node $u$  has no fully unsettled node in $N(u)$, all explorers in $A(u)$ move to the parent of $u$ (which is a  backtrack move).  Finally, $\amax$ terminates when it settles (since $\amax$ settles last) and dispersion is achieved. Since the number of agents is $k\leq n$, the DFS stops after $\amax$ makes a forward move $k-1$ times. The agent $\amax$ makes a backward move at most once from every visited node, i.e., total at most $k-1$ times. Therefore, excluding  \texttt{Async\_Probe()}, the execution of $\rootasync$ completes in $O(k)$ epochs. Notice that  \texttt{Async\_Probe()} is invoked at most $2(k-1)$ times, once after each forward move and once after each backward move. 
Since a single invocation of \texttt{Async\_Probe()} requires $O(\log k)$ epochs, the time complexity of $\rootasync$ becomes $O(k \log k)$ epochs.

Regarding memory, an agent handles several $O(\log \Delta)$-bit variables, $\nxt$, $\checked$, $\parent$, as well as the port number that the settler $\alpha(u)$ needs to remember in order to return to node $u$ from node $w$ during \texttt{Guest\_See\_Off()}. Every other variable can be stored in a constant space. 
An agent needs $\lceil \log k\rceil $ bits to store its ID.  
Therefore, the memory complexity is $O(\log (k+\Delta))$ bits.

\begin{theorem}
\label{theorem:rootal}
{\it Algorithm $\rootasync$  solves dispersion within $O(k\log k)$ epochs using $O(\log (k+\Delta))$ bits per agent.}
\end{theorem}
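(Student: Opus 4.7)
The plan is to verify three things: (1) correctness, namely that $\rootasync$ terminates with every agent settled on a distinct node; (2) time complexity of $O(k\log k)$ epochs; and (3) memory complexity of $O(\log(k+\Delta))$ bits per agent. Since \texttt{Async\_Probe()} and \texttt{Guest\_See\_Off()} have already been analyzed in Lemmas~\ref{lemma:asyncprobe} and \ref{lemma:guestseeoff}, the proof becomes a bookkeeping argument over the DFS moves executed by $a_{\max}$, with the only delicate point being why asynchrony does not cause a neighbor to be misclassified as fully unsettled.

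For correctness, I would proceed by induction on the number of forward moves made by $a_{\max}$. The base case sets $\alpha(s)=a_{\min}(s)$, with $\alpha(s).\parent=\bot$, and leaves $k-1$ explorers at $s$. For the inductive step, suppose $a_{\max}$ is at node $w$ with $\alpha(w)$ already settled and one or more explorers present. By Lemma~\ref{lemma:asyncprobe}, a single invocation of \texttt{Async\_Probe()} at $w$ correctly detects whether a fully unsettled neighbor of $w$ exists and, if so, returns its port in $\alpha(w).\nxt$. The key subtlety is asynchrony: a neighbor of $w$ that is truly the home of a settler $\alpha(u)$ could be momentarily empty if $\alpha(u)$ is still en route back from a previous probe. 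This cannot occur because at the end of \texttt{Async\_Probe()} at the previous node $w'$, Lemma~\ref{lemma:guestseeoff} guarantees that every helper agent has returned to its home before $a_{\max}$ and the remaining explorers leave $w'$. Thus, when \texttt{Async\_Probe()} runs at $w$, an empty neighbor encountered by a probing agent is indeed fully unsettled. A forward move therefore leads to a node that has never been visited, where the smallest-ID explorer settles with its $\parent$ pointer set correctly; a backtrack move leads to $\alpha(w).\parent$, preserving the DFS invariant. Since the graph is connected and there are $k\le n$ agents, DFS visits exactly $k$ distinct nodes, and the last unsettled agent (namely $a_{\max}$) settles on the final one, yielding dispersion.

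For the time bound, the DFS performs at most $k-1$ forward moves (each settles a new agent) and at most $k-1$ backward moves (at most one per visited node). Each move is a single edge traversal, which costs $O(1)$ epochs. Between moves, $a_{\max}$ calls \texttt{Async\_Probe()} exactly once, and \texttt{Async\_Probe()} internally invokes \texttt{Guest\_See\_Off()} at most once. By Lemmas~\ref{lemma:asyncprobe} and~\ref{lemma:guestseeoff}, each of these costs $O(\log k)$ epochs. Summing over at most $2(k-1)$ probe/see-off pairs gives total time $O(k\log k)$ epochs.

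For the memory bound, I would enumerate the persistent variables each agent needs across CCM cycles. A settler $\alpha(w)$ at node $w$ stores $\settled$ (constant), $\parent\in[1,\delta_w]$, $\nxt\in[1,\delta_w]\cup\{\bot\}$, and $\checked\in[0,\delta_w]$, each using $O(\log\Delta)$ bits. A helper brought to $w$ by \texttt{Async\_Probe()} additionally stores the port through which it first entered $w$ so that \texttt{Guest\_See\_Off()} can pair it with another agent and route it home — again $O(\log\Delta)$ bits. The leader $a_{\max}$ and all explorers only need to remember their own ID and an $O(1)$-bit role flag, using $O(\log k)$ bits. Bookkeeping inside \texttt{Guest\_See\_Off()} (pair counters, $|A_{guest}(w)|$) is bounded by $O(\log k)$ bits. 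Summing, each agent uses $O(\log(k+\Delta))$ bits of persistent memory, while temporary working memory for the compute phase is free by model assumption. Combining the three bounds yields the theorem. The main obstacle I would expect is the asynchrony subtlety handled in paragraph two; once Lemma~\ref{lemma:guestseeoff} is invoked to eliminate it, the remaining analysis is a straightforward counting argument.
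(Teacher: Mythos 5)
Your proposal is correct and follows essentially the same route as the paper: counting at most $k-1$ forward and $k-1$ backtrack moves, charging $O(\log k)$ epochs per invocation of \texttt{Async\_Probe()}/\texttt{Guest\_See\_Off()} via Lemmas~\ref{lemma:asyncprobe} and~\ref{lemma:guestseeoff}, and enumerating the $O(\log\Delta)$- and $O(\log k)$-bit variables for the memory bound. Your explicit identification of the asynchrony subtlety (a settler in transit making its home look unsettled) and its resolution through \texttt{Guest\_See\_Off()} is exactly the point the paper relies on as well.
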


\begin{algorithm}[t]
\caption{$\rootasync$}\label{alg: async_rooted_memory}
\SetKwInOut{Input}{Input}
\SetKwInOut{Output}{Output}
$a_{\min}.settled \gets \top$, 
$a_{\min}.parent \gets \bot$\;
$\alpha(s).\mathtt{treelabel} \leftarrow 0$\;
\While{$a_{\texttt{max}}.settled = \bot$}
{
    
    $w \gets$ the node $\alpha(a_{\max})$ where $a_{\max}$ is positioned\;
    \texttt{Async\_Probe()} at $w$\;
    \eIf{$\alpha(w).next \neq \bot$}
    {
        $u \gets N(w, \alpha(w).next)$\; 
        agents in $A(w)$ go to $u$\;
        $a_{min}\leftarrow$ the smallest ID agent in $A(u)$\;
        $a_{\min}.settled \gets \top$,
        $a_{\min}.parent \gets a_{\max}.pin$\;
        $\alpha(u).\mathtt{treelabel} \leftarrow a_{\max}.treelabel$\;
    }
    {
        agents in $A(w)$ exit $w$ via $\alpha(w).parent$\;
    }
}
\end{algorithm}

\section{General Dispersion Algorithms}
\label{section:general_dispersion}
In this section, we discuss the dispersion of agents from general initial configurations, i.e., $k\leq n$ agents are initially on multiple nodes. 
Suppose agents are initially on $\ell$ nodes. The $\ell$ DFSs start in parallel from $\ell$ nodes.
We extend the idea of merging from Kshemkalyani and Sharma~\cite{KshemkalyaniS21-OPODIS} [OPODIS'21] to deal with merging the DFSs if one DFS meets another. We refer in this paper to this merging algorithm as the KS algorithm.

In the KS algorithm, multiple DFSs grow concurrently. The size of a DFS $i$ at any point in time is the number of settled agents $d_i$ in it. When one DFS meets another, the smaller sized DFS collapses and gets subsumed in the larger sized DFS. Subsumption and collapse essentially mean that the agents settled from the subsumed DFS are collected and given to the subsuming DFS to extend its traversal. This is done via a common module to re-traverse an already identified DFS component with nodes having the same $treelabel$. Such re-traversal occurs in procedures {\em Exploration}, {\em Collapse\_Into\_Child}, and {\em Collapse\_Into\_Parent}, and can be executed in $4d_i$ steps. After such a subsumption, the subsuming DFS continues growing from where it last left off, this time with added agents from the subsumed DFS(s). Operation in these two phases -- growing and subsuming -- continues until all agents are settled and there are no more meetings between DFSs. We term the DFS that subsumes other met DFS(s) at each meeting as the winner DFS. Thus in the KS algorithm, the winner DFS alternates between the two phases -- growing, and subsuming. The sum of all subsuming times is $O(k)$ time\footnote{In~\cite{KshemkalyaniS21-OPODIS}, the subsuming time assumed a loose bound of $O(\min\{m,k\Delta\}$) because that did not affect the overall time complexity of the algorithm due to the larger time of $O(\min\{m,k\Delta\})$ of the growing phase. However, the sum of all subsuming times is $O(k)$ as each re-traversal of DFS component $i$ can be done in $O(d_i)$ steps.}, whereas the sum of all growing times is $O(\min\{m,k\Delta\})$. 

\subsection{\texorpdfstring{\sync}{} General  Dispersion Algorithm}
\label{sec:gensync}
Starting from an arbitrary initial configuration, multiple DFSs grow concurrently as per Algorithm $\rootsync$ (Section~\ref{section:RootedSync}). Essentially the KS algorithm is run but in the growing phase Algorithm $\rootsync$ is followed. When there is a meeting between two DFS trees, the subsuming phase of the KS algorithm is run. The re-traversal of DFS $i$ in the subsuming phase has to be adapted to account for the oscillating agents -- specifically, when visiting a node in the re-traversal, the agent waits there for up to 6 rounds so that if the settled agent is an oscillating agent, the oscillating agent visits its home within that period -- and this introduces an $O(1)$ factor in time complexity, thereby the overall re-traversal still takes $O(d_i)$ time. 
The sum of all the subsuming phases is thus $O(k)$ and, from the analysis of Algorithm $\rootsync$, the sum of all the growing phases is $O(k)$. Thus, the general initial configuration can be solved in $O(k)$ rounds in {\sync}. This leads to the following result.

\begin{theorem}
\label{theorem:gensync}
{\it Starting from general initial configurations,  dispersion can be solved in $O(k)$ rounds using $O(\log (k+\Delta))$ bits per agent in {\sync}.
}
\end{theorem}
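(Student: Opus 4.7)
The plan is to reduce the general case to the rooted case by plugging $\rootsync$ into the size-based subsumption framework of Kshemkalyani and Sharma (the KS algorithm). At round $0$, each of the $\ell$ occupied nodes independently launches an instance of $\rootsync$ with its own local agent count $k_i$, its own leader (the largest-ID agent in that component), its own $treelabel$, and its own seeker pool of $\lceil k_i/3\rceil$ agents; an agent encountering a node or co-located agent carrying a different $treelabel$ declares a DFS meeting. Absent meetings, the analysis of Algorithm $\rootsync$ applies component-wise, so the $i$-th DFS settles its $k_i$ agents in $O(k_i)$ rounds, and since $\sum_i k_i = k$ the total work contributed by uninterrupted growth is $O(k)$.

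Next I would handle meetings via KS subsumption. When DFS $i$ meets DFS $j$ at a node, the smaller (say $j$) is absorbed into the larger $i$: the KS re-traversal module (Exploration / Collapse\_Into\_Child / Collapse\_Into\_Parent) sweeps the subsumed tree in $O(d_j)$ rounds, harvests every settled agent of DFS $j$, and hands them to DFS $i$, which then resumes $\rootsync$ with an enlarged agent pool. The one technical wrinkle is that some settlers in the tree of the subsumed DFS are \emph{oscillating} settlers and may temporarily be away from their home. I would patch the KS re-traversal so that the visiting agent at each tree node waits up to $6$ rounds, which by the oscillation-trip bound of Section~\ref{section:emptynodes} is the worst-case length of one round-robin trip; this guarantees meeting every settler exactly once and adds only a constant multiplicative factor, so the re-traversal still costs $O(d_j)$ rounds.

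To bound the total time, I would invoke the standard KS charging argument: the sum over all subsumption events of $O(d_j)$ is $O(k)$ because each settled agent is absorbed into a strictly larger component only $O(\log k)$ times and the sizes telescope into $O(k)$; combined with the $O(k)$ aggregate growth cost inherited from Theorem~\ref{thm: root_sym}, the overall running time is $O(k)$ rounds. Memory remains $O(\log(k+\Delta))$ bits per agent, since $\rootsync$ already fits this budget (Lemma~\ref{lemma:retraversal}), the KS subsumption bookkeeping (child/parent/sibling/tree-label pointers, collapse-direction flag) fits in $O(\log(k+\Delta))$ bits, and the $6$-round wait needs only a constant-size counter.

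The main obstacle I expect is not the time bound but correctness of the merged growing phase. Concretely, immediately after a subsumption the winner DFS must re-enter $\rootsync$ on an agent pool of new size $k'$ while continuing to satisfy the empty-node invariant of Lemma~\ref{lemma:kover3} — namely that $\geq \lceil k'/3\rceil$ seekers are available for \texttt{Sync\_Probe()} at every subsequent step, so that each call still finishes in $O(1)$ rounds by Lemma~\ref{lemma:syncprobe}. My plan here is to reclassify the absorbed agents at the winner's current frontier in the proportion $1{:}2$ (seeker to explorer) without disturbing the already-built portion of the winner's $T_{DFS}$, and to re-verify the oscillating-settler assignment at the branching node created by the attachment using the Case-A/Case-B adjustments of \texttt{Empty\_Node\_Selection()}; these local adjustments take $O(1)$ rounds and restore the invariant, so the $O(1)$-round \texttt{Sync\_Probe()} guarantee continues to hold throughout the remaining execution and the stated bounds follow.
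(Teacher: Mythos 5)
Your proposal follows essentially the same route as the paper: run the KS size-based subsumption framework with $\rootsync$ as the growing phase, adapt the $O(d_i)$ re-traversal of a subsumed component by having the visiting agents wait up to $6$ rounds at each node so that oscillating settlers are caught on their round-robin trips, and charge the growing phases to $O(k)$ via Theorem~\ref{thm: root_sym} and the subsuming phases to $O(k)$ in aggregate. Your final paragraph on re-establishing the $\lceil k'/3\rceil$-seeker invariant and the empty-node/oscillation assignment after a merge addresses a detail the paper's own proof leaves implicit, and is a welcome addition rather than a divergence.
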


\subsection{\texorpdfstring{\async}{} General  Dispersion Algorithm}
\label{sec:genasync}
The KS algorithm is run, but in the growing phases, we use the Algorithm $\rootasync$ (Section~\ref{section:RootedAsync}).
As in the {\sync} general algorithm, the growing phases 
(of Algorithm $\rootasync$) and the subsuming phases in the KS algorithm need to be modified to deal with an empty node which is the home node of a settled agent that is currently helping with probing. In particular, how to detect that DFS $i$ meets DFS $j$ (and deal with it) when the ``meeting node'' does not have an agent from DFS $j$ because that agent is helping with probing. The modifications required are detailed below.

In Lines 10-11 of procedure {\tt Async\_Probe()}, if $a_i$ finds that the settler at $u_i$, $\alpha(u_i)$ belongs to a different DFS (determined by having a different $treelabel$ $t'$ than that of $a_i$ which is $t$), then the DFS $t$ meets DFS $t'$; in this case the procedures {\em Exploration}, {\em Collapse\_Into\_Child} and {\em Collapse\_Into\_Parent} for DFS $t$ are invoked as in the KS algorithm after executing {\tt Guest\_See\_Off()} for all the agents that were collected during {\tt Async\_Probe()}. (If there are multiple such trees $t'$, the $treelabel\; t'$ settler at the $u_i$ reachable by the lowest numbered port is considered.) 
With this change, the executions in the growing phases occur in $O(k \log k)$ epochs, and as the executions in the subsuming phases in the KS algorithm also occur in $O(k)$ epochs, the overall run-time is $O(k\log k)$.

Algorithm~\ref{alg: async_rooted_memory} follows the same steps in conjunction with the KS algorithm for the general asynchronous case but with additional modifications. 
If an agent $a$ with $treelabel$ $t$ is helping (in parallel) {\tt Async\_Probe()} procedure to search for the next DFS node of DFS $t$, then the agent $a$ leaves its home node $u$ to do probing. During that time, $u$ might be acquired by an agent $a'$  with $treelabel$ $t'$ in the absence of $a$. We call this scenario {\em squatting} due to agent $a'$.  
If DFS $t'$ was terminated with the settling of $a'$, then when $a$ returns to $u$, DFS $t'$ is said to meet DFS $t$ and the KS algorithm executes the subsuming steps after the meeting as usual. If DFS $t'$ was not over with the settling of $a'$, then DFS $treelabel$ $t'$ unaware of the squatting act $a'$, does the DFS tree traversal as per protocol. In doing so, observe that it will follow the path of DFS $t$ either downstream away from the root, or upstream towards the root (and then possibly downstream) because agents perform DFS based on port numbering.
This is because all DFSs ($t$ and $t'$) perform their DFSs based on port numbering beginning with port 1,2, $\ldots$, $\delta_i$. 
This approach requires the following assumption for the system model as follows.  For any edge $(u,v)$, the two ports cannot be labelled $(1,1), (1,2),(2,1)$, or $(2,2)$, subject to the following exceptions. (i) Port number 1 is permitted if that is the only port at a node. (ii) Port number 2 is permitted if there are only two ports at a node. This assumption is required to ensure that when the path of DFS $t'$ meets the path of DFS $t$ (at a node which is currently vacant due to the settled node of $t$ helping in {\tt Async\_Probe()}), DFS $t'$ proceeds either in the upstream or downstream direction of DFS $t$, 
irrespective of the port numbers from which the two DFSs entered that node.

There might be the case that agents of DFS $t$ on that path followed by the agents of $t'$ are also helping in the probing and are not present at their home nodes.  Eventually, the head of $t'$ either (i) finds an agent of $t''$ (may or may not equal $t$) or (ii) all the agents get settled along the path. For case (ii), the last agent of $t'$ to settle (having no child) creates a meeting with $t$ when the corresponding agent from $t$ returns, and the KS algorithm executes the subsuming steps. For case (i), a meeting occurs with $t''$, and the subsequent steps of KS subsumption are executed. (This can happen recursively where $t''$ is distinct from $t$ but the recursion terminates when $t''$ equals $t$ or case (ii) holds.) If in the KS subsumption steps, a node is reached where no agent 
is present when it should be whereas an agent from another DFS is settled there (that other agent must be squatting), the agents wait there; the missing agent is helping in {\tt Async\_Probe()} and is guaranteed to return there in $O(\log k)$ epochs.

Note that when agent $a$ of DFS $t$ returns to its home $u$ where $a'$ of DFS $t'$ is squatting, both can continue to co-exist there for a while (until one gets subsumed by or subsumes another DFS due to a meeting at possibly another node; 
see cases (i) and (ii) in the above paragraph.). The node $u$ has an overlay of multiple DFSs and acts as multiple independent virtual nodes where the different DFSs' agents have settled. A third (or more) agent $a''$ from DFS $t''$ can also be at $u$ and when it arrives, $a''$ is said to meet $a'$ or $a$, it does not matter which. The KS algorithm steps for subsumption are executed at the meeting. 

Two further modifications are required to the KS algorithm. 
\begin{itemize}
    \item [1.] When a DFS $t'$ is collapsing, if a node has agents from $t'$ and other DFSs (such as $t$), only the tree $t'$ collapses, and only agents of $t'$ are collected/participate in the collapse. 
    \item [2.] In the KS algorithm, $d_i$ denotes the number of settled agents in DFS $i$, and $parent(i)$ denotes the DFS that DFS $i$ has met. $head(i)$ denotes the last node in DFS $i$ at which agents of DFS $i$ are currently present. In {\em Parent\_Is\_Collapsing} (Algorithm 3 of KS, lines 28-29), we require the change given below following the boldface ``{\bf after}''.\\ 
    ``If $(d_i > d_{parent(i)})$ and $head(i)$ junction is not locked and remains unlocked until $parent(i)$'s collapse reaches $head(i)$, then unsettled robots get absorbed in $parent(i)$ during its collapse {\bf after} the agent $a''$ of DFS $i$ at $head(i)$ informs its parent node $w$, i.e., $parent(a'')$ in DFS $i$, to create a meeting with $t$, so that (if and when if not already) the corresponding agent from $t$ returns to $w$ after helping with parallel probing, the KS algorithm executes the subsequent subsuming steps.'' 
\end{itemize}

The sum of all the subsuming phases of KS is $O(k)$ and, from the analysis of Algorithm $\rootasync$, the sum of all the growing phases is $O(k\log k)$. All the above changes do not add asymptotically at this time. 
The waiting for the agent of DFS $t$ to return to its home node after helping with parallel probing increases the subsuming time of $O(k)$ by a factor of at most $O(\log k)$ as the wait time of 
$O(\log k)$ can be incurred $O(k)$ times --- as $O(k)$ number of times {\tt Async\_Probe()} is invoked when nodes may be at the head of some DFS in the growing phase. 
Thus, the general initial configuration can be solved in $O(k\log k)$ epochs in {\async}. This gives the following result.

\begin{theorem}
\label{theorem:genasync}
{\it Starting from general initial configurations,  dispersion can be solved in  $O(k\log k)$ epochs using $O(\log(k+\Delta))$ bits per agent in {\async}.
}
\end{theorem}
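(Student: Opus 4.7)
The plan is to combine Algorithm $\rootasync$ (Section~\ref{section:RootedAsync}) with the size-based subsumption technique of Kshemkalyani--Sharma (KS), using each as a black box for its corresponding phase: $\rootasync$ drives the \emph{growing} phases of every DFS that is spawned from one of the $\ell$ initial nodes, and KS drives the \emph{subsuming} phase whenever two DFSs meet. Each DFS $i$ carries a unique $treelabel$ (derivable from the identifier of the leader agent that runs it), so that when probing via $\mathtt{Async\_Probe()}$ discovers a settler at a neighbor with a different $treelabel$, we can declare a meeting and trigger KS's \emph{Exploration}, \emph{Collapse\_Into\_Child}, and \emph{Collapse\_Into\_Parent} procedures for the smaller DFS; after collapse, the freed agents are handed over to the larger DFS which resumes $\rootasync$ from where it paused.

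The first step of the proof is to argue correctness of meeting detection. The delicate point is that between the beginning of $\mathtt{Async\_Probe()}$ at some node $w$ and the corresponding $\mathtt{Guest\_See\_Off()}$, the home nodes of the temporarily collected helper agents are momentarily empty, so asynchrony can permit another DFS to \emph{squat} at such a home. I would first verify, using Lemmas~\ref{lemma:asyncprobe} and \ref{lemma:guestseeoff}, that every helper returns to its home in $O(\log k)$ epochs, and then show that the two cases of squatting can both be reduced to a standard KS meeting: either (i) the squatter's DFS terminates at that node and the meeting is realized when the original owner comes back, or (ii) the squatter continues, and (thanks to the port-labeling assumption on $(1,1),(1,2),(2,1),(2,2)$) its subsequent DFS path is forced to flow either up or down along the existing tree of the owner DFS, so a meeting must be realized within $O(d_i)$ epochs at either a live head of that DFS or at a returning helper. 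I would also need to make the two itemized tweaks stated in Section~\ref{sec:genasync} to KS: restricting collapse to agents of the collapsing $treelabel$ only, and the patched rule in \emph{Parent\_Is\_Collapsing} that makes $head(i)$ request a meeting with $parent(i)$ before absorption.

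The second step is the time analysis. Exactly as in KS, at every meeting either a DFS is subsumed (so the number of surviving DFSs strictly drops) or $head(i)$ becomes locked; a standard amortization gives $\sum_i d_i = O(k)$ total subsuming cost in the synchronous world. I would re-examine each re-traversal step in KS and note that, in {\async}, waiting for an absent helper to come home can inflate each visit by the $O(\log k)$ bound from Lemma~\ref{lemma:guestseeoff}. Since the number of such waits is at most proportional to the number of probing invocations, which is $O(k)$, the total subsuming time becomes $O(k\log k)$ epochs. The total growing time across all DFSs is also $O(k\log k)$: each DFS $i$ running $\rootasync$ on $d_i$ eventual nodes contributes $O(d_i \log k)$ (Theorem~\ref{theorem:rootal}), and $\sum_i d_i = O(k)$ since every agent settles at most once across its lifetime of being part of any tree. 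Summing the two gives the claimed $O(k\log k)$ epochs.

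The last step is the memory bound. Every new field that the general-case overlay introduces ($treelabel$, $parent(i)$, $d_i$, $head(i)$, the locking bit, and the memory of the port used on the original arrival to a squatted home) fits in $O(\log(k+\Delta))$ bits, matching the rooted bound from Theorem~\ref{theorem:rootal}, so the aggregate per-agent memory remains $O(\log(k+\Delta))$. The hardest part of the proof, by far, is the correctness argument under squatting: one must show that however asynchrony interleaves $\mathtt{Async\_Probe()}$ invocations of different DFSs at overlapping nodes, every eventual meeting is detected, no agent is counted twice or lost during a collapse, and no execution reaches a configuration in which two agents with the same $treelabel$ permanently settle at the same node. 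I expect this invariant-maintenance argument, rather than the time or memory accounting, to be the technical crux.
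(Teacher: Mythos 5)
Your proposal follows essentially the same route as the paper: running the KS size-based subsumption algorithm with $\rootasync$ driving the growing phases, detecting meetings via $treelabel$ mismatches during \texttt{Async\_Probe()}, resolving the squatting problem through the port-labeling restriction so that a squatter's DFS is funneled up or down the owner's tree until a meeting is realized, applying the same two modifications to KS, and charging the $O(\log k)$ waits for absent helpers against the $O(k)$ probing invocations to get $O(k\log k)$ epochs with $O(\log(k+\Delta))$ memory. No substantive differences from the paper's argument.
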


\section{Concluding Remarks}\label{sec: conclusion}
In this paper, we have developed two novel techniques, one for {\sync} and another for {\async}. For {\sync}, we have introduced a technique of leaving some DFS tree nodes empty during the DFS traversal which are covered through oscillation. This technique allowed to solve dispersion with optimal time complexity $O(k)$ in {\sync} with memory only $O(\log (k+\Delta))$ bits per agent, improving the best previously known time bound by a factor of $O(\log^2k)$.  For {\async}, we developed a technique to extend the previous idea in {\sync}, solving dispersion in $O(k\log k)$ time complexity  with memory only $O(\log (k+\Delta))$ bits per agent. 
This is almost a quadratic improvement compared to the best previously known time bound of $O(\min\{m,k\Delta\})$ in {\async} with $O(\log (k+\Delta))$ bits per agent. Closing the $O(\log k)$ factor gap in time complexity in {\async} remains an intriguing open question for future work. 
The starting point toward addressing this open question to see whether our {\sync} technique can be extended to {\async} with no overhead on time complexity.

\bibliographystyle{plain}
\bibliography{reference}

\end{document}